\newif\ifmainfile
\newif\ifcompile
\newif\ifBio
\definecolor{red1}{HTML}{911B17} 
\definecolor{red2}{HTML}{A9221E} 
\definecolor{red3}{HTML}{DD322C} 
\definecolor{red4}{HTML}{FCE9E8} 
\definecolor{blue1}{HTML}{004494} 
\definecolor{blue2}{HTML}{005DC7} 
\definecolor{blue3}{HTML}{2180ED} 
\definecolor{blue4}{HTML}{5A9CE8} 
\definecolor{blue5}{HTML}{D1E4FA} 
\definecolor{blue6}{HTML}{E8F2FC} 
\definecolor{green1}{HTML}{275d4c} 
\definecolor{green2}{HTML}{388069} 
\definecolor{green3}{HTML}{48A386} 
\definecolor{green4}{HTML}{B7E6D7} 
\definecolor{green5}{HTML}{CAECE2} 
\definecolor{yellow1}{HTML}{CFA621} 
\definecolor{yellow2}{HTML}{DEBB44} 
\definecolor{yellow3}{HTML}{E5CA72} 
\definecolor{yellow4}{HTML}{F3E7BF} 
\definecolor{yellow5}{HTML}{FBF5E6} 
\definecolor{orange1}{HTML}{CC7224} 
\definecolor{orange2}{HTML}{DC8D47} 
\definecolor{orange3}{HTML}{E2A774} 
\definecolor{orange4}{HTML}{F2D7C0} 
\definecolor{orange5}{HTML}{FAEFE6} 
\definecolor{gray1}{HTML}{0D0D0D} 
\definecolor{gray2}{HTML}{1A1A1A} 
\definecolor{gray3}{HTML}{333333} 
\definecolor{gray4}{HTML}{595959} 
\definecolor{gray5}{HTML}{666666} 
\definecolor{gray6}{HTML}{B3B3B3} 
\definecolor{gray7}{HTML}{D9D9D9} 
\definecolor{gray8}{HTML}{F2F2F2} 
\definecolor{gray9}{HTML}{FFFFFF} 
\colorlet{arrowColor}{gray6!50!gray5}
\colorlet{arrowColorText}{gray6!50!gray5}
\colorlet{deployColor}{blue1}
\colorlet{sensitivityColor}{orange1}
\colorlet{sysIdColor}{orange1}
\colorlet{updateColor}{orange1}
\colorlet{newArrowColor}{red1}
\colorlet{path_border_color}{gray3}
\colorlet{path_fill_color}{gray7!50!gray8}
\colorlet{landmark_color}{gray1}
\colorlet{best_color_table}{yellow2}
\colorlet{draw_color_table}{gray1}
\colorlet{trained_color_table}{blue1}
\colorlet{table_dim_gray}{gray1}
\colorlet{table_dark_gray}{gray6}
\pgfplotsset{
    axis line style = {color=gray1,semithick},
    tick style = {color=gray1,semithick},
    tick label style = {color=gray1},
    xlabel style = {color=gray1},
    ylabel style = {color=gray1},
    legend style = {draw=gray1,style={text=gray1},semithick},
    grid style = {gray7}
}
\pgfplotsset{solid/.style={
    ultra thick
}}
\pgfplotsset{dashed/.style={
    very thick, dash pattern=on 5pt off 2pt
}}
\pgfplotsset{dashdot/.style={
    very thick, dash pattern=on 5.25pt off 1.75pt on 1.75pt off 1.75pt
}}
\pgfplotsset{mark_square/.style={
    mark=square*, mark size=1pt, mark options={solid}
}}
\pgfplotsset{energy_informed/.style={dashdot,blue1}}
\pgfplotsset{nominal/.style={solid,gray6}}
\pgfplotsset{best/.style={dashdot,orange1}}
\pgfplotsset{trained/.style={solid,blue1}}
\pgfplotsset{untrained/.style={solid,gray6}}
\pgfplotsset{untrained_x/.style={dashdot, blue1, mark_square, mark repeat=10}}
\pgfplotsset{best_x/.style={dashed, blue1!45, forget plot}}
\pgfplotsset{trained_x/.style={solid, blue1!60, forget plot}}
\pgfplotsset{untrained_y/.style={dashdot, red1, mark_square, mark repeat=10}}
\pgfplotsset{best_y/.style={dashed, red1!45, forget plot}}
\pgfplotsset{trained_y/.style={solid, red1!60, forget plot}}
\pgfplotsset{untrained_z/.style={dashdot, orange1, mark_square, mark repeat=10}}
\pgfplotsset{best_z/.style={dashed, orange1!45, forget plot}}
\pgfplotsset{trained_z/.style={solid, orange1!60, forget plot}}
\pgfplotsset{iteration_best/.style={dashdot,orange1}}
\pgfplotsset{iteration_exact/.style={solid,gray6}}
\pgfplotsset{iteration_inexact/.style={dashed,blue1}}
\pgfplotsset{estimation_error/.style={solid,red1}}
\pgfplotsset{untrained_path_color/.style={untrained,gray5!50!gray6}}
\pgfplotsset{trained_path_color/.style={trained}}
\pgfplotsset{best_path_color/.style={best}}
\begin{document}


\title{Policy Optimization with Differentiable MPC: \\ Convergence Analysis under Uncertainty}

\author{
    Riccardo Zuliani\textsuperscript{1}, %
    Efe C. Balta\textsuperscript{1,2}, %
    and John Lygeros\textsuperscript{1} %
    \thanks{
        This work was supported as a part of NCCR Automation, 
        a National Centre of Competence in Research, 
        funded by the Swiss National Science Foundation (grant number 51NF40\_225155). 
        \textsuperscript{1}Automatic Control Laboratory (IfA), 
        ETH Z\"urich, 8092 Z\"urich, Switzerland 
        \texttt{\small$\{$rzuliani,lygeros$\}$@ethz.ch}. 
        \textsuperscript{2}Control and Automation Group, inspire AG, 
        8005 Z\"urich, Switzerland. 
        \texttt{\small efe.balta@inspire.ch}.
    }   
}

\maketitle

\begin{abstract}
Model-based policy optimization is a well-established framework for designing reliable and high-performance %
controllers across a wide range of control applications. 
Recently, this approach has been extended to model predictive control policies, 
where explicit dynamical models are embedded within the control law. 
However, the performance of the resulting controllers, 
and the convergence of the associated optimization algorithms, critically depends on the accuracy of the models.
In this paper, we demonstrate that combining gradient-based policy optimization with recursive system %
identification ensures convergence to an optimal controller and showcase our finding in several examples.
\end{abstract}

\begin{IEEEkeywords}
Policy Optimization, Differentiable Optimization, Model Predictive Control, System Identification.
\end{IEEEkeywords}


\section{Introduction}
\label{section:introduction}
\IEEEPARstart{M}{odel} predictive control (MPC) is a well-established method that uses a model of the system dynamics to compute control actions in real time. At each time step, the current system state is measured, an optimal control problem is solved over a finite prediction horizon, and the first optimal input is applied to the system. The procedure is then repeated at the next time step, thereby enabling feedback.

Designing the cost function and the constraints in the MPC has been thoroughly studied in the literature. 
While the MPC cost can in principle match (or closely approximate) the prescribed higher-level objective (for example, in an economic MPC scheme), it is often chosen to be a quadratic function of the decision variables to enhance numerical robustness and make the optimization problem easier to solve.
The constraints are generally enforced to guarantee system safety both within the MPC horizon and beyond. 
For systems with linear dynamics and quadratic cost functions, it is conventional to introduce a terminal cost obtained by solving the discrete-time Riccati equation to approximate the infinite horizon cost. 
This technique was first introduced in \cite{sznaier1987suboptimal}, and later extended to nonlinear dynamics in \cite{chen1998quasi} by linearizing around the target equilibrium point. 
While these techniques ensure favorable properties in closed-loop, such as stability and recursive feasibility \cite{scokaert2002constrained}, they generally introduce suboptimality.
Several techniques have been proposed to choose the cost function of the MPC in different settings, such as tracking periodic references or utilizing artificial setpoints. We refer the reader to \cite{kohler2024analysis} for an overview.

An alternative and increasingly popular approach involves formulating the design of the MPC as a closed-loop optimization problem. This corresponds to a policy optimization problem, where the policy is encoded implicitly through the MPC controller.
Several works propose Bayesian Optimization (BO) to solve this policy optimization problem \cite{edwards2021automatic,sorourifar2021data,puigjaner2025performance}. Others leverage differentiable optimization \cite{amos2017optnet,amos2018differentiable}, which enables the computation of gradients of the solution map to an optimization problem with respect to its parameters. In this work, we adopt the latter approach.
Our work builds on the recently proposed BP-MPC framework \cite{zuliani2023bp,zuliani2024closed} that enables optimization of MPC policies with convergence guarantees.

The idea of automatically tuning the hyper-parameters of MPC with differentiable optimization dates back to the seminal OptNet paper of Amos \& Kolter \cite{amos2017optnet}, which showed how a quadratic program can act as a differentiable layer inside a neural network. Shortly after, the same authors generalized the approach to the receding-horizon setting in \cite{amos2018differentiable}. A complementary line of work appeared in \cite{agrawal2020learning}, where the authors formalized the concept of implicit differentiation of convex cone programs. More recent developments include \cite{oshin2023differentiable}, which brings end-to-end gradient information to tube-based robust controllers; \cite{drgovna2022differentiable}, where a physics-informed neural state-space model and its MPC policy are trained jointly; \cite{tao2024difftune}, using the same approach as \cite{amos2018differentiable}; and the very recent \cite{frey2025differentiable}, which shows how to compute sensitivities of nonlinear optimization problems solved using sequential quadratic programming.

Most of these papers formulate the problem as either supervised learning or reinforcement learning, and typically lack the recursive feasibility and Lyapunov stability guarantees. This limitation is addressed by Gros and Zanon, beginning with the data-driven economic MPC framework \cite{gros2019data}, where the authors combine policy gradient methods with nonlinear MPC, using the value function of the MPC as an approximator of the optimal cost-to-go in a reinforcement learning problem. Subsequent works combined this framework with tube-based MPC to ensure safety \cite{gros2020safe,gros2022learning} and stability \cite{zanon2019practical}. These works, however, do not address the question of convergence.

It is worth noting that differentiating the KKT conditions via the Implicit Function Theorem (IFT) is much older than the deep-learning literature, dating back to the 1960s \cite{Fiacco1990nonlinear,jittorntrum1978sequential}. Within the optimal-control community, systematic treatments of NLP sensitivities can be found in \cite{pirnay2012optimal}, \cite{andersson2018sensitivity}, and the open-source interior-point implementation in \cite{frey2025differentiable}.

Despite this growing body of work, the literature still lacks a rigorous treatment of policy optimization with MPC when the system dynamics are only partially known. To the best of the author's knowledge, no convergence results exist for this setting.
In this work, we extend the convergence guarantees previously established in \cite{zuliani2023bp} to the case where the system dynamics are uncertain and affected by stochastic noise. Our contributions are twofold:
i) we provide convergence guarantees under asymptotically exact system identification, and
ii) we establish convergence to a suboptimal solution when the model is not learned perfectly.
Providing convergence guarantees in this context remains an open problem largely unaddressed in prior literature, with the sole exception of our previous work \cite{zuliani2023bp}, which considered convex MPC architectures under full model knowledge. Here, we explicitly account for stochastic noise and model uncertainty.
Finally, this work does not focus on safety aspects, which were recently addressed in \cite{zuliani2024closed}.

The remainder of this paper is organized as follows. In \cref{section:prelim}, we briefly review the notions of path-differentiability and definability. The problem formulation is presented in \cref{section:pform}, followed by a description of the proposed method in \cref{section:pref} and an analysis of its convergence properties in \cref{section:convergence}. In \cref{section:equivalence} and \cref{section:imperfect_learning}, we introduce two variants of our approach, the first using the certainty-equivalence principle and the second addressing the case of imperfect model knowledge, respectively. Finally, \cref{section:simulation_results} reports our simulation results.

\subsubsection*{Notation}
\label{section:notation}
We use $\Z$ to denote the set of integers, and set $\Z_{[a,b]}= \Z \cap \{ x: a \leq x \leq b \}$. We denote the sets of real and natural numbers with $\R$ and $\N$, respectively. The standard Euclidean norm is denoted with $\|\cdot\|$, and given a symmetric positive definite matrix $A$ we define $\|x\|_A = \sqrt{x^\top A x}$. We denote by $\mathbb{E}_w[\cdot]$ the expectation with respect to the random variable $w$. The standard Euclidean distance between a vector $x\in \R^n$ and a set $\mathcal{X} \subset \R^n$ is denoted with $\operatorname*{dist}(x,\mathcal{X})$. We denote with $(a_n)_{n\in\N}$ the sequence $a_0,a_1,\dots$.

\section{Preliminaries}
\label{section:prelim}
The concept of path differentiability \cite{bolte2021conservative} extends the notion of differentiability to almost everywhere differentiable locally Lipschitz functions. Given a locally Lipschitz function $f:\R^n\to\R^m$, and a compact-valued outer semicontinuous set-valued function $\mathcal{J}_f:\R^n \rightrightarrows \R^{m \times n}$, we say that $f$ admits $\mathcal{J}_f$ as a conservative Jacobian if, for all absolutely continuous curves $\theta:[0,1]\to\R^n$ and almost every $t\in[0,1]$, one has
\begin{align}
\frac{d}{dt}f(\theta(t)) = V\dot{\theta}(t), ~~ \forall V\in \mathcal{J}_f(\theta(t)). \label{eq:path_diff_definition}
\end{align}
In this case, we say that $f$ is path differentiable. 
Given a function $f:\R^n \times \R^p \to \R^m$, we define $\J_{f,x}(x,y):=\{ J_x\in\R^{m \times n} : \exists J_y\in\R^{m \times p}, [J_x~J_y]\in\J_f(x,y) \}$, and similarly for $\J_{f,y}$. 
Conservative Jacobians are almost everywhere equal to standard Jacobians, and they obey several useful properties, like the chain rule of differentiation and a nonsmooth implicit function theorem. 
We refer the reader to \cite{bolte2021conservative} for an overview.

The class of path differentiable functions is quite broad and it comprises all functions that are definable in an o-minimal structure \cite[Proposition 2]{bolte2021conservative}. An \emph{o-minimal structure} expanding the real field $\R$ is a collection of sets $\mathcal{S}=(\mathcal{S}^n)_{n\in\N}$, with each $\mathcal{S}^n \subset \R^n$ such that
\begin{enumerate}
    \item all algebraic subsets of $\R^n$ are contained in $\mathcal{S}^n$;
    \item $\mathcal{S}^n$ is a Boolean subalgebra of $\R^n$;
    \item if $A\in \mathcal{S}^m$ and $B\in \mathcal{S}^m$, then $A \times B \in \mathcal{S}^{n+m}$;
    \item the projection onto the first $n$ coordinates of any $A\in \mathcal{S}^{n+1}$ belongs to $\mathcal{S}^n$;
    \item the elements of $\mathcal{S}^1$ are precisely the finite unions of points and intervals.
\end{enumerate}
The elements of $\mathcal{S}^n$ are called \emph{definable subsets} of $\R^n$, and a function is called \emph{definable} (in an o-minimal structure) if its graph is a definable set. 
The majority of functions encountered in controls and optimization are definable. 
Definability is preserved by addition, multiplication, differentiation, integration, and composition.

\section{Problem formulation}
\label{section:pform}
We consider a discrete-time system controlled by an MPC
\begin{align}
\begin{split}
x_{t+1}&=f(x_t,u_t,\theta) + w_t,\\
y_t&=\operatorname{MPC}(x_t,y_{t-1},p),\\
u_t&=\pi(x_t,y_t,p),
\end{split}\label{eq:system}
\end{align}
for $t\in\Z_{[0,T-1]}$. In \cref{eq:system}, $p\in \mathcal{P}$ is a tunable design parameter chosen from a design set $\mathcal{P}\subset \R^{n_p}$, 
$y_t \in \R^{n_y}$ is the optimizer of the MPC problem at time $t$, 
and $x_t\in\R^{n_x}$ and $u_t\in\R^{n_u}$ are the system state and input, respectively, with a given $x_0$. 
The control law $\pi:\R^{n_x} \times \R^{n_y} \times \mathcal{P}\to \R^{n_u}$ generates the input based on the current state and MPC output. 
The term $w_t\in\R^{n_x}$ is an unknown random disturbance (for detailed assumptions see \cref{section:learning}). 
The system dynamics $f$ depend on an unknown but constant parameter $\theta\in\R^{n_\theta}$, and are modeled as
\begin{align}
f(x,u,\theta) = \phi(x,u)^{\top}\theta + \varphi(x,u), \label{eq:true_dynamics}
\end{align}
where $\phi:\R^{n_x} \times \R^{n_u} \to \R^{n_x \times n_\theta}$ and $\varphi:\R^{n_x}\times \R^{n_u}\to \R^{n_x}$ are known, possibly nonlinear maps. 
The state and input must fulfill the following constraints throughout the horizon
\begin{align}
\begin{split}
u_t & \in \mathcal{U} := \{ u \in \R^{n_u} : H_u u \leq h_u \}, \\
x_t & \in \mathcal{X} := \{ x \in \R^{n_x} : H_x x \leq h_x \}.
\end{split} \label{eq:constraints}
\end{align}
where $\mathcal{U} \subseteq \R^{n_u}$ and $\mathcal{X} \subseteq \R^{n_x}$ are known constraint sets. 
Our goal is to identify a parameter $p$ that minimizes an upper-level cost function, while learning the unknown parameter $\theta$ using online system identification techniques.

Following our previous work \cite{zuliani2023bp}, we focus on linear MPC architectures formulated as quadratic programs of the form
\begin{align}
\operatorname*{minimize}_{x_{\cdot|t},u_{\cdot|t},\epsilon_{\cdot|t}} & ~~  P_\epsilon(\epsilon_{\cdot|t},p) + \ell_N(x_{N|t},p) + \sum_{j=0}^{N-1} \ell_j(x_{j|t},u_{j|t},p) \notag \\
\text{subject to} & ~~ x_{j+1|t}=A_j(y_{t-1},p) x_{j|t} + B_j(y_{t-1},p) u_{j|t} \notag \\ & ~~ + c_j(y_{t-1},p),~ \forall j\in\Z_{[0,N-1]}, \notag \\ 
& ~~ H_x x_{j|t} \leq h_x + \epsilon_{j|t} ,~\epsilon_{j|t} \geq 0,~ \forall j\in\Z_{[0,N]}, \notag \\
& ~~ H_u u_{j|t} \leq h_u,~ \forall j\in\Z_{[0,N-1]}, \notag \\
& ~~ x_{0|t} = x_t, \label{eq:MPC}
\end{align}
where $x_{j|t}$, $u_{j|t}$, and $\epsilon_{j|t}$ denote, respectively, the $j$-step-ahead predictions of the state, input, and slack variables computed at time $t$, $y_{t-1}=(x_{\cdot|t-1},u_{\cdot|t-1},\epsilon_{\cdot|t-1})$, and $\ell_j(\cdot,p)$ are strongly convex quadratic functions for every $p\in\mathcal{P}$ and $j$. 
The problem is guaranteed to be feasible for any parameter configuration thanks to the slack variable $\epsilon$ relaxing the constraints. 
To discourage constraint violations, the cost function includes a penalty term
\begin{align}
P_\epsilon(\epsilon,p) = c_1(p) \mathds{1}^\top \epsilon + c_2(p) \epsilon^\top \epsilon, \label{eq:problem_formulation:mpc_slack_penalty}
\end{align}
where $c_1(p),c_2(p)>0$ may optionally depend on $p$ \cite{kerrigan2000soft}. 
\rz{The method presented here can be adapted to the case where $\mathcal{X}$ and $\mathcal{U}$ are nonconvex by replacing the inequality constraints in \cref{eq:MPC} with $H_x(y_{t-1},p) x_{j|t} \leq h_x(y_{t-1},p)$ and $H_u(y_{t-1},p) u_{j|t} \leq h_u(y_{t-1},p)$, 
where $H_x(\cdot)$, $H_u(\cdot)$, $h_x(\cdot)$, and $h_u(\cdot)$ are trainable functions. 
By allowing the constraints to depend on $p$, the MPC could, through policy optimization, learn to satisfy the nonconvex constraints while maintaining a convex formulation. 
For simplicity, we assume polytopic constraints and leave the general case for future work.}

The system dynamics in \cref{eq:MPC} are enforced through the affine equality constraint
\begin{align}
x_{j+1|t}=A_j(y_{t-1},p) x_{j|t} + B_j(y_{t-1},p) u_{j|t} + c_j(y_{t-1},p), \label{eq:MPC_prediction_model}
\end{align}
where $A_j$, $B_j$, and $c_j$ are obtained for all $j\in\Z_{[0,N-1]}$ by linearizing $f$ along the trajectory defined by the previous MPC solution $y_{t-1}$ using an estimate $\tilde{\theta}$ of $\theta$
\begin{align}
A_j(y_{t-1},p) & = \nabla_x f(x_{j+1|t-1},u_{j+1|t-1},\tilde{\theta}), \notag \\
B_j(y_{t-1},p) & = \nabla_u f(x_{j+1|t-1},u_{j+1|t-1},\tilde{\theta}), \label{eq:linearization_along_trajectory} \\
c_j(y_{t-1},p) & = f(x_{j+1|t-1},u_{j+1|t-1},\tilde{\theta}) \notag \\ & - A_j(y_{t-1},p)x_{j+1|t-1} - B_j(y_{t-1},p)u_{j+1|t-1}. \notag
\end{align}
Optionally, $x_{1|t-1}$ in \cref{eq:linearization_along_trajectory} can be replaced with the current state $x_{t}$.
We assume, for now, that the design parameter $p=(\tilde{p},\tilde{\theta})$ includes both the nominal model $\tilde{\theta}$ and a secondary component $\tilde{p}$ which may, for instance, affect the cost function of the MPC in \cref{eq:MPC}. 
The idea of treating the nominal model as a decision variable has been used, for example, in \cite{gros2019data} and subsequent works by the same authors. Allowing the nominal model to be optimized in this way offers additional flexibility and can enhance the overall performance of the scheme. In \cref{section:equivalence} we focus on an alternative scheme where $\tilde{\theta}$ is set equal to the nominal model at each iteration, aligning the prediction model with the best current estimate of the system dynamics, as we believe this approach is more practical for real-world operation.

While the model in \cref{eq:MPC_prediction_model} captures a general-purpose linearization along a prior trajectory, it can be simplified to reduce computational complexity. For example, one could linearize the nominal dynamics around a fixed state-input pair or use a time-invariant model linearized at the origin. In this paper, we focus on the general case, and refer the reader to \cite[Section VI-A]{zuliani2023bp} for a detailed comparison of these alternatives.

\begin{figure}
\centering
\ifcompile
    \begin{tikzpicture}[thick,every node/.style={font=\footnotesize}]

\node[rectangle,very thick,draw=deployColor, minimum width=2.5cm, minimum height=2.25cm,fill=deployColor!15] (deploy) at (0,-0.2) {};
\node (deploy_text) at (0,0.6) {\textcolor{deployColor}{\textbf{DEPLOY}}};

\node[rectangle,deployColor,draw,minimum height=0.5cm,minimum width=0.8cm] (f) at (0,0) {\footnotesize$f(\cdot)$};
\node[rectangle,deployColor,draw,below of=f,node distance=0.8cm,minimum height=0.5cm,minimum width=0.8cm] (mpc) {\footnotesize MPC};
\draw [-stealth,deployColor] (f.east) to [bend left=50] (mpc.east);
\draw [-stealth,deployColor] (mpc.west) to [bend left=50] (f.west);

\node[rectangle,very thick,draw=updateColor, minimum width=2.5cm, minimum height=0.7cm, fill=updateColor!15,below of =deploy, node distance = 2.25cm] (update) {\textcolor{updateColor}{\textbf{UPDATE}}};

\node[rectangle,very thick,draw=sysIdColor, minimum width=1.7cm, minimum height=0.7cm, fill=sysIdColor!15,right of = deploy, node distance = 3.5cm] (sysid) {\textcolor{sysIdColor}{\textbf{SYS-ID}}};

\node[draw, rectangle,very thick,  draw=sensitivityColor, fill=sensitivityColor!15, below of = sysid, node distance =1.5cm, xshift=-0.45cm, align=center] (gd) {\textcolor{sensitivityColor}{\textbf{Sensitivity}} \\ \textcolor{sensitivityColor}{\textbf{Computation}}};

\draw [-{Stealth[scale=0.6]},very thick,arrowColor] (deploy.east) -- node [pos=0.5, above] {\textcolor{arrowColorText}{$(x^k,u^k)$}} (sysid.west);
\draw [-{Stealth[scale=0.6]},very thick,arrowColor] ([xshift=0.65cm]deploy.east) -- ([xshift=0.65cm,yshift=-.7cm]deploy.east) -- ([xshift=1.3cm,yshift=-.7cm]deploy.east) -- ([xshift=1.3cm,yshift=-1.075cm]deploy.east);
\draw [-{Stealth[scale=0.6]},very thick,arrowColor] (sysid.south) -- node [pos=0.5, right] {\textcolor{arrowColorText}{$(\Theta^k,\theta^k)$}} ([xshift=0.45cm]gd.north);
\draw [-{Stealth[scale=0.6]},very thick,arrowColor] (update.west) -- ([xshift=-0.65cm]update.west) -- node [pos=0.5,left] {\textcolor{arrowColorText}{$p^k=(\tilde{p}^k,\tilde{\theta}^k)$}} ([xshift=-0.65cm]deploy.west) -- (deploy.west);
\draw [-{Stealth[scale=0.6]},very thick, arrowColor] (gd.south) -- ([xshift=1.775cm]update.east) -- node [pos=0.5, below] {\textcolor{arrowColorText}{$J_\Cb^k$}} (update.east);

\end{tikzpicture}
\else
    \includegraphics{graphics/update_loop/update_loop_main.pdf}
\fi
\caption{Closed-loop optimization algorithm. Observe that the nominal model $\theta^k$ in iteration $k$ need not match the prediction model $\tilde{\theta}^k$ used by the MPC.}\label{fig:update_loop}
\end{figure}

We assume that system \cref{eq:system} executes a repeated operation starting at time step $t=0$ from a known initial state $x_0$ with a given $y_{-1}$, evolving until $t=T$. 
After each operation, the state is reset to a possibly different $x_0$. We refer to one such operation, spanning $T$ time steps, as an \emph{iteration}.

The performance is measured by a cost function $\mathcal{C}:\R^{(T+1)n_x} \times \R^{Tn_u} \times \R^{n_p}$, which is minimized in expectation over the additive noise and the initial condition. This leads to the following stochastic optimization problem
\begin{align}
\begin{split}
\underset{p,x,u,y}{\text{minimize}} & \quad \mathbb{E}_{v} [\mathcal{C}(x,u,p)]\\
\text{subject to}
& \quad x_{t+1}=f(x_t,u_t,\theta)+w_t,~t\in\Z_{[0,T-1]},\\
& \quad y_t = \operatorname{MPC}(x_t,y_{t-1},p),~t\in\Z_{[0,T-1]},\\
& \quad u_t=\pi(x_t,y_t,p),~u_t \in \mathcal{U},~t\in\Z_{[0,T-1]},\\
& \quad x_t\in \mathcal{X},~p\in \mathcal{P},~t\in\Z_{[0,T]},\\
& \quad v=(w,x_0,y_{-1}),
\end{split}\label{eq:prob_nominal}
\end{align}
where we defined $x=(x_0,\dots,x_T)$ and similarly for $u$ and $y$. 
Note that $y_{-1}$ and $x_0$ may be interdependent, for example if $y_{-1}$ depends on $x_0$. 

\section{Proposed method}
\label{section:pref}
Solving a constrained, nonconvex, and nonsmooth problem such as \cref{eq:prob_nominal} is very challenging in general.
To simplify the problem we can incorporate the state constraints $x_t\in\mathcal{X}$ in \cref{eq:prob_nominal} into the cost function using a penalty function. 
Let
\begin{align*}
\bar{\mathcal{C}}(p,v) := \mathcal{C}(x(p,v),u(p,v),p) + c_3 \sum_{t=0}^{T} \operatorname*{dist}(x_t(p,v),\mathcal{X}),
\end{align*}
where $x(p,v)=(x_0(p,v),\dots,x_T(p,v))$ and $u(p,v)=(u_0(p,v),\dots,u_{T-1}(p,v))$ are the closed-loop trajectories obtained by fixing $p$ and $v$.
We can reformulate \cref{eq:prob_nominal} as
\begin{align}
\underset{p}{\text{minimize}} & \quad \C(p) := \E_v [\bar{\mathcal{C}}(p,v)],
\label{eq:prob_tractable}
\end{align}
where we set $\pi(x_t, y_t, p) = u_t \in \mathcal{U}$ for all $x_t$, $y_t$, and $p$, and thus omitted the explicit constraint $u_t \in \mathcal{U}$ in \cref{eq:prob_tractable} since $u$ is generally obtained directly from the MPC problem \cref{eq:MPC}, which enforces $u \in \mathcal{U}$ explicitly. 
The state constraint $x_t\in \mathcal{X}$ is incorporated via the penalty term $c_3 \sum_{t=0}^{T} \operatorname*{dist}(x_t(p,v),\mathcal{X})$, for some $c_3>0$, involving the distance to the constraint set $\mathcal{X}$. 
This formulation leads to a design that satisfies state constraints \emph{on average}.

Problem \cref{eq:prob_tractable} cannot be solved directly for two reasons: i) the system parameter $\theta$ is unknown, and ii) the distribution of the disturbance $v$ is unknown. 
To address i), we use the state-input trajectories collected in each iteration to perform system identification and obtain a confidence set $\Theta^k$ for which $\theta\in \Theta^k$ with high probability and a nominal model $\theta^k\in\Theta^k$. 
To address ii), we consider a single value $v^k$ of $v$, reconstructed by collecting state measurements throughout the iteration, and combine it with $\theta^k$ to obtain an estimate of the gradient of the objective function, which we then use to update $p^k$. 
A high-level description of the proposed methodology can be seen in \cref{fig:update_loop}, 
while a more detailed overview is postponed to \cref{alg:main} and \cref{section:algorithm}.
We first outline the two main ingredients of our algorithm (system identification in \cref{section:learning} and gradient estimation in \cref{section:backprop}) and then combine them in \cref{section:algorithm}.

\subsection{Learning the true model}
\label{section:learning}
At each iteration $k$, the closed loop system yields noisy measurements $x^k_{t+1}$ of $f(x^k_t,u^k_t)$. Defining $\psi_t^k:=\phi(x_t^k,u_t^k)$ and $z_t^k=x_{t+1}^k-\varphi(x_t^k,u_t^k)$, we have for all $t\in\Z_{[0,T-1]}$
\begin{align}
z_t^k = \psi_t^{k,\top}\theta + w_t^k. \label{eq:sys_id_setting}
\end{align}
Identifying $\theta$ from measurements of the form \cref{eq:sys_id_setting} is a well-studied problem, and efficient algorithms with theoretical guarantees are available. In this work, we adapt the recursive least squares estimator of \cite{abbasi2011improved}
\begin{subequations}\label{eq:LS}
\begin{align}
A^{k+1} & = A^{k} + {\textstyle \sum_{t=0}^{T-1}} \psi_t^k \psi_t^{k,\top}, \label{eq:LS:A}\\
b^{k+1} & = b^{k} + {\textstyle \sum_{t=0}^{T-1}} \psi_t^k z_t^k, \label{eq:LS:B}
\end{align}
with $A^0=\lambda I$ and $b^0=\lambda \theta^0$, where $\lambda>0$ and $\theta^0$ is the best initial guess of $\theta$, and $\theta^k$ is updated via
\begin{align}
\theta^{k+1} & = (A^{k})^{-1} b^k. \label{eq:LS:theta}
\end{align}
\end{subequations}
Following \cite{abbasi2011improved}, we impose the following standard assumption on the disturbance.
\begin{assumption}\label{ass:sys_id}
For each $t$ and $k$, the disturbance samples $w_t^k$ are independent and identically distributed (i.i.d.), almost surely bounded, and $R$-sub-Gaussian for some known $R \geq 0$, that is, for all $\lambda\in\R$, $\E[e^{\lambda w_t^k}] \leq \operatorname*{exp}\left( \frac{\lambda^2R^2}{2} \right)$, where the exponential and the inequality are applied element-wise. 
Moreover, $\|\theta\| \leq S$ for some known $S > 0$.
\end{assumption}
Under \cref{ass:sys_id}, \cite[Theorem 2]{abbasi2011improved} guarantees that given any $\delta \in (0,1)$, with probability at least $1-\delta$, the following holds for all $t\in\Z_{[0,T]}$
\begin{align}
\|\theta^k-\theta\|_{A^k} \leq c_k := R \sqrt{ 2 \log \left( \frac{\det(A^k)^{1/2}}{\det(\lambda I)^{1/2}\delta} \right) } + \lambda^{1/2}S. \label{eq:RLS_ellipsoid}
\end{align}
This bound can be used to construct ellipsoidal sets that are guaranteed to contain $\theta$ with high confidence.

We further assume that the state-input sequence is sufficiently rich to ensure that the uncertainty in the parameter estimate diminishes over iterations. 
\begin{assumption}\label{ass:pe}
For every $k\in\N$, $\sum_{t=0}^{T-1} \psi^k_t \psi^{k,\top}_t \geq \gamma I$ for some $\gamma>0$.
\end{assumption}
\cref{ass:pe} is the \emph{persistency of excitation} (PE) condition, often required in the literature for system identification. However, unlike classical system identification, where excitation conditions are typically required to hold for each time step \cite{johnstone1982exponential}, our PE condition involves the entire iteration $T$. Since $T$ is generally large and stochastic noise $w_t$ is present, satisfying \cref{ass:pe} is not restrictive. 
\rz{We extend our analysis to the case where PE is not satisfied in \cref{section:imperfect_learning}.}

\begin{theorem}\label{thm:sys_id}
Under \cref{ass:sys_id,ass:pe}, for any confidence level $\delta\in(0,1)$, the true parameter $\theta$ belongs to the set $\Theta^k:=\{ \theta: \|\theta-\theta^k\|_{A^k} \leq c_k \}$ for all $k\in\N$ with probability at least $1-\delta$, where $A^k$ and $\theta^k$ are computed in \cref{eq:LS}, and $c_k$ in \cref{eq:RLS_ellipsoid}. 
Moreover, if $\|\psi_t^k\|\leq L_\psi$ for each $t$ and $k$ then $\Theta^k \subseteq \{ \theta: \|\theta-\theta^k\| \leq \tilde{c}_k \}$ where
\begin{align}
\tilde{c}_k = R \sqrt { \frac {  n_\theta \log (1+TkL_\psi^2/n_\theta\lambda) - 2\log(\delta) } { k \gamma } } + \frac{\lambda^{1/2}S}{\sqrt{k \gamma} }, \label{eq:c_k}
\end{align}
and $\lim_{k \to \infty} \tilde{c}_k=0$.
\end{theorem}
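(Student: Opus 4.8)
The plan is to handle the two claims separately. The first is a direct specialization of the self-normalized confidence bound of \cite{abbasi2011improved} to the iteration-wise recursive least-squares scheme \cref{eq:LS}, and the second is an explicit worst-case estimate obtained by converting the $A^k$-weighted bound into a Euclidean one and then controlling the relevant spectral quantities.

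For the first claim, I would begin by observing that the recursions \cref{eq:LS:A,eq:LS:B} merely accumulate the rank-one terms $\psi_t^j \psi_t^{j,\top}$ and the cross terms $\psi_t^j z_t^j$ over the entire data stream $\{(\psi_t^j, z_t^j): j\in\Z_{[0,k-1]},\, t\in\Z_{[0,T-1]}\}$, so that $A^k = \lambda I + \sum_{j=0}^{k-1}\sum_{t=0}^{T-1}\psi_t^j\psi_t^{j,\top}$, with $b^k$ of matching form and $\theta^k$ the corresponding ridge estimator. Reindexing the pairs $(j,t)$ onto a single time axis, the linear model \cref{eq:sys_id_setting} is exactly the measurement setting of \cite{abbasi2011improved}, provided the noise forms a conditionally $R$-sub-Gaussian martingale-difference sequence relative to a filtration for which each regressor $\psi_t^k$ is predictable. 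This predictability holds because $\psi_t^k=\phi(x_t^k,u_t^k)$ is determined by the state and input realized before $w_t^k$ enters \cref{eq:system}, while the sub-Gaussianity and boundedness are supplied by \cref{ass:sys_id}. I would then invoke \cite[Theorem 2]{abbasi2011improved}, whose guarantee is uniform (anytime) along the time axis; this uniformity is precisely what produces the statement ``for all $k\in\N$'' under a single probability budget $\delta$, giving $\theta\in\Theta^k$ for every $k$ with probability at least $1-\delta$ and $c_k$ equal to the right-hand side of \cref{eq:RLS_ellipsoid}.

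For the second claim, I would pass from the weighted to the unweighted norm via $\|\theta-\theta^k\|^2 \le \|\theta-\theta^k\|_{A^k}^2/\lambda_{\min}(A^k) \le c_k^2/\lambda_{\min}(A^k)$, so it suffices to lower-bound $\lambda_{\min}(A^k)$ and upper-bound $c_k$. Summing the persistency-of-excitation inequality of \cref{ass:pe} over the $k$ iterations gives $A^k \succeq \lambda I + k\gamma I$, hence $\lambda_{\min}(A^k)\ge k\gamma$. For $c_k$, I would control the log-determinant ratio using $\|\psi_t^k\|\le L_\psi$: since $\operatorname{tr}(A^k) \le n_\theta\lambda + kTL_\psi^2$, the determinant-trace (AM-GM) inequality yields $\det(A^k) \le (\lambda + kTL_\psi^2/n_\theta)^{n_\theta}$, and with $\det(\lambda I)=\lambda^{n_\theta}$ this gives $2\log(\det(A^k)^{1/2}/(\det(\lambda I)^{1/2}\delta)) \le n_\theta\log(1+TkL_\psi^2/n_\theta\lambda) - 2\log\delta$. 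Substituting both estimates into $c_k/\sqrt{\lambda_{\min}(A^k)}$ reproduces exactly the expression for $\tilde{c}_k$ in \cref{eq:c_k}, certifying $\Theta^k \subseteq \{\theta:\|\theta-\theta^k\|\le\tilde{c}_k\}$. The limit $\tilde{c}_k\to0$ then follows termwise: the second summand decays like $k^{-1/2}$, while in the first the numerator grows only logarithmically in $k$ yet is divided by $k\gamma$, so the entire radical vanishes.

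The spectral inequalities of the second claim are routine. The step deserving the most care is the reduction in the first claim: I must confirm that reshaping the doubly-indexed data into a single adapted stream preserves the predictability and martingale structure required by \cite{abbasi2011improved}, and that it is the anytime nature of their bound — not a per-$k$ union bound — that legitimizes the simultaneous-in-$k$ statement without inflating the confidence level $\delta$.
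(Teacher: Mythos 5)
Your proof is correct and follows essentially the same route as the paper: the confidence ellipsoid comes from \cite[Theorem 2]{abbasi2011improved} applied to the reindexed data stream, the Euclidean bound from $\lambda_{\min}(A^k)\ge k\gamma$ via \cref{ass:pe} summed over iterations, and the log-determinant control from $\det(A^k)\le(\lambda+TkL_\psi^2/n_\theta)^{n_\theta}$ (which the paper cites as \cite[Lemma 10]{abbasi2011improved} and you re-derive directly via the trace/AM--GM argument). Your explicit verification of the predictability and anytime-uniformity conditions is detail the paper leaves implicit, but it does not change the argument.
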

\begin{proof}
Equation \cref{eq:RLS_ellipsoid} can be written equivalently as
\begin{align*}
\|\theta^k-\theta\|_{A^k} \leq R \sqrt{\log \left( \frac{\det A^k}{ \lambda^{n_\theta} } \right) - 2 \log \delta } + \lambda^{1/2} S.
\end{align*}
From \cite[Lemma 10]{abbasi2011improved} we have $\operatorname*{det}A^k \leq (\lambda + Tk L_\psi^2/n_\theta)^{n_\theta}$, meaning that
\begin{align*}
\log \left( {\det A^k}/{\lambda^{n_\theta}} \right) \leq n_\theta \log \left( 1 + {Tk L_\psi^2}/{n_\theta \lambda} \right).
\end{align*}
Combining yields
\begin{align*}
\|\theta^k-\theta\|_{A^k} & \leq R \sqrt{ n_\theta \log (1+Tk L_\psi^2/n_\theta \lambda) - 2 \log \delta }  + \lambda^{1/2}S.
\end{align*}
Finally, by \cref{ass:pe} we have $A_k \succeq k \gamma$, and since $\|x\|_A \geq \lambda_\text{min}(A)^{1/2}\|x\|$, we have $\|\theta^k-\theta\| \leq \|\theta^k-\theta\|_{A^k} / \sqrt{k \gamma}$, which combined with the previous equation completes the proof.
\end{proof}
The full system identification procedure is summarized in \cref{alg:sys_id}.
\begin{algorithm}
\caption{System identification.}\label{alg:sys_id}
\begin{algorithmic}[1]
\Input $A^k$, $b^k$, $x^k$, $u^k$, $\delta\in(0,1)$.
\State Compute $\psi_t^k=\phi(x_t^k,u_t^k)$ and $z^k_t=x_{t+1}^k-\varphi(x_t^k,u_t^k)$ for each $t\in\Z_{[0,T-1]}$.
\State Compute $A^k$ and $b^k$ via \cref{eq:LS:A,eq:LS:B}.
\State Update $\theta^{k+1}$ via \cref{eq:LS:theta}.
\State Define $\Theta^{k+1} = \{ \theta\in\R^{n_\theta} : \|\theta^k-\theta\|_{A^k} \leq c_k \}$, where $c_k$ is the RHS of \cref{eq:RLS_ellipsoid}.
\State \Return $\Theta^{k+1},\theta^{k+1}$.
\end{algorithmic}
\end{algorithm}

Our analysis can easily be extended to any system identification technique satisfying %
a bound of the form $\|\theta-\theta^k\|\leq \tilde{c}_k$ for some $\tilde{c}_k=\mathcal{O}(\sqrt{k})$. 
This is beyond the scope of this paper and left as a promising direction for future work.

\subsection{Computing sensitivities using backpropagation}
\label{section:backprop}
Thanks to the system identification procedure introduced in \cref{section:learning}, 
a nominal model $\theta^k$ is available in each iteration. 
This alone, however, is not enough to fully characterize the Jacobian of the objective of \cref{eq:prob_tractable}, 
which contains an expectation over an unknown distribution. 
Instead of attempting to reconstruct the Jacobian of $\C(p)=\mathbb{E}_v [\bar{\mathcal{C}}(p,v)]$, 
we evaluate the Jacobian of $\bar{\mathcal{C}}$ for a fixed realization of the disturbance $v^k$, observed during the iteration. 
We denote this Jacobian by $\mathcal{J}_{\bar{\mathcal{C}}}^k:=\J_{\bar{\mathcal{C}}}(p^k,v^k,\theta^k)$ to underline the dependency on the nominal model $\theta^k$. 
The Jacobian $\J_{\bar{\mathcal{C}}}^k$ can be obtained by applying the chain rule of differentiation
\begin{multline}
\J_{\bar{\mathcal{C}}}^k = [\mathcal{J}_{\mathcal{C},x}(x^k,u^k,p^k)+\J_{\sum_{t=0}^T\operatorname*{dist}(\cdot_t,\mathcal{X})}(x^k)] \mathcal{J}_x^k \\ + \mathcal{J}_{\mathcal{C},u}(x^k,u^k,p^k) \mathcal{J}_u^k + \mathcal{J}_{\mathcal{C},p}(x^k,u^k,p^k). \label{eq:UL_cost_gradient}
\end{multline}
where $\mathcal{J}_x^k:=\J_{x,p}(p^k,v^k,\theta^k)$ and $\J_u^k=\J_{u,p}(p^k,v^k,\theta^k)$ are obtained recursively as follows
\begin{align}
\J_{x_{t+1}}^k & = \J_{f,x}(x^k_t,u_t^k,\theta^k) \J_{x_t}^k + \J_{f,u}(x^k_t,u_t^k,\theta^k) \J_{u_t}^k, \notag \\
\J_{u_t}^k & = \J_{\pi,x}(x_t^k,y_t^k,p^k) \J_{x_t}^k + \J_{\pi,u}(x_t^k,y_t^k,p^k) \J_{y_t}^k \notag \\ & \quad +\J_{\pi,p}(x_t^k,y_t^k,p^k) \notag \\
\J_{y_t}^k & = \J_{\text{MPC},x}(x_t^k,y_{t-1}^k,p^k)\J_{x_t}^k \notag \\ & \quad + \J_{\text{MPC},y}(x_t^k,y_{t-1}^k,p^k)\J_{y_{t-1}}^k \notag \\ & \quad + \J_{\text{MPC},p}(x_t^k,y_{t-1}^k,p^k) \label{eq:backprop}
\end{align}
initialized with $\J_{x,0}^k=0$ since the initial state $x^k_0$ does not depend on $p$. 
Observe that the closed-loop trajectories $x^k$ and $y^k$ are known, and the only source of error in \cref{eq:UL_cost_gradient} are the Jacobians $\J_x^k$ and $\J_u^k$. 
To compute the conservative Jacobian $\J_{y_t}^k$ of the solution map $y_t$ of \cref{eq:MPC}, we utilize a nonsmooth implicit function theorem outlined in \cref{section:path_diff_opt_prob}.

\subsection{The Complete Algorithm}
\label{section:algorithm}
Our method combines the ideas presented in \cref{section:learning,section:backprop} and it is outlined in \cref{alg:main}. 
Specifically, we apply a projected gradient descent update of the form
\begin{align}
p^{k+1} = \Pi_{\mathcal{Y}^k}[p^k - \alpha_k J_\Cb^k], \label{eq:update}
\end{align}
where $(\alpha_k)_{k\in\N} \subset \R_{>0}$ is a sequence positive stepsizes, 
and $J_\Cb^k\in\J_\Cb^k$ is obtained combining the chain rule in \cref{eq:UL_cost_gradient} and the recursive relation in \cref{eq:backprop} using the nominal model $\theta^k$ obtained using \cref{alg:sys_id}. 
\rz{We stress that the decision variable $p$ in \cref{eq:update} contains both a parameter $\tilde{p}$ affecting the cost of the MPC, and the MPC prediction model $\tilde{\theta}$, and that the gradient descent procedure modifies both parameters. 
The internal MPC model $\tilde{\theta}$ need not be identical to the nominal model $\theta^k$, obtained with the system identification procedure of \cref{alg:sys_id}; 
however, the two are related by the projection onto the set}
\begin{align}
\mathcal{Y}^k:=\{ p = (\tilde{p},\tilde{\theta}) \in \mathcal{P}: \tilde{\theta}\in\Theta^k \},
\label{eq:constraint_set_yk}
\end{align}
representing the set of admissible parameters consistent with the confidence region $\Theta^k$. 
We will present a variation of this scheme where we set $\tilde{\theta}^k=\theta^k$ in \cref{section:equivalence}.

\begin{algorithm}
\caption{Proposed algorithm.}\label{alg:main}
\begin{algorithmic}[1]
\Input $p^0$, $(\alpha_k) _{k\in\N} \subset \R_{>0}$.
\Init $k \gets 0$
\While{\texttt{not\_terminated}}
    \State Sample $v^k$.
    \State Obtain $x^k=x^k(p^k,v^k)$ and $u^k=u(p^k,v^k)$.
    \State Update $\Theta^k$ and $\theta^k$ using \cref{alg:sys_id}.
    \State Compute $J_\Cb^k\in\J_\Cb^k$ using \cref{eq:UL_cost_gradient} and \cref{eq:backprop}.
    \State Update $p^{k+1} = \Pi_{\mathcal{Y}^k}[p^k - \alpha_k J_\Cb^k]$.
\EndWhile
\State \Return $p^k$.
\end{algorithmic}
\end{algorithm}

\section{Convergence analysis}
\label{section:convergence}
In this section we outline sufficient conditions ensuring that the procedure of \cref{alg:main} converges to a critical point of \cref{eq:prob_tractable}.
First, to ensure existence of the conservative Jacobians $\J_\Cb^k$ of the cost function, we assume the following.
\begin{assumption}\label{ass:local_lip_and_definability}
The functions $\mathcal{C}$, $\text{MPC}$, and $\pi$ are locally Lipschitz and definable. Moreover, $f(\cdot,\cdot,\vartheta)$ is locally Lipschitz and definable for any $\vartheta\in \Theta^k$ and all $k\in\N$.
\end{assumption}
The family of definable functions encompasses a broad range of functions commonly used in control and optimization, making \cref{ass:local_lip_and_definability} a mild requirement. However, establishing the Lipschitz continuity and definability of the $\text{MPC}$ requires more careful consideration. Generally, the solution map of an optimization problem may exhibit discontinuitites, or even be set-valued. In \cite{zuliani2023bp}, we presented sufficient conditions under which the Lipschitz continuity and definability assumption of the MPC map are satisfied for problems formulated as quadratic programs. Additionally, we proposed an efficient algorithm to compute the associated conservative Jacobians at each time step by solving a linear system of equations. \cref{section:path_diff_qp} summarizes the sufficient conditions from \cite{zuliani2023bp}, whereas \cref{section:path_diff_nlp} discusses how these conditions may be extended to more general, nonlinear optimization problems.
\begin{lemma}\label{lemma:path_diff_cost}
Under \cref{ass:local_lip_and_definability} the function $\bar{\mathcal{C}}$ is locally Lipschitz and definable for all $k\geq 0$.
\end{lemma}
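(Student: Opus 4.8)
The plan is to exploit the closure properties of the class of locally Lipschitz, definable functions---stability under finite sums, composition, and the formation of distance functions---together with the finite-horizon, recursive structure of the closed-loop trajectories. Since $\bar{\mathcal{C}}$ is built by composing $\mathcal{C}$ and $\operatorname{dist}(\cdot,\mathcal{X}^{T+1})$ with the trajectory maps $\mathbf{x}(\cdot,v)$ and $\mathbf{u}(\cdot,v)$, the core of the argument is to establish that these trajectory maps are themselves locally Lipschitz and definable, after which the conclusion follows from elementary closure properties.

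First I would show, by induction on the time index $t\in\Z_{[0,T]}$, that for each fixed disturbance realization $v$ the maps $p\mapsto\mathbf{x}_t(p,v)$, $p\mapsto\mathbf{u}_t(p,v)$, and $p\mapsto\mathbf{y}_t(p,v)$ are locally Lipschitz and definable. The base case is immediate since $x_0=\bar{x}$ is constant in $p$ and $y_{-1}=\bar{y}$ is given. For the inductive step, assuming $\mathbf{x}_t$ and $\mathbf{y}_{t-1}$ enjoy both properties, the identities $y_t=\operatorname{MPC}(x_t,y_{t-1},p)$, $u_t=\pi(x_t,y_t,p)$, and $x_{t+1}=f(x_t,u_t,\theta)+w_t$ express $\mathbf{y}_t$, $\mathbf{u}_t$, and $\mathbf{x}_{t+1}$ as compositions of the locally Lipschitz, definable maps $\operatorname{MPC}$, $\pi$, and $f(\cdot,\cdot,\theta)$ (definable and locally Lipschitz by \cref{ass:local_lip_and_definability}, using that $\theta\in\Theta^k$) with the inductive functions, followed by addition of the constant $w_t$. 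Because both local Lipschitz continuity and definability are preserved under composition and addition of constants, and the horizon $T$ is finite, the induction terminates and the stacked trajectory maps $\mathbf{x}(\cdot,v)$ and $\mathbf{u}(\cdot,v)$ are locally Lipschitz and definable.

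Next I would assemble the cost. The term $\mathcal{C}(\mathbf{x}(p,v),\mathbf{u}(p,v),p)$ is the composition of $\mathcal{C}$---locally Lipschitz and definable by assumption---with the map $p\mapsto(\mathbf{x}(p,v),\mathbf{u}(p,v),p)$, hence locally Lipschitz and definable. For the penalty term I would invoke two facts: the product set $\mathcal{X}^{T+1}$ is definable since $\mathcal{X}$ is definable and definability is closed under Cartesian products; and $\operatorname{dist}(\cdot,\mathcal{X}^{T+1})$ is $1$-Lipschitz and definable, the latter because $(z,\zeta)\mapsto\|z-\zeta\|$ is definable and the infimum of a definable function over a definable set is definable. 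Composing with the definable, locally Lipschitz map $\mathbf{x}(\cdot,v)$ again preserves both properties.

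Finally, $\bar{\mathcal{C}}$ is the sum of the two terms above, and a finite sum of locally Lipschitz, definable functions is locally Lipschitz and definable, which completes the argument; the same reasoning applies verbatim for every $k$ and for any surrogate model $\vartheta=\theta^k\in\Theta^k$ substituted for $\theta$. I expect the only delicate point to be the definability of the distance function: establishing that the pointwise infimum defining $\operatorname{dist}(\cdot,\mathcal{X}^{T+1})$ is definable relies on the projection (quantifier-elimination) axiom of o-minimal structures rather than on the arithmetic closure rules used elsewhere, so this is the step I would justify most carefully.
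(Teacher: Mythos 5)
Your proposal is correct and follows essentially the same route as the paper's proof: your explicit induction over $t\in\Z_{[0,T]}$ is just the unpacked version of the paper's one-line observation that $\mathbf{x}$ and $\mathbf{u}$ arise as finitely many compositions of $f$, $\operatorname{MPC}$, and $\pi$, and your quantifier-elimination justification for the definability of $\operatorname{dist}(\cdot,\mathcal{X}^{T+1})$ is precisely the content of the result the paper cites from \cite{coste1999introduction}, combined with the same global Lipschitz property of the distance function. No gaps; if anything, your version is more self-contained.
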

\begin{proof}
The functions $x(\cdot)$ and $u(\cdot)$ are obtained as the composition of three mappings, $f$, $\operatorname{MPC}$, and $\pi$, each of which is locally Lipschitz and definable by \cref{ass:local_lip_and_definability}, therefore they are themselves locally Lipschitz and definable by \cite[Exercise 1.11]{coste1999introduction}.
Consequently, the mapping $(p,v)\mapsto\mathcal{C}(x(p,v),u(p,v),p)$ is also locally Lipschitz and definable.
Furthermore, since $\mathcal{X}$ is defined by finitely many linear inequalities, it is definable, and so is the distance function $\operatorname*{dist}(\cdot,\mathcal{X})$ \cite[Exercise 1.15]{coste1999introduction}.
Because the distance function is globally Lipschitz \cite[Proposition 2.4.1]{clarke1990optimization}, the composite mapping $(p,w)\mapsto \sum_{t=0}^{T} \operatorname*{dist}(x_t(p,v),\mathcal{X})$ is therefore locally Lipschitz and definable.
\end{proof}
The following assumption ensures well-posedness of the algorithm.
\begin{assumption}\label{ass:path_diff_set}
For each $k\in\N$, the set $\Theta^k$ is definable. The set $\mathcal{P}$ is also definable.
\end{assumption}
Norm balls, ellipsoids, and polyhedral sets are all definable, thereby making \cref{ass:path_diff_set} a mild requirement. 
The image of a definable set under a definable function is again definable, which gives the following.
\begin{lemma}\label{lemma:Yk_is_definable}
Under \cref{ass:path_diff_set} the set $\mathcal{Y}^k:=\{ p=(\tilde{p},\tilde{\theta})\in \mathcal{P}: \tilde{\theta} \in \Theta^k \}$ is definable.
\end{lemma}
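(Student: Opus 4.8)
The plan is to realize $\mathcal{Y}_k$ entirely through the elementary closure properties of an o-minimal structure, namely stability under finite intersection, products, and coordinate projection, as listed in the axioms of \cref{section:prelim}. The key observation is that $\mathcal{Y}_k$ is the intersection of $\mathcal{P}$ with the preimage $\mathcal{T}^{-1}(\Theta^k)$, and that this preimage can be assembled from the graph of $\mathcal{T}$, which is definable by \cref{ass:path_diff_set} (recall that definability of a function means definability of its graph). First I would record that the graph $\Gamma := \{ (p,q)\in\R^{n_p}\times\R^{n_\theta} : q = \mathcal{T}(p) \}$ is a definable subset of $\R^{n_p+n_\theta}$. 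Since $\R^{n_p}$ is algebraic, hence definable, and $\Theta^k$ is definable by \cref{ass:path_diff_set}, the cylinder $\R^{n_p}\times\Theta^k$ is definable by the product axiom. The intersection
\begin{align*}
\Gamma \cap (\R^{n_p}\times\Theta^k) = \{ (p,q) : q = \mathcal{T}(p),~ q \in \Theta^k \}
\end{align*}
is then definable, because $\mathcal{S}^{n_p+n_\theta}$ is a Boolean subalgebra.

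Next I would project onto the first $n_p$ coordinates. The projection axiom eliminates the last coordinate of a definable set, so applying it $n_\theta$ times in succession (keeping the $\R^{n_\theta}$-block as the trailing coordinates) shows that
\begin{align*}
\mathcal{T}^{-1}(\Theta^k) = \{ p\in\R^{n_p} : \mathcal{T}(p)\in\Theta^k \}
\end{align*}
is definable. Finally, $\mathcal{Y}_k = \mathcal{P}\cap\mathcal{T}^{-1}(\Theta^k)$ is the intersection of two definable subsets of $\R^{n_p}$, with $\mathcal{P}$ definable by \cref{ass:path_diff_set}, hence definable by the Boolean subalgebra property, which closes the argument.

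I do not anticipate a genuine obstacle, as the statement is a direct consequence of the structural axioms. The only points requiring mild care are the bookkeeping of the iterated projection, which removes one trailing coordinate at a time to match the axiom exactly, and the minor conceptual point that the relevant operation here is the \emph{preimage} rather than the image of a definable set under $\mathcal{T}$. This is why I route the argument through the graph of $\mathcal{T}$ together with a projection, rather than invoking a direct-image statement; the graph-based formulation makes the preimage fall out cleanly from intersection and projection alone.
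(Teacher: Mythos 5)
Your proof is correct, and it takes a more elementary route than the paper's. The paper argues by writing $\mathcal{Y}_k = \bigcup_{\theta\in\Theta^k}\{p\in\mathcal{P}:\mathcal{T}(p)=\theta\}$ and appealing to the first-order-formula characterization of definability \cite[Theorem 1.13]{coste1999introduction}: read literally, that union is infinite and so is not covered by the Boolean-subalgebra axiom, and the cited theorem is what rescues it, by reinterpreting the union as the existentially quantified formula $\exists\theta\,(\theta\in\Theta^k \wedge \mathcal{T}(p)=\theta)$, whose definability rests on exactly the projection mechanism you carry out by hand. Your argument --- graph of $\mathcal{T}$, intersected with the cylinder $\R^{n_p}\times\Theta^k$, followed by $n_\theta$ applications of the projection axiom and a final intersection with $\mathcal{P}$ --- is thus the unwound, axiom-level version of the paper's one-line appeal to the formula theorem; it is self-contained, avoids the superficially problematic infinite union, and makes explicit that the relevant operation is the \emph{preimage} $\mathcal{T}^{-1}(\Theta^k)$ (the paper's motivating remark preceding the lemma speaks of images of definable sets, which is not quite what is used). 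What the paper's route buys is brevity and reusability of the formula machinery; what yours buys is transparency and independence from the cited theorem. Both are sound.
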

\begin{proof}
The set $\{ p \in \mathcal{P} : \tilde{\theta} = \theta \}$ is definable since $\mathcal{P}$ is definable by \cref{ass:path_diff_set}. 
The set $\mathcal{Y}^k = \{ p \in \mathcal{P}: \tilde{p} \in \Theta^k \}$ can be expressed as $\mathcal{Y}^k = \bigcup_{\theta \in \Theta^k} \{ p \in \mathcal{P} : \tilde{p} = \theta \}$, and since $\Theta^k$ is definable by \cref{ass:path_diff_set}, $\mathcal{Y}^k$ is definable by \cite[Theorem 1.13]{coste1999introduction}.
\end{proof}
To simplify the analysis, we assume the following.
\begin{assumption}\label{ass:P_bounded}
The set $\mathcal{P}$ is bounded.
\end{assumption}
While the boundedness of $\mathcal{P}$ simplifies the analysis, it can be relaxed in favor of a weaker convergence result that would additionally require $\sup_{k\in\N} \|p^k\| < \infty$. For a discussion on boundedness, we refer the reader to \cite[Section 6.1]{davis2020stochastic}.

We further assume that the stepsizes fulfill the following condition, taken from \cite{davis2020stochastic}, needed for convergence.
\begin{assumption}\label{ass:stepsizes}
The stepsizes $( \alpha_k )_{k\in\N} \subset \R_{>0}$ satisfy
\begin{align*}
\sum_{k\in\N} \alpha_k = + \infty, ~~ \sum_{k\in\N} \alpha_k^2 < + \infty.
\end{align*}
\end{assumption}
The condition is standard and it is verified if one chooses, for instance, $\alpha_k = c / k^\gamma$ where $c>0$ and $\gamma\in(0.5,1]$.
We require one last technical assumption involving the sequence of sets $(\mathcal{Y}^k)_{k\in\N}$.
Recall that $N_{\mathcal{Y}^k}$ denotes the \emph{Normal cone} (see \cite[Page 201]{rockafellar2009variational} for a definition) of the set $\mathcal{Y}^k$.
\begin{assumption}\label{ass:outer_semi}
Given any sequence $p^k\to \bar{p}$ and $v^k\in N_{\mathcal{Y}^k}(p^k)$, with $v^k\to \bar{v}$, we have $\bar{v}\in N_\mathcal{Y}(\bar{p})$, where $\mathcal{Y}:=\{ p = (\tilde{p},\tilde{\theta}) \in \mathcal{P}: \tilde{\theta}=\theta \}$.
\end{assumption}
Next, we show that \cref{ass:outer_semi} is verified if each $\Theta^k$ is convex and $\theta \in \Theta^k$ for all $k$, 
an assumption verified by the least-squares algorithm \cref{eq:LS} under \cref{ass:pe}.
\begin{lemma}\label{lemma:convex_satisfies_outersemi}
\cref{ass:outer_semi} is satisfied if $\mathcal{Y}^k$ is convex and $\mathcal{Y} \subset \mathcal{Y}^k$ for all $k$.
\end{lemma}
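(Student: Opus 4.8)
The plan is to show that under the stated hypotheses, the normal cones $N_{\mathcal{Y}^k}$ vary ``upper semicontinuously'' in exactly the way \cref{ass:outer_semi} demands. The key structural fact I would exploit is that for a \emph{convex} set $\mathcal{C}$, the normal cone admits the characterization
\begin{align*}
N_{\mathcal{C}}(x) = \{ v : \langle v, y - x \rangle \leq 0 ~\text{ for all } y \in \mathcal{C} \},
\end{align*}
which is a purely variational (inequality-based) description and is therefore much easier to pass to the limit than the general tangent-cone-polar definition. So the first reduction is: since each $\mathcal{Y}^k$ is convex by assumption, I may replace the abstract normal cone by this inner-product inequality.

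The core of the argument is then a limiting inequality. Take the sequences $p^k \to \bar{p}$, $v^k \in N_{\mathcal{Y}^k}(p^k)$, $v^k \to \bar{v}$ as in \cref{ass:outer_semi}. The goal is $\bar{v} \in N_{\mathcal{Y}}(\bar{p})$, i.e. $\langle \bar{v}, y - \bar{p} \rangle \leq 0$ for every $y \in \mathcal{Y}$. Here is where the inclusion $\mathcal{Y} \subseteq \mathcal{Y}^k$ becomes the crucial hinge: fix an arbitrary $y \in \mathcal{Y}$; by the hypothesis $\mathcal{Y} \subseteq \mathcal{Y}^k$, this same $y$ lies in \emph{every} $\mathcal{Y}^k$, so the convex-normal-cone inequality gives $\langle v^k, y - p^k \rangle \leq 0$ for all $k$. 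Now pass to the limit: $v^k \to \bar{v}$ and $p^k \to \bar{p}$, so by continuity of the inner product $\langle \bar{v}, y - \bar{p} \rangle \leq 0$. Since $y \in \mathcal{Y}$ was arbitrary and $\mathcal{Y}$ is convex (being itself the zero-level set of the convex family, or at least I would note it inherits convexity / I only need the inequality characterization to \emph{hold}, not convexity of $\mathcal{Y}$ per se), this is precisely the statement $\bar{v} \in N_{\mathcal{Y}}(\bar{p})$. I should also check that $\bar{p} \in \mathcal{Y}$, so that the normal cone is well-defined at $\bar{p}$: this follows because $\bar{p} = \lim p^k$, each $p^k \in \mathcal{Y}^k$, and $\mathcal{Y}$ is the ``limiting'' feasible set; if $\mathcal{Y}$ is closed and $\mathcal{Y} = \bigcap_k \mathcal{Y}^k$ (which the nesting $\mathcal{Y} \subseteq \mathcal{Y}^k$ and shrinking confidence sets suggest), then $\bar{p} \in \mathcal{Y}$ follows, though I would state carefully what closedness/nesting is being used.

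The main obstacle I anticipate is \emph{not} the limiting inequality, which is routine once the convex characterization is in hand, but rather justifying that $\bar{p} \in \mathcal{Y}$ and that the inequality characterization genuinely certifies membership in $N_{\mathcal{Y}}(\bar{p})$ for the \emph{possibly nonconvex} limit set $\mathcal{Y}$. The hypothesis only asserts convexity of each $\mathcal{Y}^k$, not of $\mathcal{Y}$; for a nonconvex $\mathcal{Y}$ the inequality $\langle \bar{v}, y-\bar{p}\rangle \leq 0$ for all $y \in \mathcal{Y}$ characterizes the \emph{normal cone in the sense of convex analysis} (the polar of the convex hull of directions), which in general is \emph{smaller} than the limiting/Clarke normal cone used in \cref{ass:outer_semi}. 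So the delicate point is matching the correct notion of normal cone; I expect the cleanest route is to observe that $\mathcal{Y} = \bigcap_k \mathcal{Y}^k$ is an intersection of convex sets and hence itself convex, which collapses all notions of normal cone and removes the difficulty entirely. I would therefore insert a short lemma-internal remark establishing convexity of $\mathcal{Y}$ from the convexity of the $\mathcal{Y}^k$ together with $\mathcal{Y} \subseteq \mathcal{Y}^k$, after which the limiting argument closes the proof.
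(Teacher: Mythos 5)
Your central limiting argument coincides with the paper's: for each fixed $y \in \mathcal{Y}$, convexity of $\mathcal{Y}^k$ gives $\langle v^k, y - p^k \rangle \leq 0$, the inclusion $\mathcal{Y} \subset \mathcal{Y}^k$ lets you use the \emph{same} $y$ for every $k$, and continuity of the inner product passes the inequality to the limit. The only difference in route is cosmetic: the paper observes that the resulting inequality means $v^k \in N_{\mathcal{Y}}(p^k)$ and then invokes outer semicontinuity of the normal cone map \cite[Proposition 6.6]{rockafellar2009variational}, whereas you take the limit in the inequality directly; both are valid, and your version in fact dispenses with that citation.

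Where your proposal goes astray is the patch you append at the end. The claim that $\mathcal{Y} = \bigcap_k \mathcal{Y}^k$, hence convex, is not available: the hypothesis only gives the inclusion $\mathcal{Y} \subset \mathcal{Y}^k$, not equality with the intersection, and $\mathcal{Y} = \{p \in \mathcal{P} : \mathcal{T}(p) = \theta\}$ is a level set of a generally nonlinear map $\mathcal{T}$, so it has no reason to be convex. Fortunately the patch is also unnecessary, because your worry about ``matching the correct notion of normal cone'' points in the wrong direction: the global inequality $\langle \bar{v}, y - \bar{p} \rangle \leq 0$ for all $y \in \mathcal{Y}$ places $\bar{v}$ in the regular (Fr\'echet) normal cone $\hat{N}_{\mathcal{Y}}(\bar{p})$, and $\hat{N}_{\mathcal{Y}}(\bar{p}) \subseteq N_{\mathcal{Y}}(\bar{p})$ always holds for the limiting normal cone used in \cref{ass:outer_semi}. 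Since the conclusion asks for membership in the \emph{larger} limiting cone, landing in a smaller cone is exactly what you want; no convexity of $\mathcal{Y}$ is needed. The one caveat you raise that is genuine---that $\bar{p}$ (and, in the paper's phrasing, each $p^k$) should lie in $\mathcal{Y}$ for the normal cone there to be defined---is glossed over by the paper's own proof as well, which asserts $v^k \in N_{\mathcal{Y}}(p^k)$ without verifying $p^k \in \mathcal{Y}$; on that point you are, if anything, more careful than the source. Delete the convexity-of-$\mathcal{Y}$ remark, replace it with the containment $\hat{N}_{\mathcal{Y}}(\bar{p}) \subseteq N_{\mathcal{Y}}(\bar{p})$, and your proof is complete.
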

\begin{proof}
Take any $p^k\to \bar{p}$ and $v^k\in N_{\mathcal{Y}_k}(p^k)$ such that $v_k\to \bar{v}$. 
Then by convexity we have for all $k$ that $\langle v^k, p-p^k \rangle \leq 0$ for all $p\in \mathcal{Y}^k$.
Since by assumption $\mathcal{Y} \subset \mathcal{Y}_k$, we immediately have $\langle v^k, p-p^k \rangle \leq 0$ for all $p\in \mathcal{Y}$,
meaning that $v^k \in N_\mathcal{Y}(p^k)$.
The result then follows from the outer semicontinuity of the normal cone \cite[Proposition 6.6]{rockafellar2009variational}.
\end{proof}

We now state our main result.
\begin{theorem}\label{thm:main}
Under \cref{ass:sys_id,ass:pe,ass:local_lip_and_definability,ass:path_diff_set,ass:stepsizes,ass:P_bounded}, \cref{alg:main} converges to a critical point of \cref{eq:prob_tractable} with arbitrarily high confidence.
\end{theorem}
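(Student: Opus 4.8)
The plan is to view \cref{alg:main} as a perturbed stochastic projected subgradient method applied to the definable objective $\Cb$, and to invoke the tame-optimization convergence theory of \cite{davis2020stochastic}. I would work on the high-probability event $\mathcal{E}$ furnished by \cref{thm:sys_id}, on which $\theta\in\Theta^k$ for every $k$; since the confidence level $\delta\in(0,1)$ is arbitrary, proving convergence on $\mathcal{E}$ delivers the stated ``arbitrarily high confidence''. A first observation is that \cref{ass:outer_semi} need not be postulated: the sets $\Theta^k$ returned by \cref{alg:sys_id} are ellipsoids, hence convex, and on $\mathcal{E}$ they contain $\theta$, so $\mathcal{Y}\subseteq\mathcal{Y}_k$ for all $k$ and \cref{lemma:convex_satisfies_outersemi} supplies \cref{ass:outer_semi} automatically.

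First I would establish model convergence. By \cref{thm:sys_id}, $\|\theta^k-\theta\|\le\tilde c_k$ with $\tilde c_k\to 0$ (see \cref{eq:c_k}), so $\theta^k\to\theta$; moreover $\operatorname{diam}(\Theta^k)\to 0$ together with $\theta\in\Theta^k$ forces $\Theta^k\to\{\theta\}$, whence $\mathcal{Y}_k\to\mathcal{Y}$ in the Painlev\'e--Kuratowski sense. Next I would show that the model-induced gradient error is asymptotically negligible. The surrogate direction $J_\Cb^k(\theta^k)$ is assembled from \cref{eq:backprop,eq:UL_cost_gradient} along the \emph{true} closed-loop data $x^k,u^k,y^k$; its only dependence on the model enters through the blocks $\J_{f,x}(\cdot,\cdot,\theta^k)$ and $\J_{f,u}(\cdot,\cdot,\theta^k)$, whereas the MPC and policy sensitivities are evaluated at the realized $(x^k,y^k,p^k)$ and are model-free. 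Since $f(\cdot,\cdot,\vartheta)$ is locally Lipschitz and definable (\cref{ass:local_lip_and_definability}) and $\mathcal{P}$ is bounded (\cref{ass:P_bounded}), these blocks depend outer semicontinuously on $\vartheta$ over a compact set; propagating this through the fixed, finite-horizon recursion shows that, for a suitable selection $J_\Cb^k\in\J_\Cb^k$, the deviation $e^k:=J_\Cb^k(\theta^k)-J_\Cb^k$ satisfies $\|e^k\|\to 0$, i.e.\ the surrogate becomes an asymptotically exact element of $\J_\Cb^k$.

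With these ingredients I would recast the update of \cref{alg:main} in the canonical perturbed form
\begin{align*}
p^{k+1}=\Pi_{\mathcal{Y}_k}\big[p^k-\alpha_k\big(g^k+\xi^k+e^k\big)\big],
\end{align*}
where $g^k\in\J_\C(p^k)$ is the conditional mean of $J_\Cb^k$ given the past iterates --- an element of the conservative Jacobian of the expected cost $\C$ by the dominated interchange of expectation and differentiation --- $\xi^k:=J_\Cb^k-g^k$ is the resulting martingale-difference sampling noise, and $e^k$ is the vanishing model bias from the previous step. Because $\Cb$ is locally Lipschitz (\cref{lemma:path_diff_cost}) and the iterates remain in the bounded set $\mathcal{P}$ (\cref{ass:P_bounded}), the noise $\xi^k$ has uniformly bounded second moment. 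I would then check the remaining hypotheses of \cite{davis2020stochastic}: path differentiability of $\Cb$, and hence of $\C$ (\cref{lemma:path_diff_cost}); definability of the feasible sets (\cref{lemma:Yk_is_definable}); the Robbins--Monro stepsizes (\cref{ass:stepsizes}); and the normal-cone outer semicontinuity (\cref{ass:outer_semi}, available via \cref{lemma:convex_satisfies_outersemi}). Invoking the convergence theorem of \cite{davis2020stochastic} in its projected, time-varying-constraint form then yields that, on $\mathcal{E}$, every cluster point $\bar p$ of $\{p^k\}$ satisfies $0\in\J_\C(\bar p)+N_{\mathcal{Y}}(\bar p)$, i.e.\ it is a critical point of \cref{eq:prob_tractable}, which under the present assumptions coincides with a critical point of \cref{eq:prob_nominal}.

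The main obstacle is to accommodate two approximations at once --- a \emph{time-varying} feasible set $\mathcal{Y}_k$ and a \emph{biased} search direction --- inside a single differential-inclusion argument, while still certifying stationarity for the \emph{limit} set $\mathcal{Y}$. The Davis--Drusvyatskiy analysis is built for a fixed constraint and a pure martingale perturbation; here the set perturbation must be controlled through $\mathcal{Y}_k\to\mathcal{Y}$ and the outer semicontinuity of normal cones (\cref{ass:outer_semi}), so that the limiting projected dynamics $\dot p\in-\J_\C(p)-N_{\mathcal{Y}}(p)$ are recovered with the correct normal cone. The most delicate point is that only $\|e^k\|\to 0$ is guaranteed in general, not summability of $\alpha_k\|e^k\|$; absorbing such an asymptotically vanishing bias without disturbing the limiting flow is exactly where the tame (definable) structure and the associated chain rule of \cite{davis2020stochastic} do the decisive work. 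For concrete stepsizes $\alpha_k=c/k^\gamma$ with $\gamma\in(1/2,1]$, the sharper estimate $\sum_k\alpha_k\|e^k\|<\infty$ follows from \cref{eq:c_k}, giving a cleaner route.
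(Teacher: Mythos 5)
Your high-level strategy coincides with the paper's: recast \cref{alg:main} as a perturbed projected stochastic subgradient scheme, use the interchange $\J_\C(p)=\E_w[\J_\Cb(p,w)]$ (the paper's \cref{lemma:exchange_diff_and_exp}), and verify Assumption A of \cite{davis2020stochastic}. However, three of your steps would fail as written. The most serious is the bias control. Outer semicontinuity of $\vartheta\mapsto\J_{f,x}(\cdot,\cdot,\vartheta),\J_{f,u}(\cdot,\cdot,\vartheta)$ gives only the qualitative statement $\|e^k\|\to 0$ with no rate, and Assumption A of \cite{davis2020stochastic} demands that the weighted error sums $\sum_k \alpha_k\xi_k$ \emph{converge}; since $\sum_k\alpha_k=\infty$, a merely vanishing deterministic bias does not satisfy this, and your appeal to the tame structure ``absorbing'' vanishing bias has no counterpart in their theorem (their point 5 applies to selections $d^k\in G_k(p^k)$, not to an unquantified drift). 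The paper closes this gap with \cref{lemma:bound_gradient}: an explicit induction over $t$ propagating Lipschitz constants through the recursion \cref{eq:backprop} yields the quantitative bound $\|J_\Cb^k(\theta^k)-J_\Cb^k\|\le L_1\|\theta^k-\theta\|\le L_1\operatorname{diam}\Theta^k$, which combined with $\operatorname{diam}\Theta^k\le C\sqrt{\log k/k}$ from \cref{eq:c_k} gives summability of $\alpha_k\|e^k\|$. Your closing remark gestures at this route, but \cref{eq:c_k} alone is useless without the Lipschitz propagation argument converting $\|\theta^k-\theta\|$ into a gradient error, which your sketch never supplies.

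Second, you leave $e^k$ inside the projection, but $\mathcal{Y}_k$ is only definable and possibly nonconvex, so you cannot extract the bias by nonexpansiveness of $\Pi_{\mathcal{Y}_k}$; the paper needs a dedicated perturbation argument (Claim 2 of its proof, via \cite[Proposition 4.32]{bonnans2013perturbation} and quadratic growth of the projection objective) to show that projecting with the surrogate versus the true direction moves the iterate by at most $2\alpha_k L_1\operatorname{diam}\Theta^k$, which is then summed. Third, your claim that \cref{ass:outer_semi} comes for free is incorrect: \cref{lemma:convex_satisfies_outersemi} requires convexity of $\mathcal{Y}_k$, not of $\Theta^k$, and $\mathcal{Y}_k=\{p\in\mathcal{P}:\mathcal{T}(p)\in\Theta^k\}$ is the preimage of an ellipsoid under a possibly nonlinear $\mathcal{T}$, intersected with a possibly nonconvex $\mathcal{P}$ — in general nonconvex. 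The paper's Claim 4 invokes \cref{ass:outer_semi} directly to pass normal vectors $w_k^{\mathcal{Y}}\in N_{\mathcal{Y}_k}(p^{k+1})$ to $N_{\mathcal{Y}}(\bar p)$ in the limit; your Painlev\'e--Kuratowski convergence $\mathcal{Y}_k\to\mathcal{Y}$ does not by itself imply the required outer semicontinuity of the (limiting) normal cones, so this hypothesis cannot simply be dropped.
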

\begin{proof}
See \cref{section:convergence_proof}.
\end{proof}

\section{Special Cases and Extensions}
In this section, we introduce two variants of the proposed algorithm: 
one in which the MPC model $\tilde{\theta}^k$ is always set equal to the current best estimate %
$\theta^k$ (see \cref{section:equivalence}), 
and another in which the persistency of excitation assumption is removed (see \cref{section:imperfect_learning}).

\subsection{Certainty equivalence}
\label{section:equivalence}
We assume here that the nominal model $\theta^k$ is explicitly incorporated in the dynamics of the MPC at each iteration $k$ (see \cref{fig:update_loop_ce}). 
This approach, known as certainty equivalence, leverages the most plausible model available at each iteration and has a rich history in the control systems literature \cite{bar2003dual}.
In contrast to the approach described in \cref{section:pform,section:pref}, where the model used in the MPC is part of the tunable parameter $p$, certainty equivalence reduces the dimension of $p$, thereby improving computational efficiency and tuning effort.
On the other hand, by constraining the model, we lose degrees of freedom that could potentially enhance performance.

\begin{figure}
\centering
\ifcompile
    \begin{tikzpicture}[thick,every node/.style={font=\footnotesize}]

\node[rectangle,very thick,draw=deployColor, minimum width=2.5cm, minimum height=2.25cm,fill=deployColor!15] (deploy) at (0,-0.2) {};
\node (deploy_text) at (0,0.6) {\textcolor{deployColor}{\textbf{DEPLOY}}};

\node[rectangle,deployColor,draw,minimum height=0.5cm,minimum width=0.8cm] (f) at (0,0) {\footnotesize$f(\cdot)$};
\node[rectangle,deployColor,draw,below of=f,node distance=0.8cm,minimum height=0.5cm,minimum width=0.8cm] (mpc) {\footnotesize MPC};
\draw [-stealth,deployColor] (f.east) to [bend left=50] (mpc.east);
\draw [-stealth,deployColor] (mpc.west) to [bend left=50] (f.west);

\node[rectangle,very thick,draw=updateColor, minimum width=2.5cm, minimum height=0.7cm, fill=updateColor!15,below of =deploy, node distance = 3cm] (update) {\textcolor{updateColor}{\textbf{UPDATE}}};

\node[rectangle,very thick,draw=sysIdColor, minimum width=1.7cm, minimum height=0.7cm, fill=sysIdColor!15,right of = deploy, node distance = 4cm] (sysid) {\textcolor{sysIdColor}{\textbf{SYS-ID}}};

\node[draw, rectangle,very thick,  draw=sensitivityColor, fill=sensitivityColor!15, below of = sysid, node distance =1.5cm, xshift=-1.5cm, align=center] (gd) {\textcolor{sensitivityColor}{\textbf{Sensitivity}} \\ \textcolor{sensitivityColor}{\textbf{Computation}}};

\draw [-{Stealth[scale=0.6]},very thick,arrowColor] (deploy.east) -- node [pos=0.5, above] {\textcolor{arrowColorText}{$(x^k,u^k)$}} (sysid.west);
\draw [-{Stealth[scale=0.6]},very thick,arrowColor] ([xshift=0.85cm]deploy.east) -- ([xshift=0.85cm,yshift=-1.07cm]deploy.east);
\draw [-{Stealth[scale=0.6]},very thick,newArrowColor] ([yshift=-0.325cm]sysid.south) -- node [pos=0.42, right] {\textcolor{newArrowColor}{$\theta^k$}} ([yshift=-2.775cm]sysid.south) -- ([yshift=-0.15cm]update.east);
\draw [-{Stealth[scale=0.6]},very thick,arrowColor] (update.west) -- ([xshift=-0.65cm]update.west) -- node [pos=0.4,right] {\textcolor{arrowColorText}{$p^k=(\tilde{p}^k,\textcolor{newArrowColor}{\theta^k})$}} ([xshift=-0.65cm]deploy.west) -- (deploy.west);
\draw [-{Stealth[scale=0.6]},very thick, arrowColor] (sysid.south) -- ([yshift=-0.325cm]sysid.south) -- ([yshift=-0.325cm,xshift=-1.15cm]sysid.south) -- ([yshift=-0.7cm,xshift=-1.15cm]sysid.south);
\draw [-{Stealth[scale=0.6]},very thick, arrowColor] (gd.south) -- ([yshift=0.15cm,xshift=1.225cm]update.east) -- node [pos=0.5, above] {\textcolor{arrowColorText}{$J_\Cb^k$}} ([yshift=0.15cm]update.east);

\end{tikzpicture}
\else
    \includegraphics{graphics/update_loop_ce/update_loop_ce_main.pdf}
\fi
\caption{Closed-loop optimization algorithm with certainty equivalence. The MPC prediction model $\tilde{\theta}^k$ now matches nominal model $\theta^k$.}\label{fig:update_loop_ce}
\end{figure}

For each iteration, the closed-loop dynamics are given by
\begin{align}
\begin{split}
x_{t+1}^k&=f(x_t^k,u_t^k,\theta) + w_t^k,\\
y_t^k&=\operatorname{MPC}(x_t^k,y_{t-1}^k,\tilde{p}^k,\theta^k),\\
u_t^k&=\pi(x_t^k,y_t^k,\tilde{p}^k),\\
t&\in\Z_{[0,T]},
\end{split}\label{eq:system_CE}
\end{align}
for a fixed $x_0^k$. The MPC problem is identical to \cref{eq:MPC} with $\tilde{p}^k$ replaced by $\theta^k$
\begin{align*}
\operatorname*{minimize}_{x_{\cdot|t},u_{\cdot|t},\epsilon_{\cdot|t}} & ~~  P_\epsilon(\epsilon_{\cdot|t},\tilde{p}) + \ell_N(x_{N|t},\tilde{p}) + \sum_{j=0}^{N-1} \ell_j(x_{j|t},u_{j|t},\tilde{p}) \notag \\
\text{subject to} & ~~ x_{j+1|t}=A_j(y_{t-1}^k,\theta^k) x_{j|t} + B_j(y_{t-1}^k,\theta^k) u_{j|t} \notag \\ & ~~ + c_j(y_{t-1}^k,\theta^k),~ \forall j \in \Z_{[0,N-1]} \notag \\ 
& ~~ H_x x_{j|t} \leq h_x + \epsilon_{j|t} ,~\epsilon_{j|t} \geq 0,~ \forall j\in\Z_{[0,N]}, \notag \\
& ~~ H_u u_{j|t} \leq h_u,~ \forall j\in\Z_{[0,N-1]}, \notag \\
& ~~ x_{0|t} = x_t^k,
\end{align*}
where the prediction model in \cref{eq:MPC_prediction_model} now becomes
\begin{align*}
A_j(y_{t-1},\theta^k) & = \nabla_x f(x_{j+1|t-1},u_{j+1|t-1},\theta^k),\\
B_j(y_{t-1},\theta^k) & = \nabla_u f(x_{j+1|t-1},u_{j+1|t-1},\theta^k),\\
c_j(y_{t-1},\theta^k) & = f(x_{j+1|t-1},u_{j+1|t-1},\theta^k) \\ & \quad - A_j(y_{t-1},\theta^k)x_{j+1|t-1} - B_j(y_{t-1},\theta^k)u_{j+1|t-1} ,
\end{align*}
For simplicity we denote $x(p^k,v^k)=x(\tilde{p}^k,\theta^k,v^k)$ and similarly $u(p^k,v^k)=u(\tilde{p}^k,\theta^k,v^k)$, 
and we define $\tilde{P}=\{ \tilde{p} : \exists \tilde{\theta} , ~ (\tilde{p},\tilde{\theta}) \in \mathcal{P} \}$.
The certainty equivalence procedure is summarized in \cref{alg:main_CE}.

\begin{algorithm}
\caption{Certainty equivalence algorithm.}\label{alg:main_CE}
\begin{algorithmic}[1]
\Input $p^0$, $(\alpha_k )_{k\in\N} \subset \R_{>0}$.
\Init $k \gets 0$
\While{\texttt{not\_terminated}}
    \State Sample $v^k$.
    \State Obtain $x^k = x(\tilde{p}^k,\theta^k,v^k)$ and $u^k=u(\tilde{p}^k,\theta^k,v^k)$.
    \State Update $\theta^k$ using \cref{alg:sys_id}.
    \State Compute $J_\Cb^k \in \J_\Cb^k$ using \cref{eq:UL_cost_gradient} and \cref{eq:backprop}.
    \State Update $\tilde{p}^{k+1} \gets \Pi_{\tilde{\mathcal{P}}}[\tilde{p}^k - \alpha_k J_\Cb^k]$.
\EndWhile
\State \Return $\tilde{p}^k$, $\theta^k$.
\end{algorithmic}
\end{algorithm}

\begin{corollary}
Under \cref{ass:sys_id,ass:pe,ass:local_lip_and_definability,ass:path_diff_set,ass:stepsizes,ass:P_bounded}, \cref{alg:main_CE} converges to a critical point of \cref{eq:prob_tractable} with arbitrarily high confidence if \cref{alg:sys_id} is used to identify the system.
\end{corollary}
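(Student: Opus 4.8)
The plan is to show that \cref{alg:main_CE} is an instance of the same biased stochastic projected subgradient scheme analyzed in the proof of \cref{thm:main} (see \cref{section:convergence_proof}), so that the convergence argument carries over with only two structural changes: the moving admissible set $\mathcal{Y}_k$ is replaced by the fixed set $\mathcal{P}$, while the objective now depends explicitly on the certainty-equivalent model $\theta^k$. First I would fix a noise realization $w^k$ and write the update as $p^{k+1} = \Pi_{\mathcal{P}}[p^k - \alpha_k J_{\Cb}^k(\theta^k)]$, identifying the limiting problem as the minimization of $p \mapsto \E_w[\Cb(p,\theta,w)]$ over $\mathcal{P}$, i.e.\ problem \cref{eq:prob_nominal} with the true model $\theta$ embedded in the MPC. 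The surrogate direction $J_{\Cb}^k(\theta^k)$ is then a biased estimate of a conservative (sub)gradient of this limiting objective at $p^k$.

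Second, I would verify the path-differentiability prerequisites. For fixed $\theta^k \in \Theta^k$ the map $p \mapsto \operatorname{MPC}(x_t,y_{t-1},p,\theta^k)$ is locally Lipschitz and definable under \cref{ass:local_lip_and_definability}, so repeating the composition argument of \cref{lemma:path_diff_cost} shows $p\mapsto\Cb(p,\theta^k,w)$ is locally Lipschitz and definable for each $k$. Because the projection is now onto the fixed, bounded, definable set $\mathcal{P}$ (\cref{ass:path_diff_set,ass:P_bounded}), its normal cone is trivially outer semicontinuous; consequently \cref{ass:outer_semi} --- and with it the appeal to \cref{lemma:convex_satisfies_outersemi} used in the proof of \cref{thm:main} --- is automatically satisfied and need not be imposed separately.

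Third, and this is the crux, I would bound the total bias between $J_{\Cb}^k(\theta^k)$ and a genuine conservative subgradient of the limiting objective. Two error sources appear: (i) the trajectory bias, since the closed-loop state--input pair $(x^k,u^k)=(\mathbf{x}(p^k,\theta^k,w^k),\mathbf{u}(p^k,\theta^k,w^k))$ is generated with the MPC model $\theta^k$ rather than $\theta$; and (ii) the gradient-formula bias from running the backpropagation recursion \cref{eq:backprop} with $\theta^k$ in place of $\theta$. Local Lipschitzness of $f(\cdot,\cdot,\vartheta)$, $\operatorname{MPC}$, and $\pi$ in \cref{ass:local_lip_and_definability}, together with boundedness of the iterates on the compact set $\mathcal{P}$, makes both errors Lipschitz in the model mismatch, so each is $O(\|\theta^k-\theta\|)$. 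By \cref{thm:sys_id}, with probability at least $1-\delta$ we have $\|\theta^k-\theta\| \le \tilde{c}_k \to 0$, hence the bias vanishes along this high-probability event.

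Finally, with a vanishing bias, square-summable but non-summable stepsizes (\cref{ass:stepsizes}), bounded iterates (\cref{ass:P_bounded}), and the verified local Lipschitz and definable structure, I would invoke the convergence result for biased stochastic subgradient methods of \cite{davis2020stochastic} --- exactly as in the proof of \cref{thm:main} --- to conclude that every limit point of $\{p^k\}$ is a critical point of \cref{eq:prob_nominal} with arbitrarily high confidence (by taking $\delta$ small). The main obstacle is step three: unlike \cref{thm:main}, where the objective is fixed and only the feasible set $\mathcal{Y}_k$ drifts, here the objective itself moves with $\theta^k$, so the key estimate is that the drift of the limiting (sub)gradient induced by updating the certainty-equivalent model is controlled by $\|\theta^k-\theta\|$ and therefore collapses into the standard vanishing-bias term.
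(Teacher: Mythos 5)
Your proposal is correct and follows essentially the same route as the paper: the paper's proof likewise reduces the corollary to re-establishing \cref{lemma:bound_gradient} in the certainty-equivalence setting---via the local Lipschitz continuity of $J_\Cb^k(\theta^k)$ in $\theta^k$ together with the boundedness of $\mathcal{P}$---and then reuses the stochastic projected subgradient machinery of \cite{davis2020stochastic} from the proof of \cref{thm:main}. Your explicit decomposition of the bias into a trajectory error and a backpropagation-model error, each $O(\|\theta^k-\theta\|)$ and hence vanishing by \cref{thm:sys_id}, is simply a more detailed rendering of the Lipschitz-in-$\theta^k$ claim the paper states in one line.
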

\begin{proof}
We can see that \cref{lemma:bound_gradient} continues to hold in the certainty equivalence setting by recognizing the local Lipschitz continuity of $J_\Cb^k(\theta^k)$ in $\theta^k$, and leveraging the boundedness of $\mathcal{P}$. Then, the result immediately follows from the proof of \cref{thm:main}.
\end{proof}

\subsection{Convergence with imperfect model learning}
\label{section:imperfect_learning}
Without \cref{ass:pe}, the system identification procedure in \cref{eq:LS} is no longer guaranteed to yield an asymptotically exact estimate of the true dynamics,
and the radius $c_k$ of the confidence region $\Theta^k$ may not shrink to zero.
This introduces a challenge for the subsequent analysis,
as imperfect model knowledge can lead to non-vanishing error terms.
For the rest of this section, we assume that system is noiseless meaning that $\C(p)=\bar{\mathcal{C}}(p)$. 
We briefly outline a possible extension to the noisy case at the end of this section.

In the absence of \cref{ass:pe}, 
one can design a parameter vector $\bar{p}$ that minimizes the distance between $0$ and $\J_\C(\bar{p})$,
while simultaneously providing an upper bound on this (generally unknown) distance.
To achieve this, we apply \cref{alg:main_CE} (or equivalently, \cref{alg:main}) without updating the nominal model,
that is, by fixing $\theta^k \equiv \theta^0$ and exploit the fact that $\theta \in \Theta^0$, to construct a set of candidate upper bounds
\begin{align} \label{eq:imperfect_learning:Smax}
S_{\text{max}}^M := \left\{{\textstyle\max_{J\in \J_\Cb(\bar{p},\theta_i)}} \|J\| : \theta_i \in \Theta^0, ~ i\in\Z_{[0,M]}\right\},
\end{align}
where each $\theta_i$ is sampled from a known probability distribution $\mathbb{P}_\theta$ supported on $\Theta^0$, and $M$ is the number of samples.

Given a confidence level $\beta \in (0,1)$,
the set $S_{\text{max}}^M$ can be used to compute an upper bound $J_\text{max}^{\epsilon,\beta}$ that satisfies,
with probability at least $1 - \beta$,
\begin{align}
\mathbb{P}\big[ \| \J_\C(\bar{p},\theta) \| > J_\text{max}^{\epsilon,\beta} \big] \leq \epsilon,
\end{align}
for a prescribed violation risk $\epsilon \in (0,1)$.

\begin{theorem}\label{thm:imperfect_learning}
For $\beta\in(0,1)$ and $\epsilon\in(0,1)$, if
\begin{align}
\binom{k+n_\theta-1}{k} \sum_{i=0}^{k+n_\theta-1} \binom{M}{i} \epsilon^i(1-\epsilon)^{M-i} \leq \beta, \label{eq:imperfect:scenario_condition}
\end{align}
then, with probability at least $1-\beta$,
\begin{align*}
\mathbb{P}[\| \J_\C(\bar{p},\theta) \| > J_\text{max}^{\epsilon,\beta}] \leq \epsilon,
\end{align*}
where $J_\text{max}^{\epsilon,\beta}$ is the $k$-th largest element of $S_{\text{max}}^M$.
\end{theorem}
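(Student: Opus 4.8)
The plan is to recognize \cref{eq:imperfect:scenario_condition} as the certificate of a \emph{scenario optimization with sampling and discarding} argument, in the spirit of the randomized (Campi--Garatti) approach, applied to the scalar random quantity $g(\theta):=\max_{J\in\J_\Cb(\theta)}\|J\|$ obtained by drawing $\theta\sim\mathbb{P}_\theta$ on $\Theta^0$. By construction, the set $S_{\text{max}}^M$ in \cref{eq:imperfect_learning:Smax} is precisely a collection of $M$ i.i.d.\ realizations of $g$, and selecting $J_\text{max}^{\epsilon,\beta}$ as its $k$-th largest element is equivalent to solving the one-dimensional scenario program $\min\gamma$ subject to $g(\theta^i)\le\gamma$ for all $i$, after discarding the largest (most binding) samples. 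The generalization theory then controls the probability that a fresh draw of $\theta$ violates the returned threshold.

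First I would establish the bridge connecting the quantity of interest $\|\J_\C(\theta)\|$ to the sampled surrogate norms. Since the true parameter satisfies $\theta\in\Theta^0$ and the surrogate conservative Jacobian reduces to the exact one when the model coincides with the true dynamics, evaluating $g$ at the true $\theta$ dominates $\|\J_\C(\theta)\|$, i.e.\ $\|\J_\C(\theta)\|\le g(\theta)$. Treating the unknown true $\theta$ as an additional independent draw from $\mathbb{P}_\theta$, the event $\{\|\J_\C(\theta)\|>J_\text{max}^{\epsilon,\beta}\}$ is contained in $\{g(\theta)>J_\text{max}^{\epsilon,\beta}\}$, so it suffices to bound the probability that $g$ exceeds the threshold. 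Second, I would invoke the sampling-and-discarding generalization bound: for a scenario program whose decision/support rank is $d$ and from which $k$ samples are discarded, the probability (over $S_{\text{max}}^M$) that the returned solution has violation exceeding $\epsilon$ is at most $\binom{k+d-1}{k}\sum_{i=0}^{k+d-1}\binom{M}{i}\epsilon^i(1-\epsilon)^{M-i}$. Identifying the relevant dimensions with $n_p$ and $n_\theta$ and the number of discarded samples with $k$, the right-hand side becomes the left-hand side of \cref{eq:imperfect:scenario_condition}; requiring it to be at most $\beta$ yields that, with probability at least $1-\beta$ over the draw of $S_{\text{max}}^M$, the returned $J_\text{max}^{\epsilon,\beta}$ satisfies $\mathbb{P}[g(\theta)>J_\text{max}^{\epsilon,\beta}]\le\epsilon$. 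Combining with the bridge inequality gives $\mathbb{P}[\|\J_\C(\theta)\|>J_\text{max}^{\epsilon,\beta}]\le\epsilon$, as claimed.

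The main obstacle is the dimension accounting that produces the asymmetric combinatorial factors in \cref{eq:imperfect:scenario_condition}, where $n_p$ enters the outer binomial while $n_\theta$ sets the summation limit. This requires modeling the program so that the design parameter $p\in\R^{n_p}$ and the model parameter $\theta\in\R^{n_\theta}$ contribute to the support/Helly rank in their distinct roles, e.g.\ combining a worst-case (union-bound type) factor over the $n_p$ design directions with the scenario count governed by the $n_\theta$-dimensional sampling of $\theta$, and verifying that the nonconvexity of $g$ inherited from the set-valued conservative Jacobian and the MPC solution map does not invalidate the i.i.d.\ order-statistic argument. A secondary technical point is measurability and almost-sure finiteness of $g$, which I would justify from the local Lipschitz continuity and definability established in \cref{lemma:path_diff_cost} together with the boundedness of $\mathcal{P}$ in \cref{ass:P_bounded}.
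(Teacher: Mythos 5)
Your proposal is correct and follows essentially the same route as the paper: the paper likewise casts threshold selection as the one-dimensional scenario program \cref{eq:imperfect:scenario_program} (minimize $x$ subject to $\|\J_\C(\theta^i)\|\le x$ over the samples), verifies feasibility and the non-degeneracy of removing the $k$ largest values (Assumptions 2.1 and 2.2 of \cite{campi2011sampling}), and concludes via the sampling-and-discarding theorem \cite[Theorem 2.1]{campi2011sampling}. The only point where you diverge is your final paragraph, where you treat the asymmetric factors $n_p$ and $n_\theta$ in \cref{eq:imperfect:scenario_condition} as a genuine obstacle requiring a refined support-rank analysis; no such accounting is needed (nor attempted by the paper), since the scenario program is one-dimensional, so the bound applies with decision dimension $d=1$, giving $\sum_{i=0}^{k}\binom{M}{i}\epsilon^i(1-\epsilon)^{M-i}$, which is dominated by the left-hand side of \cref{eq:imperfect:scenario_condition} (as $\binom{k+n_p-1}{k}\ge 1$ and the summation range only grows with $n_\theta$), so the stated condition is simply a conservative sufficient one.
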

\begin{proof}
Consider the following scenario program
\begin{align}
\begin{split}
\operatorname*{minimize}_{x} & \quad x\\
\text{subject to} & \quad x\in\mathcal{X}_{\theta_i}, ~~ i\in[0,M],
\end{split} \label{eq:imperfect:scenario_program}
\end{align}
where $\mathcal{X}_{\theta_i}:=\{ x: \|\J_\C(\bar{p},\theta_i)\| \leq x \}$.
Problem \cref{eq:imperfect:scenario_program} is always feasible,
satisfying \cite[Assumption 2.1]{campi2011sampling}.
Moreover, by removing the $k$ largest elements of $S_\text{max}^M$,
that is, the $k$ largest values of $\|\J_\C(\bar{p},\theta_i)\|$,
the corresponding solution $x$ that solves \cref{eq:imperfect:scenario_program} almost surely violates the removed constraints,
thereby verifying \cite[Assumption 2.2]{campi2011sampling}.
The result then follows directly from \cite[Theorem 2.1]{campi2011sampling}.
\end{proof}
Once an acceptable violation risk $\epsilon$ has been set, and given a number of data points $M$, we can test the condition in \cref{eq:imperfect:scenario_condition} for increasing values of $k$ to understand how many values of $S_\text{max}^M$ can be discarded while fulfilling the required violation probability.
This is summarized in \cref{alg:upper_bound}.
The tradeoff between violation risk and magnitude of the bound is empirically studied in \cref{section:simulation:lateral_control}.

\begin{algorithm}
\caption{Norm upper bound determination.} \label{alg:upper_bound}
\begin{algorithmic}[1]
\Require $\Theta$, $\bar{p}$, $\epsilon,\beta\in(0,1)$, $M>0$.
\Init $k\gets 1$
\State Sample $\theta_i\in \Theta$ for $i=1,\dots,M$.
\State Form $S_\text{max}^M$ as in \cref{eq:imperfect_learning:Smax}.
\State $k_\text{max}=\operatorname*{argmax}_{k \geq 0} \{ k: \text{\cref{eq:imperfect:scenario_condition} holds} \}$
\State \Return $J_\text{max}^{\epsilon,\beta} = $ $k_\text{max}$-largest element of $S_\text{max}^M$.
\end{algorithmic}
\end{algorithm}

If the system is noisy, the upper bound computation in \cref{eq:imperfect_learning:Smax} can be modified to
\begin{align*}
S_{\text{max}}^M := \left\{ \frac{1}{M} \sum_{j=1}^M \max_{J\in \J_\Cb(\bar{p},\theta_i,v_j)} \|J\| : \theta_i \in \Theta^0, ~ i\in\Z_{[0,M]}\right\},
\end{align*}
where each $v_j=(w_j,x_{0,j},y_{-1,j})$ is obtained from the real system \cref{eq:system} with $p=\bar{p}$. 
Unlike \cref{eq:imperfect_learning:Smax}, here we require multiple rollouts of the trained MPC on the real system to collect enough closed-loop trajectories and construct a sample average approximation of the expected cost. 
One could then modify \cref{thm:imperfect_learning} to account for the approximation leveraging finite-sample results that apply to sample average approximation. 
This is beyond the scope of this paper and is left as a direction for future work.


\section{Simulation Results}
\label{section:simulation_results}
\subsection{Random Linear Systems}\label{section:sim:random_linear}

We start by deploying Algorithms \ref{alg:main} and \ref{alg:main_CE} on a set of randomly generated linear systems.
For each simulated system, we start by randomly sampling real poles 
(uniformly at random from the interval $[-0.1,0.1]$) 
of a $4$-dimensional continuous-time single input linear system in controllable canonical form, 
and then discretize the resulting system (exactly) with a sampling time of $0.15\,\text{s}$. 
The system dynamics are given by
\begin{align*}
x_{t+1} = A x_t + B(u_t + w_t),
\end{align*}
where $t\in\Z_{[0,50]}$, $x_0$ is uniformly sampled for each iteration from a ball of radius $1.5$,
and $w_t\in\R$ is uniformly sampled at each time-step such that $|w_t| \leq 0.1$.
We then set $\theta$ equal to the entries of the $A$ and $B$ matrices and randomly generate a nominal value $\theta^0$ of $\theta$ such that $\|\theta^0-\theta\|=0.3 \|\theta\|$.
The MPC problem is given by \cref{eq:MPC} with $N=5$ and
\begin{align*}
\ell_t(x_t,u_t,p) =
\begin{cases}
\|x_t\|_Q^2 + \|u_t\|_R^2, & \text{if } t \neq N,\\
\|x_t\|_P^2, & \text{if } t = N,
\end{cases}
\end{align*}
where $p_2\in\R$, $Q=p_1p_1^\top+10^{-8}I$, $R=p_2^2+10^{-8}$, and $P=p_3p_3^\top+10^{-8}I$, where $p_1$ and $p_3$ are lower-triangular matrices, and $p=\operatorname*{col}(p_1,p_2,p_3,\tilde{\theta})$. We do not consider state constraints and only enforce input constraints  $u_t\in[-1,1]$.
The upper-level cost is given by
\begin{align*}
\mathcal{C}(x,u) = \sum_{t=0}^{T} \|x_t\|^2_{\mathcal{Q}} + \sum_{t=0}^{T-1} \|u_t\|^2_{\mathcal{R}},
\end{align*}
where $\mathcal{Q}=10 I$ and $\mathcal{R}=I$.
We choose $p^0$ to ensure that the initial values of $Q$ and $R$ in the MPC match $\mathcal{Q}$ and $\mathcal{R}$, respectively, $P=Q$, and $\tilde{\theta}=\theta^0$.

We run \cref{alg:main} and \cref{alg:main_CE} for $500$ iterations using gradient descent with stepsizes $\alpha_k = 10^{-3} / k^{0.8}$.
To avoid large overshoots, we clip the norm of the gradient to 50.
This does not hinder the convergence properties as long as the clipping happens finitely many times.

Our tuning algorithm significantly improves the performance of the MPC on the training and testing sets (containing $500$ unseen samples), as can be seen in the two left-most columns of \cref{tab:sim:random_linear_results}.
For a more graphical representation, we took the first 10 experiments in the training set and plotted each as a separate column in \cref{fig:sim:tab_no_ce}, where the yellow, blue, and gray segments of each column represent the cost attained by the omniscient, trained, and untrained algorithm (averaged across all $500$ samples).
In \cref{fig:sim:tab_no_ce} and \cref{tab:sim:random_linear_results}, the suboptimality is computed against an omniscient controller with foreknowledge of $\theta$ and $\{w_t\}$,
whose performance is unattainable with a feedback controller.
In addition, as shown in \cref{tab:sim:random_linear_results}, the tuned MPC significantly outperforms a nominal MPC scheme where the terminal cost $P$ is chosen as the solution of the Riccati equation (using the true costs $\mathcal{Q}$ and $\mathcal{R}$, and the nominal model $A$ and $B$ obtained after running system identification for all $500$ iterations), denoted with DARE in the Table.
Note that the DARE controller attains different costs in the CE and no CE cases because of the different data that the schemes collect, used for the identification of the model.

\begin{table}
\centering
\caption{Relative suboptimality of various controllers to the omniscient controller on training set ($\mathcal{S}_\text{tr}$) and testing set ($\mathcal{S}_\text{te}$).}\label{tab:sim:random_linear_results}
\begin{tabular}{lcccc}
\toprule
& \multicolumn{2}{c}{\textbf{Alg. \ref{alg:main} (no CE)}} & \multicolumn{2}{c}{\textbf{Alg. \ref{alg:main_CE} (CE)}} \\
\cmidrule(lr){2-3} \cmidrule(lr){4-5}
& \emph{Mean} & \emph{Variance} & \emph{Mean} & \emph{Variance} \\
Nominal ($\mathcal{S}_\text{tr}$)    & 8.15 & 12.25 & 8.15 & 12.25 \\
Tuned ($\mathcal{S}_\text{tr}$)      & 0.30 &  0.00 & 0.11 &  0.00 \\
Tuned ($\mathcal{S}_\text{te}$)      & 0.30 &  0.00 & 0.11 &  0.00 \\
DARE ($\mathcal{S}_\text{te}$)       & 1.93 &  0.66 & 1.95 &  0.69 \\
\bottomrule
\end{tabular}
\end{table}

\begin{figure}
\centering
\ifcompile
\begin{tikzpicture}

\begin{axis}[
height=5.5cm,
width=0.95\columnwidth,
legend cell align={left},
legend style={draw=none, fill opacity=0.8, draw opacity=1, text opacity=1, font=\footnotesize, row sep =-2pt},
tick align=outside,
tick pos=left,
xlabel={Experiment},
xmin=-0.544, xmax=9.544,
xtick={0,1,2,3,4,5,6,7,8,9},
xticklabels={1,2,3,4,5,6,7,8,9,10},
ylabel={Cost},
ymin=0, ymax=84098.9911014497,
ylabel style = {font=\footnotesize},
tick label style={font=\footnotesize},
xlabel style = {font=\footnotesize},
legend image post style={scale=0.6},
]
\draw[draw=draw_color_table,fill=best_color_table,line width=0.24pt] (axis cs:-0.35,0) rectangle (axis cs:0.35,3928.6873704377);
\addlegendimage{area legend,draw=draw_color_table,fill=best_color_table,line width=0.24pt}
\addlegendentry{Best (training)}

\draw[draw=draw_color_table,fill=best_color_table,line width=0.24pt] (axis cs:0.65,0) rectangle (axis cs:1.35,3833.24636388634);
\draw[draw=draw_color_table,fill=best_color_table,line width=0.24pt] (axis cs:1.65,0) rectangle (axis cs:2.35,4514.75110707883);
\draw[draw=draw_color_table,fill=best_color_table,line width=0.24pt] (axis cs:2.65,0) rectangle (axis cs:3.35,4921.54471239633);
\draw[draw=draw_color_table,fill=best_color_table,line width=0.24pt] (axis cs:3.65,0) rectangle (axis cs:4.35,4680.44238930516);
\draw[draw=draw_color_table,fill=best_color_table,line width=0.24pt] (axis cs:4.65,0) rectangle (axis cs:5.35,4248.74173232943);
\draw[draw=draw_color_table,fill=best_color_table,line width=0.24pt] (axis cs:5.65,0) rectangle (axis cs:6.35,4402.70911493147);
\draw[draw=draw_color_table,fill=best_color_table,line width=0.24pt] (axis cs:6.65,0) rectangle (axis cs:7.35,4786.02083701967);
\draw[draw=draw_color_table,fill=best_color_table,line width=0.24pt] (axis cs:7.65,0) rectangle (axis cs:8.35,5065.8873775688);
\draw[draw=draw_color_table,fill=best_color_table,line width=0.24pt] (axis cs:8.65,0) rectangle (axis cs:9.35,5335.33391522862);
\draw[draw=draw_color_table,fill=trained_color_table,line width=0.24pt] (axis cs:-0.35,3928.6873704377) rectangle (axis cs:0.35,4936.52940775485);
\addlegendimage{area legend,draw=draw_color_table,fill=trained_color_table,line width=0.24pt}
\addlegendentry{Trained}

\draw[draw=draw_color_table,fill=trained_color_table,line width=0.24pt] (axis cs:0.65,3833.24636388634) rectangle (axis cs:1.35,5082.62973501243);
\draw[draw=draw_color_table,fill=trained_color_table,line width=0.24pt] (axis cs:1.65,4514.75110707883) rectangle (axis cs:2.35,5624.7840123337);
\draw[draw=draw_color_table,fill=trained_color_table,line width=0.24pt] (axis cs:2.65,4921.54471239633) rectangle (axis cs:3.35,6007.03589469812);
\draw[draw=draw_color_table,fill=trained_color_table,line width=0.24pt] (axis cs:3.65,4680.44238930516) rectangle (axis cs:4.35,6237.4515855182);
\draw[draw=draw_color_table,fill=trained_color_table,line width=0.24pt] (axis cs:4.65,4248.74173232943) rectangle (axis cs:5.35,5799.44990038795);
\draw[draw=draw_color_table,fill=trained_color_table,line width=0.24pt] (axis cs:5.65,4402.70911493147) rectangle (axis cs:6.35,5817.72475987227);
\draw[draw=draw_color_table,fill=trained_color_table,line width=0.24pt] (axis cs:6.65,4786.02083701967) rectangle (axis cs:7.35,6103.41297903807);
\draw[draw=draw_color_table,fill=trained_color_table,line width=0.24pt] (axis cs:7.65,5065.8873775688) rectangle (axis cs:8.35,7194.77411410598);
\draw[draw=draw_color_table,fill=trained_color_table,line width=0.24pt] (axis cs:8.65,5335.33391522862) rectangle (axis cs:9.35,6435.96437069286);
\draw[draw=table_dim_gray,fill=table_dark_gray,line width=0.32pt,postaction={pattern=north east lines, pattern color=table_dim_gray}] (axis cs:-0.35,4936.52940775485) rectangle (axis cs:0.35,22580.9761264747);
\addlegendimage{area legend,draw=table_dim_gray,fill=table_dark_gray,line width=0.32pt,postaction={pattern=north east lines, pattern color=table_dim_gray}}
\addlegendentry{Untrained}

\draw[draw=table_dim_gray,fill=table_dark_gray,line width=0.32pt,postaction={pattern=north east lines, pattern color=table_dim_gray}] (axis cs:0.65,5082.62973501243) rectangle (axis cs:1.35,44296.9057115681);
\draw[draw=table_dim_gray,fill=table_dark_gray,line width=0.32pt,postaction={pattern=north east lines, pattern color=table_dim_gray}] (axis cs:1.65,5624.7840123337) rectangle (axis cs:2.35,47409.101591218);
\draw[draw=table_dim_gray,fill=table_dark_gray,line width=0.32pt,postaction={pattern=north east lines, pattern color=table_dim_gray}] (axis cs:2.65,6007.03589469812) rectangle (axis cs:3.35,18706.4403905087);
\draw[draw=table_dim_gray,fill=table_dark_gray,line width=0.32pt,postaction={pattern=north east lines, pattern color=table_dim_gray}] (axis cs:3.65,6237.4515855182) rectangle (axis cs:4.35,80094.2772394759);
\draw[draw=table_dim_gray,fill=table_dark_gray,line width=0.32pt,postaction={pattern=north east lines, pattern color=table_dim_gray}] (axis cs:4.65,5799.44990038795) rectangle (axis cs:5.35,34732.5177069686);
\draw[draw=table_dim_gray,fill=table_dark_gray,line width=0.32pt,postaction={pattern=north east lines, pattern color=table_dim_gray}] (axis cs:5.65,5817.72475987227) rectangle (axis cs:6.35,35061.4543173363);
\draw[draw=table_dim_gray,fill=table_dark_gray,line width=0.32pt,postaction={pattern=north east lines, pattern color=table_dim_gray}] (axis cs:6.65,6103.41297903807) rectangle (axis cs:7.35,51152.184013696);
\draw[draw=table_dim_gray,fill=table_dark_gray,line width=0.32pt,postaction={pattern=north east lines, pattern color=table_dim_gray}] (axis cs:7.65,7194.77411410598) rectangle (axis cs:8.35,33383.9066943932);
\draw[draw=table_dim_gray,fill=table_dark_gray,line width=0.32pt,postaction={pattern=north east lines, pattern color=table_dim_gray}] (axis cs:8.65,6435.96437069286) rectangle (axis cs:9.35,50060.3276450856);
\draw (axis cs:0,1964.34368521885) node[
  scale=0.55,
  text=draw_color_table,
  rotate=0.0
]{-0.26};
\draw (axis cs:0,13758.7527671148) node[
  scale=0.55,
  rotate=0.0,
  fill=white,
  fill opacity=0.7,
  inner sep=1.5pt,
  text=draw_color_table,
  text opacity=1,
]{+4.75};
\draw (axis cs:1,1916.62318194317) node[
  scale=0.55,
  text=draw_color_table,
  rotate=0.0
]{-0.33};
\draw (axis cs:1,24689.7677232903) node[
  scale=0.55,
  rotate=0.0,
  fill=white,
  fill opacity=0.7,
  inner sep=1.5pt,
  text=draw_color_table,
  text opacity=1,
]{+10.56};
\draw (axis cs:2,2257.37555353942) node[
  scale=0.55,
  text=draw_color_table,
  rotate=0.0
]{-0.25};
\draw (axis cs:2,26516.9428017758) node[
  scale=0.55,
  rotate=0.0,
  fill=white,
  fill opacity=0.7,
  inner sep=1.5pt,
  text=draw_color_table,
  text opacity=1,
]{+9.50};
\draw (axis cs:3,2460.77235619817) node[
  scale=0.55,
  text=draw_color_table,
  rotate=0.0
]{-0.22};
\draw (axis cs:3,12356.7381426034) node[
  scale=0.55,
  rotate=0.0,
  fill=white,
  fill opacity=0.7,
  inner sep=1.5pt,
  text=draw_color_table,
  text opacity=1,
]{+2.80};
\draw (axis cs:4,2340.22119465258) node[
  scale=0.55,
  text=draw_color_table,
  rotate=0.0
]{-0.33};
\draw (axis cs:4,43165.864412497) node[
  scale=0.55,
  rotate=0.0,
  fill=white,
  fill opacity=0.7,
  inner sep=1.5pt,
  text=draw_color_table,
  text opacity=1,
]{+16.11};
\draw (axis cs:5,2124.37086616472) node[
  scale=0.55,
  text=draw_color_table,
  rotate=0.0
]{-0.36};
\draw (axis cs:5,20265.9838036783) node[
  scale=0.55,
  rotate=0.0,
  fill=white,
  fill opacity=0.7,
  inner sep=1.5pt,
  text=draw_color_table,
  text opacity=1,
]{+7.17};
\draw (axis cs:6,2201.35455746573) node[
  scale=0.55,
  text=draw_color_table,
  rotate=0.0
]{-0.32};
\draw (axis cs:6,20439.5895386043) node[
  scale=0.55,
  rotate=0.0,
  fill=white,
  fill opacity=0.7,
  inner sep=1.5pt,
  text=draw_color_table,
  text opacity=1,
]{+6.96};
\draw (axis cs:7,2393.01041850983) node[
  scale=0.55,
  text=draw_color_table,
  rotate=0.0
]{-0.28};
\draw (axis cs:7,28627.798496367) node[
  scale=0.55,
  rotate=0.0,
  fill=white,
  fill opacity=0.7,
  inner sep=1.5pt,
  text=draw_color_table,
  text opacity=1,
]{+9.69};
\draw (axis cs:8,2532.9436887844) node[
  scale=0.55,
  text=draw_color_table,
  rotate=0.0
]{-0.42};
\draw (axis cs:8,20289.3404042496) node[
  scale=0.55,
  rotate=0.0,
  fill=white,
  fill opacity=0.7,
  inner sep=1.5pt,
  text=draw_color_table,
  text opacity=1,
]{+5.59};
\draw (axis cs:9,2667.66695761431) node[
  scale=0.55,
  text=draw_color_table,
  rotate=0.0
]{-0.21};
\draw (axis cs:9,28248.1460078892) node[
  scale=0.55,
  rotate=0.0,
  fill=white,
  fill opacity=0.7,
  inner sep=1.5pt,
  text=draw_color_table,
  text opacity=1,
]{+8.38};
\end{axis}

\end{tikzpicture}
\else
    \includegraphics{graphics/no_ce_tab/no_ce_tab_main.pdf}
\fi
\caption{Cost ($y$-axis) and relative suboptimality (typed number in columns) of the MPC trained with \cref{alg:main} (blue) against the performance of the omniscient controller (yellow) and untrained algorithm (gray), on ten randomly generated linear systems. The $+xx.xx$ number in the grey section of any column indicates that the untrained controller performed $xx.xx$ times worse than the trained controller in that experiment. The $-0.yy$ number in the yellow section indicates that the tuned controller performed $0.yy$ times worse than the omniscient controller.}\label{fig:sim:tab_no_ce}
\end{figure}

In \cref{fig:sim:tab_ce} and the two right-most columns of \cref{tab:sim:random_linear_results} we repeat the same experiments using \cref{alg:main_CE} (thus, the results for the untrained algorithm remain the same).
The CE variant consistently outperforms \cref{alg:main} across nearly all scenarios (with an average suboptimality of $11\%$ instead of $30\%$).
This improvement stems from the reduced complexity of the CE formulation: since the optimization parameter no longer includes the prediction model, the decision space is significantly smaller, and the CE algorithm tends to locate local minimizers more effectively.

\begin{figure}
\centering
\ifcompile
\begin{tikzpicture}

\begin{axis}[
legend cell align={left},
legend style={fill opacity=0.8, draw opacity=1, text opacity=1, draw=none, font=\footnotesize, row sep =-2pt},
height=5.5cm,
width=0.95\columnwidth,
tick align=outside,
tick pos=left,
xlabel={Experiment},
xmin=-0.544, xmax=9.544,
xtick={0,1,2,3,4,5,6,7,8,9},
xticklabels={1,2,3,4,5,6,7,8,9,10},
ylabel={Cost},
ymin=0, ymax=84098.9911014497,
ylabel style = {font=\footnotesize},
tick label style={font=\footnotesize},
xlabel style = {font=\footnotesize},
legend image post style={scale=0.6}
]
\draw[draw=black,fill=best_color_table,line width=0.24pt] (axis cs:-0.35,0) rectangle (axis cs:0.35,3928.6873704377);
\addlegendimage{area legend,draw=black,fill=best_color_table,line width=0.24pt}
\addlegendentry{Best (training)}

\draw[draw=black,fill=best_color_table,line width=0.24pt] (axis cs:0.65,0) rectangle (axis cs:1.35,3833.24636388634);
\draw[draw=black,fill=best_color_table,line width=0.24pt] (axis cs:1.65,0) rectangle (axis cs:2.35,4514.75110707883);
\draw[draw=black,fill=best_color_table,line width=0.24pt] (axis cs:2.65,0) rectangle (axis cs:3.35,4921.54471239633);
\draw[draw=black,fill=best_color_table,line width=0.24pt] (axis cs:3.65,0) rectangle (axis cs:4.35,4680.44238930516);
\draw[draw=black,fill=best_color_table,line width=0.24pt] (axis cs:4.65,0) rectangle (axis cs:5.35,4248.74173232943);
\draw[draw=black,fill=best_color_table,line width=0.24pt] (axis cs:5.65,0) rectangle (axis cs:6.35,4402.70911493147);
\draw[draw=black,fill=best_color_table,line width=0.24pt] (axis cs:6.65,0) rectangle (axis cs:7.35,4786.02083701967);
\draw[draw=black,fill=best_color_table,line width=0.24pt] (axis cs:7.65,0) rectangle (axis cs:8.35,5065.8873775688);
\draw[draw=black,fill=best_color_table,line width=0.24pt] (axis cs:8.65,0) rectangle (axis cs:9.35,5335.33391522862);
\draw[draw=black,fill=trained_color_table,line width=0.24pt] (axis cs:-0.35,3928.6873704377) rectangle (axis cs:0.35,4361.77931508042);
\addlegendimage{area legend,draw=black,fill=trained_color_table,line width=0.24pt}
\addlegendentry{Trained}

\draw[draw=black,fill=trained_color_table,line width=0.24pt] (axis cs:0.65,3833.24636388634) rectangle (axis cs:1.35,4191.45984499539);
\draw[draw=black,fill=trained_color_table,line width=0.24pt] (axis cs:1.65,4514.75110707883) rectangle (axis cs:2.35,5252.82902706692);
\draw[draw=black,fill=trained_color_table,line width=0.24pt] (axis cs:2.65,4921.54471239633) rectangle (axis cs:3.35,5543.17293747146);
\draw[draw=black,fill=trained_color_table,line width=0.24pt] (axis cs:3.65,4680.44238930516) rectangle (axis cs:4.35,5158.62927690174);
\draw[draw=black,fill=trained_color_table,line width=0.24pt] (axis cs:4.65,4248.74173232943) rectangle (axis cs:5.35,4633.05309347758);
\draw[draw=black,fill=trained_color_table,line width=0.24pt] (axis cs:5.65,4402.70911493147) rectangle (axis cs:6.35,4829.42590974158);
\draw[draw=black,fill=trained_color_table,line width=0.24pt] (axis cs:6.65,4786.02083701967) rectangle (axis cs:7.35,5190.63876666606);
\draw[draw=black,fill=trained_color_table,line width=0.24pt] (axis cs:7.65,5065.8873775688) rectangle (axis cs:8.35,5813.07551366022);
\draw[draw=black,fill=trained_color_table,line width=0.24pt] (axis cs:8.65,5335.33391522862) rectangle (axis cs:9.35,6021.71987671094);
\draw[draw=table_dim_gray,fill=table_dark_gray,line width=0.32pt,postaction={pattern=north east lines, pattern color=table_dim_gray}] (axis cs:-0.35,4361.77931508042) rectangle (axis cs:0.35,22580.9761264747);
\addlegendimage{area legend,draw=table_dim_gray,fill=table_dark_gray,line width=0.32pt,postaction={pattern=north east lines, pattern color=table_dim_gray}}
\addlegendentry{Untrained}

\draw[draw=table_dim_gray,fill=table_dark_gray,line width=0.32pt,postaction={pattern=north east lines, pattern color=table_dim_gray}] (axis cs:0.65,4191.45984499539) rectangle (axis cs:1.35,44296.9057115681);
\draw[draw=table_dim_gray,fill=table_dark_gray,line width=0.32pt,postaction={pattern=north east lines, pattern color=table_dim_gray}] (axis cs:1.65,5252.82902706692) rectangle (axis cs:2.35,47409.101591218);
\draw[draw=table_dim_gray,fill=table_dark_gray,line width=0.32pt,postaction={pattern=north east lines, pattern color=table_dim_gray}] (axis cs:2.65,5543.17293747146) rectangle (axis cs:3.35,18706.4403905087);
\draw[draw=table_dim_gray,fill=table_dark_gray,line width=0.32pt,postaction={pattern=north east lines, pattern color=table_dim_gray}] (axis cs:3.65,5158.62927690174) rectangle (axis cs:4.35,80094.2772394759);
\draw[draw=table_dim_gray,fill=table_dark_gray,line width=0.32pt,postaction={pattern=north east lines, pattern color=table_dim_gray}] (axis cs:4.65,4633.05309347758) rectangle (axis cs:5.35,34732.5177069686);
\draw[draw=table_dim_gray,fill=table_dark_gray,line width=0.32pt,postaction={pattern=north east lines, pattern color=table_dim_gray}] (axis cs:5.65,4829.42590974158) rectangle (axis cs:6.35,35061.4543173363);
\draw[draw=table_dim_gray,fill=table_dark_gray,line width=0.32pt,postaction={pattern=north east lines, pattern color=table_dim_gray}] (axis cs:6.65,5190.63876666606) rectangle (axis cs:7.35,51152.184013696);
\draw[draw=table_dim_gray,fill=table_dark_gray,line width=0.32pt,postaction={pattern=north east lines, pattern color=table_dim_gray}] (axis cs:7.65,5813.07551366022) rectangle (axis cs:8.35,33383.9066943932);
\draw[draw=table_dim_gray,fill=table_dark_gray,line width=0.32pt,postaction={pattern=north east lines, pattern color=table_dim_gray}] (axis cs:8.65,6021.71987671094) rectangle (axis cs:9.35,50060.3276450856);
\draw (axis cs:0,1964.34368521885) node[
  scale=0.55,
  text=black,
  rotate=0.0
]{-0.11};
\draw (axis cs:0,13471.3777207776) node[
  scale=0.55,
  rotate=0.0,
  fill=white,
  fill opacity=0.7,
  inner sep=1.5pt,
  text=black,
  text opacity=1,
]{+4.75};
\draw (axis cs:1,1916.62318194317) node[
  scale=0.55,
  text=black,
  rotate=0.0
]{-0.09};
\draw (axis cs:1,24244.1827782818) node[
  scale=0.55,
  rotate=0.0,
  fill=white,
  fill opacity=0.7,
  inner sep=1.5pt,
  text=black,
  text opacity=1,
]{+10.56};
\draw (axis cs:2,2257.37555353942) node[
  scale=0.55,
  text=black,
  rotate=0.0
]{-0.16};
\draw (axis cs:2,26330.9653091424) node[
  scale=0.55,
  rotate=0.0,
  fill=white,
  fill opacity=0.7,
  inner sep=1.5pt,
  text=black,
  text opacity=1,
]{+9.50};
\draw (axis cs:3,2460.77235619817) node[
  scale=0.55,
  text=black,
  rotate=0.0
]{-0.13};
\draw (axis cs:3,12124.8066639901) node[
  scale=0.55,
  rotate=0.0,
  fill=white,
  fill opacity=0.7,
  inner sep=1.5pt,
  text=black,
  text opacity=1,
]{+2.80};
\draw (axis cs:4,2340.22119465258) node[
  scale=0.55,
  text=black,
  rotate=0.0
]{-0.10};
\draw (axis cs:4,42626.4532581888) node[
  scale=0.55,
  rotate=0.0,
  fill=white,
  fill opacity=0.7,
  inner sep=1.5pt,
  text=black,
  text opacity=1,
]{+16.11};
\draw (axis cs:5,2124.37086616472) node[
  scale=0.55,
  text=black,
  rotate=0.0
]{-0.09};
\draw (axis cs:5,19682.7854002231) node[
  scale=0.55,
  rotate=0.0,
  fill=white,
  fill opacity=0.7,
  inner sep=1.5pt,
  text=black,
  text opacity=1,
]{+7.17};
\draw (axis cs:6,2201.35455746573) node[
  scale=0.55,
  text=black,
  rotate=0.0
]{-0.10};
\draw (axis cs:6,19945.4401135389) node[
  scale=0.55,
  rotate=0.0,
  fill=white,
  fill opacity=0.7,
  inner sep=1.5pt,
  text=black,
  text opacity=1,
]{+6.96};
\draw (axis cs:7,2393.01041850983) node[
  scale=0.55,
  text=black,
  rotate=0.0
]{-0.08};
\draw (axis cs:7,28171.411390181) node[
  scale=0.55,
  rotate=0.0,
  fill=white,
  fill opacity=0.7,
  inner sep=1.5pt,
  text=black,
  text opacity=1,
]{+9.69};
\draw (axis cs:8,2532.9436887844) node[
  scale=0.55,
  text=black,
  rotate=0.0
]{-0.15};
\draw (axis cs:8,19598.4911040267) node[
  scale=0.55,
  rotate=0.0,
  fill=white,
  fill opacity=0.7,
  inner sep=1.5pt,
  text=black,
  text opacity=1,
]{+5.59};
\draw (axis cs:9,2667.66695761431) node[
  scale=0.55,
  text=black,
  rotate=0.0
]{-0.13};
\draw (axis cs:9,28041.0237608983) node[
  scale=0.55,
  rotate=0.0,
  fill=white,
  fill opacity=0.7,
  inner sep=1.5pt,
  text=black,
  text opacity=1,
]{+8.38};
\end{axis}

\end{tikzpicture}
\else
    \includegraphics{graphics/ce_tab/ce_tab_main.pdf}
\fi
\caption{Cost ($y$-axis) and relative suboptimality (typed number in columns) of the MPC trained with \cref{alg:main_CE} (blue) against the performance of the omniscient controller (yellow) and untrained algorithm (gray), on ten randomly generated linear systems.}\label{fig:sim:tab_ce}
\end{figure}

\subsection{Nonlinear Quadcopter}

Next, we deploy the CE variant on a $12$-dimensional nonlinear quadcopter taken from \cite{abdulkareem2022modeling}.
The state of the system is given by the position and velocity in the inertial frame, the Euler angles, and the angular velocity in the body frame.
Following \cite{abdulkareem2022modeling}, we denote with $p_x$, $p_y$, $p_z$, $v_x$, $v_y$, $v_z$ the position and velocity along the three axes,
and with $\phi$, $\vartheta$, $\psi$, $p$, $q$, $r$ the Euler angles and the angular rates in the body frame.
The control inputs of the system are the rotation speeds $\omega_1$, $\omega_2$, $\omega_3$, $\omega_4$, of each of the four propellers.
The thrust $u_T$ produced by the propellers is aligned with the $z$-axis of the body frame (i.e., pointing upwards with respect to the orientation of the drone) and given by $u_T=\operatorname*{col}(0,0,k_t \sum_{i=1}^{4} \omega_i^2)$, where $k_t$ is the thrust coefficient.
The propellers additionally produce a torque on the drone, given by
\begin{align*}
\tau = \begin{bmatrix}
\tau_\phi \\ \tau_\vartheta \\ \tau_\psi
\end{bmatrix} = 
\begin{bmatrix}
l k_t (\omega_4^2-\omega_2^2)\\ lk_t (\omega_3^2-\omega_1^2) \\ k_b (\omega_2^2-\omega_1^2+\omega_3^2-\omega_4^2)
\end{bmatrix},
\end{align*}
where $k_b$ is the drag coefficient and $l$ is the lateral length of the drone.
The linear acceleration is
\begin{align*}
\dot{v} & = R_z(\phi,\vartheta,\psi) \frac{u_T}{m} - k_d v - g,
\end{align*}
where $v=\operatorname*{col}(v_x,v_y,v_z)$, $g=\operatorname*{col}(0,0,9.8)$ is the gravitational acceleration, $k_d$ is the air resistance coefficient, $m$ is the mass of the drone, and $R_z(\phi,\vartheta,\psi) u_T$ provides the thrust $u_T$ in the inertial frame, where $R_z(\phi,\vartheta,\psi)$ is the third column of the rotation matrix from the body frame to the inertial frame
\begin{align*}
R_z(\phi,\vartheta,\psi) = \begin{bmatrix}
\cos{\psi}\sin{\vartheta}\cos{\phi} + \sin{\psi}\sin {\phi} \\
\sin{\psi}\sin{\vartheta}\cos{\phi} - \cos{\psi}\sin {\phi} \\
\cos {\vartheta} \cos (\phi) 
\end{bmatrix}.
\end{align*}
The rotational dynamics are given by
\begin{align*}
\dot{\omega}_B = I^{-1} (- \omega_B \times I \omega_B - J_r \omega_B \omega_r + \tau),
\end{align*}
where $\omega_B=\operatorname*{col}(p,q,r)$ is the angular velocity in the body frame, $I=\operatorname*{diag}(I_{xx},I_{yy},I_{zz})$ is the inertia matrix (under the assumption that the drone is axis symmetrical), $J_r$ is a constant, and $\omega_r=\omega_2-\omega_1+\omega_4-\omega_3$.

To obtain \cref{eq:system}, we can set $\theta=(\theta_i)_{i=0}^{11}$ and write
\begin{align*}
\dot{x} = \begin{bmatrix}
0_3\\
\theta_0 [\sum_{i=1}^{4} R_z\omega_i^2 - g] - \operatorname*{col}(\theta_1 v_x , \theta_2 v_y, \theta_3 v_z) \\
\omega_\eta(\phi,\vartheta,\psi) \omega_B\\
\theta_4 qr - \theta_5 q \omega_r + \theta_6 (\omega_4^2-\omega_2^2)\\
\theta_7 pr + \theta_8 p \omega_r + \theta_9 (\omega_3^2-\omega_1^2)\\
\theta_{10} pq + \theta_{11} (\omega_2^2-\omega_1^2+\omega_3^2-\omega_4^2)
\end{bmatrix}+
\begin{bmatrix}
v \\ 0_3 \\ 0_3 \\ 0 \\ 0 \\ 0
\end{bmatrix}
\end{align*}
where $x=\operatorname*{col}(p_x,p_y,p_z,v,\phi,\vartheta,\psi,\omega_B)$ and
\begin{align*}
\omega_\eta(\phi,\vartheta,\psi) = \begin{bmatrix}
1 & 0 & -\sin{\vartheta}\\
0 & \cos{\phi} & -\sin{\phi}\cos{\vartheta}\\
0 & -\sin{\phi} & \cos{\phi}\cos{\vartheta}
\end{bmatrix}
\end{align*}
relates Euler angle rates to angular velocity in the body frame.
We discretize the dynamics using Euler's forward scheme and a sampling time of $0.1\,\text{s}$.
We assume that $\theta$ is unknown and randomly choose $\theta^0$ such that $\|\theta^0-\theta\|=0.5 \|\theta\|$.
The true values of $\theta$ are taken from \cite[Table 1]{abdulkareem2022modeling}.

For the MPC we choose $N=12$ and
\begin{align*}
\ell_t(x_t,u_t,p) =
\begin{cases}
\|x_t-x_\text{ref}\|_Q^2 + \|u_t-u_\text{ref}\|_R^2, & \text{if } t \neq N,\\
\|x_t-x_\text{ref}\|_P^2, & \text{if } t = N,
\end{cases}
\end{align*}
where $x_\text{ref}=(-6,-3.5,0,0_9)$ and $u_\text{ref}$ is the input required to maintain the drone at a hovering state (which we assume to be available).
We enforce the constraints $\omega_i\in [0,630]$, $v_x,v_y,v_z\in[-2,2]$, $\phi,\vartheta,\psi\in [-\pi/4,\pi/4]$, and $p,q,r\in[-\pi/8,\pi/8]$.
The parameter $p$ is chosen as $p=(p_1,p_2,p_3)$ where $p_1$, $p_2$, and $p_3$ are as in \cref{section:sim:random_linear}. 
The upper-level horizon is set to $T=100$, and the upper-level cost is 
\begin{align*}
\mathcal{C}(x,u) & = \sum_{t=0}^{T} c_3\operatorname*{dist}(x_t,\mathcal{X})^2 + \|x_t-x_\text{ref}\|^2_{\mathcal{Q}}+ \sum_{t=0}^{T-1} \|u_t-u_\text{ref}\|^2_{\mathcal{R}}
\end{align*}
where $c_3=200$ (chosen empirically), $\mathcal{Q}=10I$ and $\mathcal{R}=I$.

We run \cref{alg:main_CE} for $200$ iterations with $\rho=5\cdot 10^{-5}$ and $\eta=0.6$. A comparison between the angle and position trajectories of the MPC at iteration $0$ and iteration $200$ can be seen in \cref{fig:quad_state_trajectories}. Observe how the trajectories have effectively converged to the ones obtained with an omniscient controller that knows $\theta$ and uses a large prediction horizon $T$ (denoted in red in the top three plots, and in the lines with square markers in the bottom plot).
From \cref{fig:quad_convergence}, we can see that the RLS procedure (using $\lambda=10^{-8}$) successfully learns the model (achieving an estimation error smaller than $1\%$ in about $40$ iterations), whereas the closed-loop cost decreases and approaches the best achievable.
The convergence behavior is consistent with the one obtained if the model was known at iteration 0 (orange line in the plot above) thus validating our theoretical findings.

To further assess the capabilities of the proposed algorithm, we repeated the experiment by adding to the upper-level cost the nonconvex term $10^{-7}\sum_{t=0}^T \sum_{i=1}^4 \omega_i^3$, which penalizes the energy consumption associated with propeller usage.
Unlike the previous example, the MPC formulation in this case does not account for this additional energy penalty in its cost function.
Nevertheless, our algorithm achieves a closed-loop cost of 10594.67, compared to the optimal value of 10589.24, demonstrating that the proposed tuning method can produce effective MPC controllers even if the MPC cost function does not exactly match the upper-level objective.
The closed-loop trajectories of the three Euler angles after training are shown in \cref{fig:quad_energy}.

\begin{figure*}
\centering
\ifcompile
\begin{tikzpicture}

\begin{groupplot}[
  anchor=north west,
  group style={
    group size=3 by 1,
    horizontal sep=0.9cm
  },
  width=0.32\textwidth,
  height=0.28\textwidth
]

\nextgroupplot[
  legend cell align={left},
  legend style={
    fill opacity=0.8,
    draw opacity=1,
    text opacity=1,
    at={(0.97,0.05)},
    anchor=south east,
    font=\footnotesize,
    row sep =-2pt,
    legend columns = 2,
  },
  title style={yshift=-1ex},
  tick align=outside,
  tick pos=left,
  title={\footnotesize \textbf{Roll}},
  xlabel={Time [s]},
  xmajorgrids,
  xmin=0, xmax=10.1,
  ylabel={Euler Angle [rad]},
  ymajorgrids,
  ymin=-0.358415644146169, ymax=0.337484094429626,
  ylabel style = {font=\footnotesize},
  tick label style={font=\footnotesize},
  xlabel style = {font=\footnotesize},
  height=4cm,
]
\addplot [untrained] table {quad_roll_untrained.dat};
\addlegendentry{Untrained}
\addplot [trained] table {quad_roll_trained.dat};
\addlegendentry{Trained}
\addplot [best] table {quad_roll_best.dat};
\addlegendentry{Best}

\nextgroupplot[
  scaled y ticks=manual:{}{\pgfmathparse{#1}},
  tick align=outside,
  tick pos=left,
  title={\footnotesize \textbf{Pitch}},
  xlabel={Time [s]},
  xmajorgrids,
  xmin=0, xmax=10.1,
  ymajorgrids,
  ymin=-0.358415644146169, ymax=0.337484094429626,
  ytick style={draw=none},
  yticklabels={},
  tick label style={font=\footnotesize},
  xlabel style = {font=\footnotesize},
  title style={yshift=-1ex},
  height=4cm,
]
\addplot [untrained] table {quad_pitch_untrained.dat};
\addplot [trained] table {quad_pitch_trained.dat};
\addplot [best] table {quad_pitch_best.dat};

\nextgroupplot[
  scaled y ticks=manual:{}{\pgfmathparse{#1}},
  tick align=outside,
  tick pos=left,
  title={\footnotesize \textbf{Yaw}},
  xlabel={Time [s]},
  xmajorgrids,
  xmin=0, xmax=10.1,
  ymajorgrids,
  ymin=-0.358415644146169, ymax=0.337484094429626,
  ytick style={draw=none},
  yticklabels={},
  xlabel style = {font=\footnotesize},
  tick label style={font=\footnotesize},
  title style={yshift=-1ex},
  height=4cm,
]
\addplot [untrained] table {quad_yaw_untrained.dat};
\addplot [trained] table {quad_yaw_trained.dat};
\addplot [best] table {quad_yaw_best.dat};

\end{groupplot}

\begin{axis}[
  at={(group c1r1.south west)},
  anchor = north west,
  yshift=-3.5cm,
  width=0.88\textwidth,
  height=4cm,
  legend cell align={left},
  legend style={
    fill opacity=0.8,
    draw opacity=1,
    text opacity=1,
    at={(0.03,0.03)},
    anchor=south west,
    font=\footnotesize
  },
  ylabel style = {font=\footnotesize},
  tick label style={font=\footnotesize},
  xlabel style = {font=\footnotesize},
  tick align=outside,
  tick pos=left,
  xlabel={Time [s]},
  xmajorgrids,
  xmin=0, xmax=10.1,
  ylabel={Position [m]},
  ymajorgrids,
  ymin=-6.39924045583641, ymax=2.39996383123031,
  yshift=2cm,
  ylabel style={at={(-0.0705,0.5)}}
  ]
  \addplot [best_x] table {quad_x_best.dat};
  \addplot [trained_x] table {quad_x_trained.dat};
  \addplot [untrained_x] table {quad_x_untrained.dat};
  \addlegendentry{$x$}
  
  \addplot [best_y] table {quad_y_best.dat};
  \addplot [trained_y] table {quad_y_trained.dat};
  \addplot [untrained_y] table {quad_y_untrained.dat};
  \addlegendentry{$y$}
  
  \addplot [best_z] table {quad_z_best.dat};
  \addplot [trained_z] table {quad_z_trained.dat};
  \addplot [untrained_z] table {quad_z_untrained.dat};
  \addlegendentry{$z$}
\end{axis}

\end{tikzpicture}
\else
    \includegraphics{graphics/quad2/quad2_main.pdf}
\fi
\caption{Angle (top) and position (bottom) trajectories for the untrained / trained / best algorithm. In the position plot, the untrained trajectories are the dashed lines, the trained ones are solid lines, and the optimal ones are the dash-dotted lines with square markers added since they fall mostly below the solid lines.}\label{fig:quad_state_trajectories}
\end{figure*}

\begin{figure}
\centering
\ifcompile
\begin{tikzpicture}

\begin{groupplot}[group style={group size=1 by 2, vertical sep = 15pt}]
\nextgroupplot[
legend cell align={left},
legend style={
    fill opacity=0.8, 
    draw opacity=1, 
    text opacity=1, 
    font=\footnotesize,
    row sep =-2pt,
    legend columns = 2,
},
tick align=outside,
tick pos=left,
xmajorgrids,
xmin=-2, xmax=50,
ylabel={Cost},
ymajorgrids,
ymin=980.946569658599, ymax=1335.66406227098,
width=7.5cm, height=4cm,
ylabel style = {font=\footnotesize},
tick label style={font=\footnotesize},
xtick style={draw=none},
xticklabels={},
ylabel style={at={(-0.15,0.5)}},
scaled y ticks=base 10:-2,
]
\addplot [iteration_exact]
table {%
0 1068.98958103349
1 1149.89920213064
2 1057.73524358108
3 1075.56820092132
4 1055.16891693307
5 1050.89468903128
6 1033.24522989258
7 1028.1134937134
8 1021.78655137795
9 1020.20122016105
10 1043.73463086466
11 1013.45235826377
12 1013.31688946459
13 1010.81212719701
14 1019.08848047498
15 1008.62612617758
16 1007.00355779954
17 1016.6553494027
18 1006.04470422419
19 1004.05049938465
20 1003.30902137322
21 1002.63689001796
22 1001.97227007209
23 1001.40165772535
24 1001.07125818665
25 1000.91892301157
26 1000.6205921814
27 1000.39592657442
28 1000.38255228839
29 1000.46354395964
30 1000.29505956218
31 1000.20566700265
32 1000.15992771192
33 1000.11715298445
34 1000.07655576484
35 1000.03824816908
36 1000.00211024647
37 999.967821228951
38 999.934802725355
39 999.903456904528
40 999.873429652913
41 999.844365299274
42 999.814652200572
43 999.771473067094
44 999.735955800132
45 999.700100031301
46 999.667218597534
47 999.636117407833
48 999.606543220906
49 999.572415998755
};
\addlegendentry{Exact model}
\addplot [iteration_inexact]
table {%
0 1319.54053987951
1 1130.51799228013
2 1054.0766531439
3 1048.32134984234
4 1045.23485137151
5 1046.53037263742
6 1039.95682821459
7 1027.60547666521
8 1022.0716164541
9 1018.59999650687
10 1016.65675978558
11 1016.14755482136
12 1013.59038735683
13 1015.66244884574
14 1009.79985661786
15 1011.72637769609
16 1007.73863979884
17 1021.29707698504
18 1007.77805835482
19 1004.13091609878
20 1003.39463909798
21 1008.44960392914
22 1001.84370029582
23 1001.30952646542
24 1000.94076264739
25 1000.77388300323
26 1000.64617416748
27 1000.51396888882
28 1000.30180771316
29 1000.35112086488
30 1000.13256579337
31 1000.08882195464
32 1000.06442728642
33 1000.00765656428
34 999.974275264021
35 999.949783707476
36 999.89732678339
37 999.865979193704
38 999.83610330151
39 999.807707253864
40 999.780627477698
41 999.754596955961
42 999.728969591207
43 999.69657011452
44 999.666246584668
45 999.639630558373
46 999.611261754529
47 999.585574417723
48 999.559609895658
49 999.533674668685
};
\addlegendentry{Inexact model}
\addplot [iteration_best]
table {%
0 997.070092050071
50 997.070092050071
};
\addlegendentry{Best achievable}

\nextgroupplot[
ylabel style = {font=\footnotesize},
tick label style={font=\footnotesize},
xlabel style = {font=\footnotesize},
tick align=outside,
tick pos=left,
xlabel={Iteration},
xmajorgrids,
xmin=-2, xmax=50,
xtick pos=left,
width=7.5cm, height=4cm,
ylabel={Estimation error},
ymajorgrids,
ymin=-0.00152655831483129, ymax=0.0471960831498623,
ylabel style={at={(-0.15,0.5)}},
ymode=log,
]
\addplot [estimation_error]
table {%
0 0.0449814176287399
1 0.0246352280577152
2 0.0146189350652025
3 0.0111059816701039
4 0.00975223353514615
5 0.00904218195866514
6 0.00852271865323181
7 0.00777631090355535
8 0.00703675814184719
9 0.0060748064080701
10 0.00502232657599654
11 0.00466113292253617
12 0.00400306347367971
13 0.00377673474854914
14 0.00329592230226074
15 0.00310049964647305
16 0.00270194322842253
17 0.00259758527173917
18 0.00245270962878607
19 0.002278097302501
20 0.00211868906842661
21 0.00203543801032554
22 0.00191200106399977
23 0.00177205376892065
24 0.00165981539544737
25 0.00155572046510964
26 0.0014664230392447
27 0.00138793943908598
28 0.00132261492176496
29 0.00126528786081048
30 0.00121130340478765
31 0.00116250846710527
32 0.00111735826071116
33 0.00107583455487605
34 0.00103822578383781
35 0.00100358588622931
36 0.000970658499328277
37 0.000940118182955826
38 0.000911301892623581
39 0.000884417931440329
40 0.000858954272193892
41 0.000835092774656742
42 0.00081245609867242
43 0.000791883000841253
44 0.000772030955712534
45 0.000753456942533511
46 0.000735767320587209
47 0.000718930260131937
48 0.000703100933180951
49 0.000688107206291145
};
\end{groupplot}

\end{tikzpicture}
\else
    \includegraphics{graphics/quad1/quad1_main.pdf}
\fi
\caption{Convergence of cost (top) and estimation error (bottom) along iterations.}\label{fig:quad_convergence}
\end{figure}

\begin{figure*}
\centering
\ifcompile
\begin{tikzpicture}

\begin{groupplot}[
  anchor=north west,
  group style={
    group size=3 by 1,
    horizontal sep=0.9cm
  },
  width=0.32\textwidth,
  height=0.28\textwidth
]

\nextgroupplot[
  legend cell align={left},
  legend style={
    fill opacity=0.8,
    draw opacity=1,
    text opacity=1,
    at={(0.97,0.05)},
    anchor=south east,
    font=\footnotesize,
    row sep =-2pt,
    legend columns = 1,
  },
  title style={yshift=-1ex},
  tick align=outside,
  tick pos=left,
  title={\footnotesize \textbf{Roll}},
  xlabel={Time [s]},
  xmajorgrids,
  xmin=0, xmax=10.1,
  ylabel={Euler Angle [rad]},
  ymajorgrids,
  ymin=-0.358415644146169, ymax=0.337484094429626,
  ylabel style = {font=\footnotesize},
  tick label style={font=\footnotesize},
  xlabel style = {font=\footnotesize},
  height=4cm,
]
\addplot [nominal] table {roll_nominal.dat};
\addlegendentry{Nominal}
\addplot [energy_informed] table {roll_energy.dat};
\addlegendentry{Energy-informed}

\nextgroupplot[
  scaled y ticks=manual:{}{\pgfmathparse{#1}},
  tick align=outside,
  tick pos=left,
  title={\footnotesize \textbf{Pitch}},
  xlabel={Time [s]},
  xmajorgrids,
  xmin=0, xmax=10.1,
  ymajorgrids,
  ymin=-0.358415644146169, ymax=0.337484094429626,
  ytick style={draw=none},
  yticklabels={},
  tick label style={font=\footnotesize},
  xlabel style = {font=\footnotesize},
  title style={yshift=-1ex},
  height=4cm,
]
\addplot [nominal] table {pitch_nominal.dat};
\addplot [energy_informed] table {pitch_energy.dat};

\nextgroupplot[
  scaled y ticks=manual:{}{\pgfmathparse{#1}},
  tick align=outside,
  tick pos=left,
  title={\footnotesize \textbf{Yaw}},
  xlabel={Time [s]},
  xmajorgrids,
  xmin=0, xmax=10.1,
  ymajorgrids,
  ymin=-0.358415644146169, ymax=0.337484094429626,
  ytick style={draw=none},
  yticklabels={},
  xlabel style = {font=\footnotesize},
  tick label style={font=\footnotesize},
  title style={yshift=-1ex},
  height=4cm,
]
\addplot [nominal] table {yaw_nominal.dat};
\addplot [energy_informed] table {yaw_energy.dat};

\end{groupplot}

\end{tikzpicture}
\else
    \includegraphics{graphics/quad3/quad3_main.pdf}
\fi
\caption{Euler angles of tuned MPC with (gray solid line) and without (red dash-dotted line) energy penalization.}\label{fig:quad_energy}
\end{figure*}

\subsection{Lateral Control of an Autonomous Vehicle} \label{section:simulation:lateral_control}

As a last simulation example we consider an autonomous car racing on a curvilinear track.
We use the bicycle model proposed in \cite[Section 4.12]{snider2009automatic}, 
where the effect of the path curvature is treated as an external disturbance.
We assume that the longitudinal velocity is controlled separately and set it as a constant through the entire track 
and focus on the linearized lateral dynamics
\begin{align}
\begin{bmatrix} \dot{e}_\text{cg} \\ \ddot{e}_\text{cg} \\ \dot{\theta}_\text{e} \\ \ddot{\theta}_\text{e} \end{bmatrix} = 
\begin{bmatrix} 0 & 1 & 0 & 0 \\ 0 & a_1 & a_2 & a_3 \\ 0 & 0 & 0 & 1 \\ 0 & a_4 & a_5 & a_6 \end{bmatrix} 
\begin{bmatrix} e_\text{cg} \\ \dot{e}_\text{cg} \\ \theta_\text{e} \\ \dot{\theta}_\text{e} \end{bmatrix}
+ \begin{bmatrix} 0 \\ b_1 \\ 0 \\ b_2 \end{bmatrix} u + \begin{bmatrix} 0 \\ a_3-v_x \\ 0 \\ a_6 \end{bmatrix} r,
\label{eq:simulation:car_model}
\end{align}
where $e_\text{cg}$ is the lateral tracking error, that is, 
the orthogonal distance between the center of gravity of the car and the center of the track,
and $\theta_\text{e}$ is the orientation error, that is, 
the difference between the heading of the car and the tangent direction to the track at the point on the track that is closest to the center of gravity of the car.
The input $u$ denotes the steering angle and is constrained with the interval $u\in[-\frac{\pi}{5},\frac{\pi}{5}]$.
The effect of the path curvature is represented by $r(t)=\kappa(t)v_x$, 
where $v_x$ is the (constant) longitudinal velocity, 
and $\kappa(t)$ is the instantaneous curvature of the path.
We impose the constraints
\begin{align*}
e_\text{cg} \in [-1,1],~ \dot{e}_\text{cg} \! \in [-5,5], ~ \theta_\text{e} \in [-1,1], ~ \dot{\theta}_\text{e} \in [-2.75,2.75],
\end{align*}
and consider the following parameters (taken from \cite{snider2009automatic}),
$a_1=-27.280$, $a_2=272.798$, $a_3=a_4=a_5=0$, $a_6=-29.388$, $b_1=136.399$, $b_2=126.129$,
with $v_x=\SI{10}{\meter/\second}$.

The continuous-time dynamics in \cref{eq:simulation:car_model} are discretized with a sampling period of $0.01\,\text{s}$ using Euler's forward scheme.
The initial condition is $e_\text{cg}(0)=0.75$, $\dot{e}_\text{cg}(0)=\theta_\text{e}(0)=\dot{\theta}_\text{e}(0)=0$.

The system parameters cannot be identified using the RLS technique introduced in \cref{section:learning} due to the presence of the unknown disturbance $r(t)$,
which is neither stochastic nor zero mean.
While one could, in principle, apply iterative disturbance estimation techniques combined with system identification to learn both the system parameters and $r(t)$, 
we simply deploy the method of \cref{section:imperfect_learning}.

We assume that the parameter $\theta=(a_1,\dots,a_6,b_1,b_2)$ describing the system dynamics is unknown and that the best estimate available is $\theta^0=\alpha_\theta (\mathds{1} + e_\theta) \odot \theta$, where $e_\theta$ is a randomly generated vector with unit norm, 
$\alpha_\theta>0$ is a scalar, 
and $\odot$ denotes componentwise product.

The MPC is formulated as in \cref{eq:MPC}, 
with $A_j \equiv \bar{A}$, 
$B_j \equiv \bar{B}$,
and $c_j \equiv 0$,
where $\bar{A}$, $\bar{B}$ are obtained from \cref{eq:simulation:car_model} using the nominal parameter $\theta^0$.
We choose the horizon to be $N=5$, and use the same cost parameterization as in \cref{section:sim:random_linear}, with the addition of a slack penalty $P_\epsilon$, chosen as in \cref{eq:problem_formulation:mpc_slack_penalty} with $c_1=c_2=25$.
For the upper-level problem we choose the cost $\mathcal{C}(x,u)=\sum_{t=0}^{T} \|x_t\|^2 + \sum_{t=0}^{T-1} 10^{-6}\|u_t\|^2$, and $c_3=100$.

\begin{figure}[ht]
\centering
\ifcompile
    \input{graphics/car_path/car_path.tex}
\else
    \includegraphics{graphics/car_path/car_path_main.pdf}
\fi
\caption{The curvilinear track considered in \cref{section:simulation:lateral_control} (in gray), the trajectory obtained after optimization (solid orange line), the initial trajectory (blue), and the optimal trajectory (dashed red).}
\label{fig:simulation:lateral_control:path}
\end{figure}

\cref{fig:simulation:lateral_control:path} shows the curvilinear track we considered, which was generated by interpolating waypoints using splines and reports trajectories associated to different controllers.
Observe how the trajectory of the trained controller (in red) is significantly closer to the trajectory obtained by the omniscient controller (in orange) compared to the untrained one (in blue).
The best achievable trajectory was obtained by solving a trajectory optimization problem using the true cost and the true dynamics,
assuming foreknowledge of $r(t)$.

\begin{table}[hb]
\caption{Closed-loop costs of different controllers.}
\label{table:simulation:laterl_control:closed_loop_costs}
\centering
\begin{tabular}{ l c c c}
\toprule
\textbf{Controller} & Trained & Untrained & Best Cost \\
\midrule
\textbf{Cost} & 247.323 & 379.762 & 209.982 \\
\bottomrule
\end{tabular}
\end{table}

The closed-loop costs of the three controllers are reported in \cref{table:simulation:laterl_control:closed_loop_costs}.
Training ensures a $0.35$ reduction ($35\%$) in the closed-loop cost,
achieving a suboptimality of $0.15$ ($15\%$).
\cref{fig:simulation:lateral_control:state_trajectory} shows the state and input trajectories of the various controllers across time,
further highlighting the similarity between the tuned controller and the the best trajectory.

\begin{figure}
\centering
\ifcompile
    \input{graphics/car_states/car_states.tex}
\else
    \includegraphics{graphics/car_states/car_states_main.pdf}
\fi
\caption{State and input trajectories of different controllers across time.}
\label{fig:simulation:lateral_control:state_trajectory}
\end{figure}

Finally, we use the technique described in \cref{section:imperfect_learning} to obtain a probabilistic bound on the norm $\|\J_\C(p^*)\|$, where $p^*$ denotes the optimal parameter.
\cref{fig:simulation:lateral_control:uncertainty} depicts the predicted upper-bound for a given $\epsilon$ (and $\beta$ fixed to $10^{-10}$) across a range of uncertainty radii, that is, for different values of $\|\theta-\theta^0\|$.
Selecting $\epsilon$ within the range $0.01$-$0.02$ yields an upper bound that closely matches the true norm (shown in red) without excessive conservatism.
Conversely, larger values of $\epsilon$ lead to an underestimation of the true norm (see inset), highlighting the tradeoff between tightness of the bound and statistical confidence.

\begin{figure}
\centering
\ifcompile
\begin{tikzpicture}[spy using outlines={rectangle, magnification=2, size=1.2cm, connect spies}]

\pgfplotsset{001/.style={solid,mark_square,orange1}}
\pgfplotsset{002/.style={solid,mark_square,orange2}}
\pgfplotsset{003/.style={solid,mark_square,orange3}}
\pgfplotsset{01/.style={solid,mark_square,orange4}}
\pgfplotsset{true/.style={dashdot,mark_square,blue1}}

\begin{axis}[
legend cell align={left},
legend style={
  fill opacity=0.8,
  draw opacity=1,
  text opacity=1,
  at={(0.03,0.97)},
  anchor=north west,
  style={font=\footnotesize,row sep =-2.5pt},
  legend columns=2,
},
tick align=outside,
tick pos=left,
xlabel={Uncertainty Radius},
xmin=-0.025, xmax=0.525,
ylabel={Norm},
ymin=-0.5, ymax=68.9182087019307,
xmin=-0.01, xmax=0.505,
width=8.5cm,height=4.5cm,
ylabel style = {font=\footnotesize},
tick label style={font=\footnotesize},
xlabel style = {font=\footnotesize},
xmajorgrids,
ymajorgrids,
]
\addplot [001]
table {%
0.5 65.6422647606685
0.45 62.1379300858767
0.4 54.0295211173905
0.35 46.2368258448582
0.3 40.1862260201172
0.25 34.2255788403234
0.2 25.8639013557131
0.15 18.9349290208325
0.1 12.5020697256619
0.05 5.1798599529265
0 0.123385935793294
};
\addlegendentry{$\epsilon=0.01$}
\addplot [002]
table {%
0.5 56.3075815567722
0.45 53.2390420575323
0.4 46.8483392193192
0.35 39.2252132894562
0.3 35.1963904812522
0.25 29.1898054039487
0.2 22.8612319309468
0.15 17.0480577611259
0.1 11.0717566588771
0.05 4.50199004140488
0 0.123385935793294
};
\addlegendentry{$\epsilon=0.02$}
\addplot [003]
table {%
0.5 51.4815276452185
0.45 47.7755066661715
0.4 42.3043974799774
0.35 35.460780087086
0.3 32.3869779575422
0.25 25.891128278666
0.2 20.5761994591639
0.15 15.4259565984373
0.1 9.93834145832315
0.05 4.23912450355625
0 0.123385935793294
};
\addlegendentry{$\epsilon=0.03$}
\addplot [01]
table {%
0.5 48.8808879482961
0.45 45.5769010728037
0.4 40.1854016485533
0.35 34.1425120031434
0.3 30.6897828095978
0.25 24.2979854807564
0.2 19.8907982121144
0.15 14.6513600546647
0.1 9.4229710152531
0.05 4.06536148108757
0 0.123385935793294
};
\addlegendentry{$\epsilon=0.1$}
\addplot [true]
table {%
0.5 45.1464331295661
0.45 44.9980596786563
0.4 42.8631526806532
0.35 37.8990197699714
0.3 32.2803571283313
0.25 26.7692919087748
0.2 21.1903237736233
0.15 15.6538619054363
0.1 10.2390623615099
0.05 4.21463149500968
0 0.123385935425536
};
\addlegendentry{True}
\coordinate (spypoint) at (axis cs:0.4,40);
\coordinate (spyviewer) at (axis cs:0.48,30);

\end{axis}

\spy [black, line width = 0.5pt] on (spypoint) in node[anchor=north east] at (spyviewer);

\end{tikzpicture}
\else
    \includegraphics{graphics/car_uncertainties/car_uncertainties_main.pdf}
\fi
\caption{Upper bound on gradient norm for different confidence levels and uncertainty radii.}\label{fig:simulation:lateral_control:uncertainty}
\end{figure}

\section{Conclusion}
\label{section:conclusion}
We consider the problem of hyperparameter tuning for model predictive control (MPC).
We assume that the true system dynamics are unknown and affected by noise, and introduce a system identification procedure that operates alongside the parameter updates.
We then analyze the convergence properties of the proposed algorithm in two settings: (i) when the MPC prediction model is treated as a design variable, and (ii) when it is fixed to the best available model (certainty equivalence).
We develop an efficient method to bound the residual norm of the objective gradient in cases where the model is not exactly learned asymptotically.
Finally, we demonstrate the effectiveness of the approach through three simulation examples.
Though we do not focus on safety aspects, 
once a nominal controller is obtained using the proposed algorithms, the learned dynamical model can be combined with the robust design methodology proposed in \cite{zuliani2024closed} to ensure safety and robustness.
Future work will pursue this direction by directly focusing on strengthening safety certificates, for example by guaranteeing \emph{anytime feasibility}, that is, ensuring that the state constraints are satisfied for all iterations.


\bibliographystyle{IEEEtran}
\bibliography{bibliography.bib}

\appendix
\crefalias{subsubsection}{appendix}%
\crefalias{subsection}{appendix}%
\crefalias{section}{appendix}

\subsection{Proof of Theorem \ref{thm:main}}
\label{section:convergence_proof}
The core of the proof involves showing that Assumption A in \cite{davis2020stochastic} is verified,
and then leveraging Theorem 1 in \cite{davis2020stochastic}.
For completeness, we report the assumption here for an algorithm of the form
\begin{align*}
p^{k+1} = p^k + \alpha_k[d^k+\xi^k],~~ d^k\in G^k(p^k),
\end{align*}
where $d^k$ represents a subgradient that will be specified later,
and $\xi^k$ denotes an error term.
Let $G:\R^{n_p} \rightrightarrows \R^{n_p}$ denote the conservative Jacobian of the cost that is to be minimized.
\begin{assumption}[\hspace{1sp}{\cite[Assumption A]{davis2020stochastic}}]\label{ass:davis} \hspace{1sp}
\begin{enumerate}
    \item All limit points of $\{p^k\}$ lie in $\mathcal{P}$.
    \item The iterates are bounded, that is, $\sup_{k \geq 1}~\|p^k\|< \infty$ and $\sup_{k \geq 1}\|d^k\|< \infty$.
    \item $\sum_{k\in\N} \alpha_k = \infty$ and $\sum_{k\in\N} \alpha_k^2 < \infty$.
    \item $\sum_{k\in\N} \alpha_k \xi^k < \infty$.
    \item For any unbounded increasing sequence $\{ k_j \} \subset \N$ such that $p^{k_j}\to \bar{p}$, it holds
    \begin{align*}
    \lim_{n\to \infty} \operatorname*{dist} \left( \frac{1}{n} \sum_{j=1}^{n} d^{k_j},G(\bar{p}) \right) = 0.
    \end{align*}
\end{enumerate}
\end{assumption}
We start by proving that $\J_{\bar{\mathcal{C}}}^k$ represents a ``sample'' of the true Jacobian $\J_\C$ of $\C$.
\begin{lemma}\label{lemma:exchange_diff_and_exp}
Under \cref{ass:local_lip_and_definability}, the expected cost $\C(p):=\E_v[\Cb(p,v)]$ is locally Lipschitz and definable with conservative Jacobian $\J_\C(p)=\E_v[\J_\Cb(p,v)]$.
\end{lemma}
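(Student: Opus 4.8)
The plan is to verify directly that the Aumann integral $G(p):=\E_w[\J_{\Cb}(p,w)]$---i.e.\ the set of expectations $\E_w[S(w)]$ over measurable selections $S(w)\in\J_{\Cb}(p,w)$---meets the definition of a conservative Jacobian for $\C$ in \cref{eq:path_diff_definition}, and separately to record the regularity of $\C$. First I would settle the easy half. By \cref{lemma:path_diff_cost} the map $(p,w)\mapsto\Cb(p,w)$ is locally Lipschitz, and by \cref{ass:sys_id} the disturbance $w$ is almost surely bounded, so its support lies in a compact set $K_w$. On a compact neighbourhood $K_p\times K_w$, $\Cb$ is Lipschitz with some constant $L$, so $p\mapsto\Cb(p,w)$ is $L$-Lipschitz uniformly in $w\in K_w$; then $|\C(p)-\C(p')|\le\E_w|\Cb(p,w)-\Cb(p',w)|\le L\|p-p'\|$ shows $\C$ is locally Lipschitz. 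Definability of $\C$ follows from that of $\Cb$ (\cref{lemma:path_diff_cost}) together with the closure of the definable class under integration noted in \cref{section:prelim}, applied to the $w$-integration against the (definable, bounded-support) law of $w$.

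Next I would check that $G$ is a legitimate candidate, i.e.\ nonempty, compact-valued and outer semicontinuous. Nonemptiness and measurability of selections come from the Kuratowski--Ryll-Nardzewski measurable selection theorem applied to the measurable, closed-valued multifunction $w\mapsto\J_{\Cb}(p,w)$; the uniform bound $L$ makes it integrably bounded, whence the Aumann integral is bounded and compact, and outer semicontinuity of $G$ is inherited from that of each $\J_{\Cb}(\cdot,w)$ together with the integrable bound (a Fatou/dominated-convergence argument for multifunctions).

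The core is the chain-rule identity. Fix an absolutely continuous curve $\gamma:[0,1]\to\Rp$. For each $w\in K_w$, path differentiability of $\Cb(\cdot,w)$ supplies a null set $N_w\subset[0,1]$ off which $\tfrac{d}{dt}\Cb(\gamma(t),w)=V\dot\gamma(t)$ for every $V\in\J_{\Cb}(\gamma(t),w)$. Applying Fubini--Tonelli to the (measurable) set $\{(t,w):t\in N_w\}$, whose $w$-sections are null, shows its $t$-sections are null for a.e.\ $t$; hence for a.e.\ $t$ the chain rule holds for a.e.\ $w$. I would then interchange the $t$-derivative and the $w$-expectation: the difference quotient of $\C(\gamma(\cdot))$ is bounded by $L\cdot\tfrac1h\int_t^{t+h}\|\dot\gamma\|$ uniformly in $w$, so the Lebesgue differentiation theorem and dominated convergence give $\tfrac{d}{dt}\C(\gamma(t))=\E_w[\tfrac{d}{dt}\Cb(\gamma(t),w)]$ at a.e.\ $t$. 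Combining, for any $V=\E_w[S(w)]\in G(\gamma(t))$ at a good $t$,
\begin{align*}
\tfrac{d}{dt}\C(\gamma(t))=\E_w\big[\tfrac{d}{dt}\Cb(\gamma(t),w)\big]=\E_w[S(w)\dot\gamma(t)]=V\dot\gamma(t),
\end{align*}
which is exactly \cref{eq:path_diff_definition} for $\C$ with the candidate $G$.

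I expect the main obstacle to be the measure-theoretic bookkeeping rather than any single deep step: verifying joint measurability of $\{(t,w):t\in N_w\}$ so Fubini applies, ensuring the exceptional $t$-sets from the null-set reduction and from the Lebesgue/dominated-convergence step combine into a single null set, and making the Aumann-integral reading of $\E_w[\J_{\Cb}]$ precise enough that ``$V=\E_w[S(w)]$ for a measurable selection $S$'' is available for every $V\in G(\gamma(t))$. The uniform Lipschitz bound from the bounded support of $w$ is what legitimises all the interchanges, so I would foreground \cref{ass:sys_id} throughout.
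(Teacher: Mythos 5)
Your proposal is correct in substance, but it takes a genuinely different route from the paper: you re-derive the interchange of expectation and path-differentiation from first principles, whereas the paper's proof is a three-line verification of hypotheses followed by an appeal to \cite[Theorem 3.10]{bolte2023subgradient} (for the exchange) and to \cite{speissegger1999pfaffian} (for preservation of local Lipschitz continuity and definability under integration). Concretely, the paper checks only that $\Cb(p,\cdot)$ is integrable, that $\J_\Cb(\cdot,w)$ is a conservative Jacobian by construction, and that its elements are bounded for $p$ in bounded sets, then cites the general theorem; you instead build the Aumann-integral candidate $G(p)=\E_w[\J_\Cb(p,w)]$, run the Fubini argument on the exceptional null sets $N_w$ along a fixed absolutely continuous curve, and use the uniform Lipschitz bound coming from the almost sure boundedness of $w$ (\cref{ass:sys_id}) to justify dominated convergence in the derivative--expectation exchange. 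Your selection-independent choice of the good $t$-set and the final identity $\tfrac{d}{dt}\C(\gamma(t))=\E_w[S(w)]\dot\gamma(t)$ are exactly right, so the argument goes through. What each approach buys: yours is self-contained and makes transparent precisely where boundedness of the disturbance is used, while the paper's citation absorbs exactly the bookkeeping you flag as the main obstacle (joint measurability of $\{(t,w):t\in N_w\}$, measurable selections, compactness and outer semicontinuity of the Aumann integral), which is nontrivial to do rigorously and is handled once and for all in \cite{bolte2023subgradient}. Two small caveats on your version: your definability step quietly assumes the law of $w$ is definable, an assumption the paper neither states nor needs on its route; and outer semicontinuity of the Aumann integral via a set-valued Fatou argument typically requires convexification or a nonatomic base measure, so you should either pass to convex hulls (which preserves conservativity) or note that the probability space can be taken nonatomic.
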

\begin{proof}
The map $p,v\mapsto\Cb(p,v)$ is locally Lipschitz and definable because it is the composition of functions sharing these properties. 
Definability and local Lipschitz continuity are preserved by integration \cite{speissegger1999pfaffian}. Next, the function $\Cb(p,\cdot)$ is integrable for any $p$ because it is locally Lipschitz continuous. Moreover, $\J_\Cb(\cdot,v)$ is a conservative Jacobian for $\Cb(\cdot,v)$ by construction. For any given $v$, the elements of the conservative Jacobian $\J_\Cb(p,v)$ are all bounded for $p$ in compact sets, since $\J_\Cb(\cdot,v)$ is piecewise smooth. We can then invoke \cite[Theorem 3.10]{bolte2023subgradient}, which proves that expectation and path-differentiation can be exchanged, proving the claim.
\end{proof}
We define $J_{x_t}^k(\theta)$, $J_{u_t}^k(\theta)$, and $J_{y}^k(\theta)$ as elements of the true conservative Jacobians obtained via \cref{eq:backprop} using the true parameter $\theta$. Moreover, we let $J_{f,t}^k(\theta)$ and $J_{f,t}^k$ denote elements of $\J_{f}(x_t^k,u_t^k,\theta)$ and $\J_{f}(x_t^k,u_t^k,\theta^k)$, respectively. 
\begin{lemma}\label{lemma:bound_gradient}
Under \cref{ass:sys_id,ass:pe,ass:local_lip_and_definability,ass:P_bounded}, there exist some $L_1>0$ such that for all $k\in\N$ and any $p\in \mathcal{P}$ and $w$, we have
\begin{align*}
\operatorname{dist}(J^k_\Cb,\J_\Cb(p^k,v^k,\theta)) \leq L_1 \operatorname*{diam}(\Theta^k),
\end{align*}
with confidence at least $1-\delta$, where $\Theta^k$ is given in \cref{eq:c_k}.
\end{lemma}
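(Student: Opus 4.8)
The plan is to reduce the point-to-set distance to a pointwise comparison of the surrogate gradient map $\vartheta \mapsto J_\Cb^k(\vartheta)$ (defined by the recursion \eqref{eq:backprop} together with the chain rule \eqref{eq:UL_cost_gradient}) evaluated at the nominal model $\theta^k$ and at the true model $\theta$, and then to control that comparison by a Lipschitz estimate in $\vartheta$. First I would invoke \cref{thm:sys_id}: under \cref{ass:sys_id,ass:pe} the true parameter satisfies $\theta \in \Theta^k$ for all $k$ with probability at least $1-\delta$, and since $\theta^k$ is the center of the confidence ellipsoid $\Theta^k$ by construction, we also have $\theta^k \in \Theta^k$; hence on this event $\|\theta^k - \theta\| \leq \operatorname{diam}(\Theta^k)$. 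The second ingredient is that running \eqref{eq:backprop} with $\vartheta = \theta$ produces a genuine element of the true conservative Jacobian, i.e. $J_\Cb^k(\theta) \in \J_\Cb(p^k, w^k)$: the closed-loop trajectories $x^k, u^k$ are those of the real system with dynamics $f(\cdot,\cdot,\theta)$, so substituting $\theta$ reproduces the true sensitivities, and the chain rule for conservative Jacobians (recalled in \cref{section:prelim}) ensures the resulting composition lies in $\J_\Cb(p^k,w^k)$. Combining these,
\begin{align*}
\operatorname{dist}(J_\Cb^k(\theta^k), \J_\Cb(p^k, w^k)) \leq \| J_\Cb^k(\theta^k) - J_\Cb^k(\theta) \|,
\end{align*}
so it remains to bound the right-hand side by $L_1 \|\theta^k - \theta\|$.

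The core estimate is Lipschitz continuity of $\vartheta \mapsto J_\Cb^k(\vartheta)$, uniform over $k$, over $p^k \in \mathcal{P}$, and over admissible noise. Here I would exploit the affine structure \eqref{eq:true_dynamics}: since $f(x,u,\vartheta) = \phi(x,u)^\top \vartheta + \varphi(x,u)$, the dynamics Jacobians $\J_{f,x}(x_t^k,u_t^k,\vartheta)$ and $\J_{f,u}(x_t^k,u_t^k,\vartheta)$ are \emph{affine} in $\vartheta$, hence globally Lipschitz in $\vartheta$ with constants governed by the derivatives of $\phi$ evaluated along the trajectory. The remaining factors in \eqref{eq:backprop}, namely $\J_{\pi,\cdot}$ and $\J_{\text{MPC},\cdot}$, do not depend on the surrogate model $\vartheta$ and are bounded because $\pi$ and $\text{MPC}$ are locally Lipschitz by \cref{ass:local_lip_and_definability}. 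Propagating a perturbation of $\vartheta$ through the finite recursion \eqref{eq:backprop} and then through \eqref{eq:UL_cost_gradient} yields $\| J_\Cb^k(\theta^k) - J_\Cb^k(\theta)\| \leq L_1 \|\theta^k - \theta\|$, where $L_1$ aggregates the per-step Lipschitz constants and magnitude bounds over the horizon.

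The main obstacle is to make $L_1$ genuinely \emph{uniform}. This requires uniform boundedness of the trajectories $(x_t^k,u_t^k)$ and of all intermediate sensitivities $\J_{x_t}^k(\vartheta), \J_{u_t}^k(\vartheta), \J_{y_t}^k(\vartheta)$, across every iteration, every $p \in \mathcal{P}$, and every admissible disturbance. I would obtain this from the compactness of $\mathcal{P}$ (\cref{ass:P_bounded}), the almost-sure boundedness of the noise (\cref{ass:sys_id}), and the local Lipschitz continuity of $f$, $\text{MPC}$, and $\pi$ (\cref{ass:local_lip_and_definability}), which confine the trajectories and their generalized Jacobians to fixed bounded sets on which piecewise-smooth functions are bounded. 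Crucially the horizon $T$ is finite, so \eqref{eq:backprop} is a composition of finitely many Lipschitz maps and the propagated constant cannot accumulate without bound, an argument that would break down for an infinite horizon. With $L_1$ uniform, chaining the pointwise Lipschitz bound with $\|\theta^k - \theta\| \leq \operatorname{diam}(\Theta^k)$ delivers the claim with confidence at least $1-\delta$.
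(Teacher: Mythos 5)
Your proposal is correct and follows essentially the same route as the paper's proof: bound the point-to-set distance by the distance to the specific element $J_\Cb^k(\theta)\in\J_\Cb(p^k,w^k)$ obtained by running \cref{eq:backprop} at the true parameter, propagate the perturbation $\theta^k-\theta$ through the finite-horizon recursion using uniform bounds on the sensitivities (from \cref{ass:P_bounded}, the almost-sure noise bound in \cref{ass:sys_id}, and the Lipschitz properties in \cref{ass:local_lip_and_definability}), and invoke \cref{thm:sys_id} so that $\|\theta^k-\theta\|\leq\operatorname{diam}(\Theta^k)$ with confidence $1-\delta$. The only (harmless) differences are presentational: you make explicit two points the paper leaves implicit, namely that $J_\Cb^k(\theta)$ is a genuine element of the conservative Jacobian via the chain rule, and that the affine structure \cref{eq:true_dynamics} is what yields Lipschitz continuity of the model Jacobians in $\vartheta$, where the paper simply cites \cref{ass:local_lip_and_definability}.
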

\begin{proof}
Since $\operatorname*{dist}(J^k_\Cb,\J_\Cb^k(\theta)) \leq \operatorname*{dist}(J^k_\Cb,J_\Cb^k(\theta)) $ for any $J_\Cb^k(\theta)\in\J_\Cb^k(\theta)$, we focus on showing that $\|J_\Cb^k-J_\Cb^k(\theta)\| \leq L_1 \operatorname*{diam}(\Theta^k)$ for a specific $J_\Cb^k(\theta)$ that we define later and for some $L_1>0$. 
Given any $k\in\N$ and $t\in\Z_{[0,T-1]}$, let $e_{u_t}^k = J_{u_t}^k-J_{u_t}^k(\theta)$, $e_{x_t}^k = J_{x_t}^k-J_{x_t}^k(\theta)$, and $e_{y_t}^k = J_{y_t}^k-J_{y_t}^k(\theta)$. We have
\begin{align}
\|e_{x_{t+1}}^k\| & = \|J_{x_{t+1}}^k - J_{x_{t+1}}^k(\theta)\| \notag \\
& \leq \|J_{f,x,t}^k \| \| e_{x_t}^k \| + \|[J_{f,x,t}^k-J_{f,x}^k(\theta)] J_{x_t}^k(\theta)\| \notag \\
& \quad\,\, \|J_{f,u,t}^k\| \| e_{u_t}^k \| + \|[J_{f,u,t}^k-J_{f,u}^k(\theta)] J_{u_t}^k(\theta)\|. \label{eq:bound0}
\end{align}
Thanks to the local Lipschitz continuity of $x(\cdot)$ and $u(\cdot)$ (which follows from \cref{ass:local_lip_and_definability}), the boundedness of $\mathcal{P}$ by \cref{ass:P_bounded}, and the fact that $w_t$ is almost surely bounded by \cref{ass:sys_id}, we can find constants $M_x$, and $M_u$ satisfying $\|J_{x_t}^k(\theta)\| \leq M_x$ and $\|J_{u_t}^k(\theta)\| \leq M_u$ for all $t\in\Z_{[0,T-1]}$, and $k\in\N$. For the same reason, there exist constants $M_{f,x}$ and $M_{f,u}$ satisfying $\|J_{f,x,t}^k\| \leq M_{f,x}$ and $\|J_{f,u,t}^k\| \leq M_{f,u}$ for all $t\in\Z_{[0,T-1]}$, and $k\in\N$. Let $\bar{\mathcal{X}}$ be a set such that $x_t^k\in \bar{\mathcal{X}}$ for all $t\in\Z_{[0,T]}$ and $k\in\N$. From \cref{eq:bound0}, we have
\begin{multline}
\|e_{x_{t+1}}^k\| \leq M_{f,x} \|e_{x_t}^k\| + M_{f,u} \|e_{u_t}^k\| \\ + L_f (M_{u}+M_x)  \|\theta^k-\theta\|, \label{eq:bound1}
\end{multline}
where $f$ is Lipschitz (by \cref{ass:local_lip_and_definability}) in $\theta$ uniformly for $x\in \bar{\mathcal{X}}$ and $u\in \mathcal{U}$ with constant $L_f$. Similarly, we have
\begin{align}
\|e_{u_t}^k\| \leq L_\pi [ \|e_{x_t}^k \| + \|e_{y_t}^k\| ], \label{eq:bound2}
\end{align}
where $\pi$ is Lipschitz (by \cref{ass:local_lip_and_definability}) in $(x,y)$ uniformly for $p\in \mathcal{P}$ with constant $L_\pi$, and
\begin{align}
\|e_{y_t}^k\| \leq L_\text{MPC} [ \| e_{x_t}^k \| + \| e_{y_{t-1}}^k \| ], \label{eq:bound3}
\end{align}
where $\operatorname*{MPC}$ is Lipschitz (by \cref{ass:local_lip_and_definability}) in $(x,y)$ uniformly in $p$ for all $p\in \mathcal{P}$ with constant $L_\text{MPC}$. 
Since $x_0^k$ and $y_{-1}^k$ are known for all $k$, we have $e_{x_0}^k=e_{y_0}^k=0$. Suppose inductively that $e_{x_t}^k \leq L_{x,t} \|\theta^k-\theta\|$, and $e_{y_{t-1}}^k \leq L_{y,t-1} \|\theta^k-\theta\|$. Then from \cref{eq:bound2,eq:bound3} we have
\begin{align*}
\|e_{y_t}^k\| & \leq L_\text{MPC} ( L_{x,t} + L_{y,t-1} ) \| \theta^k-\theta \|,\\
\|e_{u_t}^k\| & \leq [ L_\pi L_{x,t} + L_\pi L_\text{MPC} (L_{x,t} + L_{y,t-1}) ] \| \theta^k-\theta \|.
\end{align*}
Letting $L_{y,t}=L_\text{MPC} ( L_{x,t} + L_{y,t-1} )$ and $L_{u,t}=L_\pi L_{x,t} + L_\pi L_\text{MPC} (L_{x,t} + L_{y,t-1})$, we obtain from \cref{eq:bound1} that
\begin{align*}
\|e_{x_{t-1}}^k\| \leq L_{x,{t+1}} \|\theta^k-\theta\|,
\end{align*}
where $L_{x,t+1}=M_{f,x} L_{x,t} + M_{f,u} L_{u,t} + L_f(M_u+M_x)$. This concludes the induction step, implying that for all $t\in\Z_{[0,T-1]}$ we have $\|e_{u_t}^k\| \leq L_u \|\theta^k-\theta\|$ and for all $t\in\Z_{[0,T]}$ we have $\|e_{x_t}^k\| \leq L_x \|\theta^k-\theta\|$.

Let $e_{x}^k=(e_{x_0}^k,\dots,e_{x_T}^k)$ and $e_u^k=(e_{u_0}^k,\dots,e_{u_{T-1}}^k)$. From \cref{ass:local_lip_and_definability}, we have that
\begin{align*}
\|J_\Cb^k-J_\Cb^k(\theta)\| \leq \|J_{\mathcal{C}}^k + c_3J_\text{p}^k\| [ T L_x + (T-1) L_u ] \|\theta^k-\theta\|,
\end{align*}
where $J_\mathcal{C}^k=J_{\mathcal{C}}(x^k,u^k,p^k)$ and $J_\text{p}^k\in\J_{\sum_{t=0}^{T} \operatorname*{dist}(\cdot_t,\mathcal{X})}(x^k)$. 
Since $\mathcal{C}$ and $\sum_{t=0}^{T} \operatorname*{dist}(\cdot_t,\mathcal{X})(\cdot)$ are locally Lipschitz, and $x^k$, $u^k$, and $p^k$ are almost surely bounded by \cref{ass:P_bounded,ass:sys_id}, there exists a constant $ L_C>0 $ such that $\|J_{\mathcal{C}}^k + c_3J_\text{p}^k\| \leq L_C$. Defining $L_1=L_CTL_x+L_C(T-1)L_u$ we have $\|J_\Cb^k-J_\Cb^k(\theta)\| \leq L_1 \|\theta^k-\theta\| \leq L_1 \operatorname*{diam}\Theta^k$ where the last statement holds with probability $1-\delta$. This completes the proof.
\end{proof}

\begin{lemma}
The update in \cref{eq:update} can be written as
\begin{align}\label{eq:proof1}
p^{k+1}= p^k+\alpha_k [d^k + \xi^k],~~ d^k\in G^k(p^k),
\end{align}
where
\begin{align*}
G^k(p) & = - \J_{\C}(p) - \alpha_k^{-1}\E_v[p-\alpha_k J_\Cb(p,v,\theta) - \Pi_{\mathcal{Y}^k}[p - \\ & \quad \alpha_k J_\Cb(p,v,\theta)]],\\
\alpha_k \xi^k & = \Pi_{\mathcal{Y}^k}[p^k-\alpha_k J_\Cb^k] - \E_v[\Pi_{\mathcal{Y}^k}[p^k-\alpha_k J_\Cb(p^k,v,\theta)]],
\end{align*}
where $J_\Cb(p,v,\theta)\in\J_\Cb(p,v,\theta)$.
\end{lemma}

\begin{proof}
By \cref{lemma:exchange_diff_and_exp}, let $J_\C^k = \mathbb{E}_v[J_\Cb(p^k,v,\theta)] \in \J_\C(p^k)$. Then substituting
\begin{align*}
d^k & = - J_\C^k-\alpha_k^{-1}\E_v[p^k-\alpha_k J_\Cb(p^k,v,\theta)] \\ & \quad + \alpha_k^{-1}\E_v[\Pi_{\mathcal{Y}^k}[p^k - \alpha_k J_\Cb(p^k,v,\theta)]]
\end{align*}
in \cref{eq:proof1} yields
\begin{align*}
p^{k+1} & =  p^k - \alpha_k J_\C^k - p^k + \alpha_k \E_v[J_\Cb(p^k,v,\theta)] \\ & \hphantom{={}} + \E_v[\Pi_{\mathcal{Y}^k}[p^k-\alpha_k J_\Cb(p^k,v,\theta)]] \\ & \hphantom{={}} + \Pi_{\mathcal{Y}^k}[p^k-\alpha_k J_\Cb^k] - \E_v[\Pi_{\mathcal{Y}^k}[p^k-\alpha_k J_\Cb(p^k,v,\theta)]] \\
& = p^k - \alpha_k J_\C^k - p^k + \alpha_k J_\C^k + \Pi_{\mathcal{Y}^k}[p^k-\alpha_k J_\Cb^k] \\
& = \Pi_{\mathcal{Y}^k}[p^k-\alpha_k J_\Cb^k ].\qedhere
\end{align*}
\end{proof}

\noindent Next, let
\begin{align*}
\alpha_k\eta_k:=\Pi_{\mathcal{Y}^k}[p_k-\alpha_k J_\Cb^k] - \Pi_{\mathcal{Y}^k}[p_k-\alpha_k J_\Cb^k(\theta)].
\end{align*}

\medskip
\begin{lemma}\label{lemma:first_error_limit}
The limit $\displaystyle \lim_{n \to \infty} \sum_{k=1}^{n} \alpha_k \eta_k$ exists and it is finite.
\end{lemma}
\begin{proof}
We use \cite[Proposition 4.32]{bonnans2013perturbation} by showing that all assumptions are verified. Let $k\in\N$ be fixed. Let $\Phi=\mathcal{Y}^k$, $f_k(p)=\|p-p^k+\alpha_k J_\Cb^k(\theta)\|^2$ and $g_k(p)=\|p-p^k+\alpha_k J_\Cb^k\|^2$. Observe that
\begin{enumerate}
    \item Since $\nabla^2 f_k(p)=2I$, for any $p,p'$
    \begin{align*}
    f_k(p') \geq f_k(p) + \nabla f_k(p) ^{\top} (p'-p) + \|p'-p\|^2,
    \end{align*}
    If $p\in\argmin_{x\in\Phi} f_k(x)$, then $\nabla f_k(p) ^{\top} d \geq 0$ for all locally feasible directions $d \neq 0$, i.e., those $d \neq 0$ such that there exists some $\epsilon>0$ for which $x+td\in \Phi$ for all $t\in (0,\epsilon]$ \cite[Theorem 4.9]{bagirov2014introduction}. Therefore, if $p'$ is sufficiently close to $p$, then $\nabla f_k(p)^{\top}(p-p') \geq 0$ and therefore $f$ satisfies the quadratic growth property
    \begin{align*}
    f_k(p') \geq f_k(p) + \|p-p'\|^2.
    \end{align*}
    \item We have
    \begin{align*}
    & \| \nabla f_k(p) - \nabla g_k(p) \| \\
    =~ & 2\| p - p^k + \alpha_k J_\Cb^k(\theta) - p + p^k - \alpha_k J_\Cb^k \| \\
    \leq~ & 2 \alpha_k [ \| J_\Cb^k(\theta)-J_\Cb^k \| ] \\
    \leq~ & 2 \alpha_k L_1 \operatorname*{diam}\Theta^k =: \kappa,
    \end{align*}
    where the last step follows from \cref{lemma:bound_gradient}. This means that $f_k-g_k$ is $\kappa$-Lipschitz.
\end{enumerate}
Let $S_0$ be the set of minimizers of the problem $\min_{p\in \Phi} ~f_k(p)$, and let $S_1$ be the set of minimizers of the problem $\min_{p\in \Phi}~ g_k(p)$. Observe that $\alpha_k \| \eta_k \| \leq \operatorname*{dist} (S_0,S_1) $. From \cite[Proposition 4.32]{bonnans2013perturbation} we have that for all $k$ large enough, say for all $k \geq \bar{k}$, $\operatorname*{dist}(S_0,S_1) \leq \kappa = 2 \alpha_k L_1 \operatorname*{diam}\Theta^k$.
We conclude that for all $k \geq \bar{k}$, $\alpha_k \|\eta_k\| \leq 2 \alpha_k L_1 \operatorname*{diam}\Theta^k$. By \cref{ass:stepsizes}, $\sum_{k=\bar{k}}^{\infty} 2 \alpha_k^2 < \infty$. 
By equation \cref{eq:c_k} in \cref{thm:sys_id}, the local Lipschitz continuity of $x(\cdot)$ and $u(\cdot)$, 
and the boundeness of $\mathcal{P}$, 
there exists some $C>0$ for which $\operatorname*{diam}\Theta^k \leq C \sqrt{\log k / k} $. 
Therefore, $\sum_{k=\bar{k}}^{\infty} 2 \alpha_k L_1 \operatorname*{diam}\Theta^k \leq 2C L_1 \sum_{k=\bar{k}}^{\infty} \alpha_k \sqrt{\log k / k} < \infty$. 
We conclude that $\lim_{n \to \infty} \sum_{k=1}^{n} \alpha_k \eta_k < \infty$.
\end{proof}

\begin{lemma}\label{lemma:second_error_limit}
The limit $\displaystyle \lim_{n \to \infty} \sum_{k=1}^{n} \alpha_k \xi^k$ exists.
\end{lemma}
\begin{proof}
We have $\alpha_k \xi^k = \alpha_k \eta_k + \alpha_k \varphi_k$, where $\alpha_k \varphi_k := \Pi_{\mathcal{Y}^k}[p^k-\alpha_k J_\Cb^k(\theta)] - \E_v[\Pi_{\mathcal{Y}^k}[p^k-\alpha_k J_\Cb(p^k,v,\theta)]]$. The limit $\lim_{n \to \infty} \sum_{k=1}^{n} \alpha_k \varphi_k$ exists by \cref{lemma:exchange_diff_and_exp} and \cite[Lemma A.5]{davis2020stochastic}, and $\alpha_k \eta_k$ is summable by \cref{lemma:first_error_limit}.
\end{proof}
\begin{lemma}\label{lemma:point_5_davis}
Point 5 of \cref{ass:davis} holds with
\begin{align*}
G(p) = -\J_\C (p) - N_{\mathcal{Y}}(p).
\end{align*}
\end{lemma}
\begin{proof}
Consider a fixed $v$ and $p_k$, and let
\begin{align*}
p^{k+1}(v)=\Pi_{\mathcal{Y}^k}[p^k-\alpha_k J_\Cb(p^k,v,\theta)]. 
\end{align*}
From the first-order optimality conditions we have
\begin{align*}
p^k - \alpha_k J_\Cb(p^k,v,\theta) - p^{k+1}(v)=w_k^\mathcal{Y}(v)\in N_{\mathcal{Y}^k}(p^{k+1}),
\end{align*}
where $N_{\mathcal{Y}^k}$ is the limiting normal cone of $\mathcal{Y}^k$.
Since all sets $\mathcal{Y}^k$ are bounded, and so are $p^k$ and $J_\Cb(p^k,v,\theta)$ for all $v$, there must exist some $M< \infty$ such that
\begin{align*}
\|w_k^\mathcal{Y}(v)\| = \|p^k - \alpha_k J_\Cb(p^k,v,\theta) - p^{k+1}(v)\| = M.
\end{align*}
Using the convexity of the set $G(p)$, we have
\begin{align*}
& \operatorname*{dist}\left( \frac{1}{n} \sum_{k=1}^{n}  -w_k^J - \alpha_k^{-1} w_k^\mathcal{Y}(v), G(p) \right) \\
& \leq \frac{1}{n} \sum_{k=1}^{n} \operatorname*{dist}\left( -w_k^J \! - \! \alpha_k^{-1} w_k^\mathcal{Y}(v), G(p) \right)
\end{align*}
where $-w_k^J\in \J_\C(p^k)$. Due to the outercontinuity of conservative Jacobians and \cref{ass:outer_semi}, we have $\operatorname*{dist}(w_k^J,\J_\C(\bar{p}))\to 0$ and $\operatorname*{dist}(w_k^{\mathcal{Y}}(v),N_\mathcal{Y}(\bar{p}))\to 0$ for any $v$, meaning that almost surely $\operatorname*{dist}( -w_k^J - \alpha_k^{-1} w_k^\mathcal{Y}(v), G(p) ) \to 0$.
By \cite[Lemma A.1]{davis2020stochastic}, we have that $\operatorname*{dist}\left( \frac{1}{n} \sum_{k=1}^{n}  -w_k^J - \alpha_k^{-1} w_k^\mathcal{Y}(v), G(p) \right)$ is uniformly dominated by an integrable function of $w$ (the proof is identical as the one of Claim 4, Page 150, \cite{davis2020stochastic}), which can be used in combination with the dominated convergence theorem to prove the claim (precisely as it is done in the final step of the proof on Page 150, \cite{davis2020stochastic}.
\end{proof}
\begin{proof}[Proof of \cref{thm:main}]
Points 1-5 in \cref{ass:davis} follow from the boundedness of each $\mathcal{Y}^k$, from \cref{ass:stepsizes}, and \cref{lemma:point_5_davis,lemma:second_error_limit}.
Therefore, Assumption A in \cite{davis2020stochastic} holds.
Additionally, Assumption B also holds because of \cref{ass:local_lip_and_definability}.
Invoking \cite[Theorem 1]{davis2020stochastic} concludes the proof.
\end{proof}

\subsection{Path-differentiability of optimization problems}
\label{section:path_diff_opt_prob}
\subsubsection{Path-differentiability of quadratic programs} \label{section:path_diff_qp}

For completeness, we provide sufficient conditions for the path-differentiability of a quadratic program of the form
\begin{align}
\begin{split}
\operatorname*{minimize}_x & \quad \frac{1}{2} x^\top Q(p) x + q(p)^\top x \\
\text{subject to} & \quad F(p)x = f(p),\\
& \quad G(p)x \leq g(p).
\end{split} \label{eq:QP}
\end{align}
where $p$ is a parameter. We require the following constraint qualification.
\begin{definition}\label{def:LICQ}
Let $x$ be an optimizer of \cref{eq:QP}. We say that the linear independence constraint qualification (LICQ) is satisfied at $x$ if the rows of $F(p)$ and the rows of $G(p)$ associated to active constraints (i.e., those rows $G_i(p)$ for which $G_i(p)x=g_i(p)$) are all linearly independent. More generally, if the constraints are given by $h(x)=0$ and $g(x) \leq 0$, then LICQ holds at $x$ if the rows of $\nabla_x h(x)$ and the rows of $\nabla_x g(x)$ associated to active constraints are linearly independent.
\end{definition}
\begin{assumption}\label{ass:qp}
Problem \cref{eq:QP} satisfies the linear-independence constraint qualification (LICQ) and $Q(p) \succ 0$ for all $p$. Moreover, the functions $Q$, $q$, $F$, $f$, $G$, and $g$ are locally Lipschitz and definable.
\end{assumption}
The Lagrangian and the dual problem associated to \cref{eq:QP} are given, respectively, by
\begin{align*}
\mathcal{L}(x,\lambda,\mu,p) & = \frac{1}{2} x^\top Q(p) x + q(p)^\top x + \mu^\top(F(p)x-f(p)) \\ & \quad + \lambda^\top (G(p)x-g(p)),
\end{align*}
and
\begin{align}
\begin{split}
\operatorname*{minimize}_{z=(\lambda,\mu)} & \quad \frac{1}{2} z^\top H(p) z + h(p)^\top z,\\
\text{subjec to} & \quad \lambda \geq 0,
\end{split} \label{eq:QP_dual}
\end{align}
respectively, where $H(p)$ and $h(p)$ are given by
\begin{align*}
H(p) & = \begin{bmatrix}
G Q^{-1} G^\top & G Q^{-1} F^\top\\
F Q ^{-1} G^\top & F Q ^{-1} F^\top
\end{bmatrix},~
h(p) = \begin{bmatrix}
GQ^{-1}q+g\\
FQ^{-1}q+f
\end{bmatrix},
\end{align*}
where we omitted the dependency on $p$ for simplicity. We define the following map, which retrieves the primal optimizer $y(p)$ (i.e., the solution of \cref{eq:QP}) given a dual optimizer $z(p)$ (i.e., the solution of \cref{eq:QP_dual})
\begin{align*}
\mathcal{G}(z,p) = -Q(p)^{-1} ( [F(p)^\top~G(p)^\top] z + q(p) ).
\end{align*}
\begin{theorem}[{\hspace{1sp}\cite[Theorem 1]{zuliani2023bp}}]
Under \cref{ass:qp}, the solution map $y(p)$ of \cref{eq:QP} is unique, locally Lipschitz and definable for all $p$. Moreover, we have that
\begin{align*}
W-Q(p)^{-1}[G(p)^\top~F(p)^\top]Z \in \J_y(p),
\end{align*}
where $W \in \J_{\mathcal{G},p}(z,p)$, with $z$ solving \cref{eq:QP_dual}, and $Z=-U^{-1}V$ with
\begin{align*}
U \in J_{P_C}(I-\gamma H(p))-I,~~V \in -\gamma J_{P_C}(Az+B),
\end{align*}
where $J_{P_C} = \operatorname*{diag}(\operatorname*{sign}(\lambda),\mathds{1}_{n_\text{eq}})$, and $A \in \J_H(p)$, $B \in \J_h(p)$.
\end{theorem}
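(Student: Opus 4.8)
The plan is to prove the three regularity claims (uniqueness, local Lipschitzness, definability) first, and then obtain the conservative Jacobian formula by differentiating a fixed-point characterization of the \emph{dual} solution and pushing the result through the primal recovery map $\mathcal{G}$. Since $Q(p)\succ 0$ by \cref{ass:qp}, the objective of \eqref{eq:QP} is strictly convex and coercive, so whenever the (polyhedral) feasible set is nonempty the minimizer $y(p)$ exists and is unique; the LICQ of \cref{def:LICQ} then forces the KKT multipliers to be unique, hence the optimizer $z$ of \eqref{eq:QP_dual} is unique as well, so $y(p)$ and $z(p)$ are genuine single-valued maps. I would record at this stage the elimination identity $y(p)=\mathcal{G}(z(p),p)$ coming from stationarity $\nabla_x\mathcal{L}=0$, which reduces the analysis to the dual solution map $z(p)$.

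For definability, the KKT system (stationarity, primal feasibility, dual feasibility $\mu\ge 0$, and complementarity) is a finite conjunction of equalities and inequalities whose coefficients are the definable data $Q,q,F,f,G,g$; by the closure properties of definable sets recalled in the preliminaries (finite unions/intersections and projections), the graph $\{(p,y(p),z(p))\}$ is definable, and uniqueness makes it the graph of a function, so $y$ and $z$ are definable. For local Lipschitzness I would invoke strong regularity: $Q(p)\succ 0$ supplies the second-order sufficient condition automatically, and together with LICQ this makes the KKT point strongly regular, so the solution depends locally Lipschitz-continuously on the problem data; composing with the locally Lipschitz data maps yields local Lipschitzness of $y$ and $z$ in $p$ (equivalently, one may use the contraction of the projected-gradient map below with $\gamma$ small).

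The heart of the argument is the Jacobian of the dual solution. I would characterize $z$ as a fixed point of the projected-gradient operator $z=P_C\!\left(z-\gamma\big(H(p)z+h(p)\big)\right)$ for a fixed $\gamma>0$, where $C$ is the feasible set of \eqref{eq:QP_dual}, a product of the nonnegative orthant (inequality block) with a full space (equality block), so that $P_C$ acts coordinatewise as $\max(\cdot,0)$ and as the identity, respectively. Differentiating this identity with the conservative chain rule of \cite{bolte2021conservative}, and writing $J_{P_C}=\operatorname{diag}(\operatorname{sign}(\lambda),\mathds{1}_{n_\text{eq}})$ for a conservative Jacobian of the projection, $A\in\J_H(p)$ for the $p$-derivative of $p\mapsto H(p)z$ at frozen $z$, and $B\in\J_h(p)$, I obtain the linear relation $Z=J_{P_C}(I-\gamma H)Z-\gamma J_{P_C}(Az+B)$, i.e. $-UZ=V$ with $U=J_{P_C}(I-\gamma H)-I$ and $V=-\gamma J_{P_C}(Az+B)$, so $Z=-U^{-1}V\in\J_z(p)$. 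Finally I push $Z$ through $\mathcal{G}$: the chain rule gives $\J_y(p)\ni \J_{\mathcal{G},z}(z,p)\,Z+\J_{\mathcal{G},p}(z,p)$, and since $\J_{\mathcal{G},z}(z,p)=-Q(p)^{-1}[G(p)^\top~F(p)^\top]$ (in the ordering of \eqref{eq:QP_dual}) while $W\in\J_{\mathcal{G},p}(z,p)$, this is exactly the claimed element $W-Q(p)^{-1}[G(p)^\top~F(p)^\top]Z$.

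The main obstacle is justifying the implicit differentiation rigorously, and it has two parts. First, invertibility of $U$: I must show $I-J_{P_C}(I-\gamma H)$ is nonsingular, which is where LICQ and a suitable $\gamma$ are essential, since on the inactive block $J_{P_C}$ annihilates the corresponding rows, while on the active block the reduced operator involves $\gamma J_{P_C}H$, whose nonsingularity follows from linear independence of the active constraint gradients (equivalently, positive definiteness of the reduced Gram matrix $MQ^{-1}M^\top$ restricted to the active index set, with $M=[G;F]$). Second, I must verify the hypotheses of the nonsmooth implicit function theorem for conservative Jacobians so that the expression for $Z$ is genuinely a conservative Jacobian of $z$ rather than a formal derivative; this again reduces to invertibility of $U$ in the $z$-block, together with the check that the diagonal selection $J_{P_C}$ is a legitimate element of the conservative Jacobian of $P_C$ at possibly degenerate zero coordinates. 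The delicate case is the failure of strict complementarity at the boundary of the active set; away from that (measure-zero) set of parameters the active set is locally constant and the whole construction collapses to the classical smooth implicit function theorem.
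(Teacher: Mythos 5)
Your proposal is correct and follows essentially the same route as the source of this result: the paper states the theorem without proof, citing \cite[Theorem 1]{zuliani2023bp}, and that proof likewise characterizes the dual solution as a fixed point $z=P_C\bigl(z-\gamma(H(p)z+h(p))\bigr)$, differentiates this identity with the conservative chain rule to get $Z=-U^{-1}V$, and pushes $Z$ through the primal recovery map $\mathcal{G}$ exactly as you do, with LICQ securing nonsingularity of the strictly active block $M_{\mathcal{A}}Q(p)^{-1}M_{\mathcal{A}}^\top$ of $H(p)$ and hence of $U$, and with $\operatorname{sign}(0)=0$ a legitimate conservative-Jacobian selection for $P_C$ at weakly active coordinates. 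The only inessential blemish is your parenthetical suggestion that local Lipschitzness could alternatively come from contraction of the projected-gradient map for small $\gamma$ (the dual need not be strongly convex since $H(p)\succeq 0$ can be singular, and invertibility of $U$ in fact holds for every $\gamma>0$, not just small ones); your primary argument via strong regularity under LICQ and $Q(p)\succ 0$ is the sound one.
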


\subsubsection{Path-differentiability of nonlinear optimization problems} \label{section:path_diff_nlp}

In this section we provide sufficient conditions for the path-differentiability of nonlinear optimization problems, extending the results of \cref{section:path_diff_qp} and allowing the utilization of nonlinear MPC formulations in \cref{eq:prob_nominal}.

We consider a parameterized nonlinear programming problem (NLP) in standard form
\begin{align}
\begin{split}
\operatorname*{minimize}_x & \quad f(x,p)\\
\text{subject to} & \quad h(x,p) = 0,\\
& \quad g(x,p) \leq 0,
\end{split}\label{eq:NLP}
\end{align}
where $f$, $h$, and $g$ are all definable and Lipschitz continuously differentiable, $x\in\R^{n_x}$ is the optimization variable, and $p\in\R^{n_p}$ is a parameter.

Assuming the existence of a multiplier vector, the KKT conditions for \cref{eq:NLP} are given by
\begin{align}
\begin{split}
\nabla_x \mathcal{L}(x,\lambda,\mu,p) & = 0\\
g(x,p) &\leq 0,\\
h(x,p) & = 0,\\
\lambda_i g_{i}(x,p) & = 0,~~ \forall i\in\Z_{[1,n_\text{in}]},\\
\lambda_i &\geq 0,~~ \forall i\in\Z_{[1,n_\text{in}]},
\end{split}\label{eq:KKT}
\end{align}
where $\mathcal{L}$ is the Lagrangian of \cref{eq:NLP} defined as
\begin{align*}
\mathcal{L}(x,\lambda,\mu,p) = f(x,p) + \sum_{i=1}^{n_\text{in}} \lambda_i g_i(x,p) + \sum_{j=1}^{n_\text{eq}} \mu_j h_j(x,p).
\end{align*}
We denote with $\mathcal{I}(x) \subset \Z_{[1,n_\text{in}]}$ the set of inequality constraints satisfied with equality at $x$, i.e., those indices $i$ for which $g_i(x)=0$. Additionally, we define $\mathcal{I}^\text{sa}(x) \subset \mathcal{I}(x)$ the set of strongly active inequality constraints, that is, the subset of $\mathcal{I}(x)$ for which the associated Lagrange multiplier $\lambda_i$ is strictly positive. We additionally define $\phi=(x,\lambda,\mu)\in\R^{n_\phi}$.

To ensure the path-differentiability of the solution map of the NLP, we need to introduce the following sufficient condition for optimality.
\begin{definition}[Strong second order sufficient conditions]
A primal-dual pair $\phi=(x,\lambda,\mu)$ satisfies the \emph{strong second order sufficient conditions of optimality} (SSOSC) if it satisfies \cref{eq:KKT}, and additionally $y^{\top} \nabla_x^2 \mathcal{L}(\phi,p) y >0$ for all $y\in \R^{n_x}$ satisfying $\nabla_x h(x,p)y=0$, and $\nabla_x g_{i}(x,p)y=0$ for all $i\in \mathcal{I}^\text{sa}(x)$.
\end{definition}
To ensure path-differentiability, we require the following.
\begin{assumption}\label{ass:path_diff}
The functions $g$, $h$, and $f$ in \cref{eq:NLP} are definable and twice continuously differentiable with Lipschitz continuous gradients jointly in $x$ and $p$. 
\end{assumption}
\begin{proposition}\label{prop:path_diff}
Under \cref{ass:path_diff}, if $\bar{\phi}$ is a primal-dual pair satisfying the SSOSC and LICQ conditions for a given $\bar{p}$, there exist a neighborhood $N$ of $\bar{p}$ and a locally Lipschitz definable function $\phi:N\to\R^{n_\phi}$ such that for all $p\in N$, $\phi(p)$ solves \cref{eq:NLP} with parameter $p$; moreover, $V\in \mathcal{J}_\phi(\bar{p})$ where $V$ is the unique solution of
\begin{align}
A(\bar{\phi},\bar{p})V=b(\bar{\phi},\bar{p}), \label{eq:derivative}
\end{align}
with
\begin{align*}
A(\bar{\phi},\bar{p})&=\begin{bmatrix}
\nabla_x^2 \mathcal{L}_p(\bar{\phi}) & \nabla_x g_{\mathcal{I}}(\bar{x},\bar{p})^{\top} & \nabla_x h(\bar{x},\bar{p})^{\top} \\
\nabla_x g_{\mathcal{I}}(\bar{x},\bar{p}) & 0 & 0 \\
\nabla_x h(\bar{x},\bar{p}) & 0 & 0
\end{bmatrix}, \\ 
b(\bar{\phi},\bar{p})&= \operatorname*{col}(%
\nabla^2_{px} \bar{\mathcal{L}}(\bar{\phi},\bar{p}),%
\nabla_p g_{\mathcal{I}}(\bar{x},\bar{p}),%
\nabla_p h(\bar{x},\bar{p})). 
\end{align*}
\end{proposition}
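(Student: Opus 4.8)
The plan is to recast the KKT system \cref{eq:KKT} as a single path-differentiable equation and then invoke the nonsmooth implicit function theorem of \cite{bolte2021conservative}. First I would encode the complementarity conditions componentwise through $c_i(\phi,p) = \min(\lambda_i,-g_i(x,p))$, so that a primal-dual pair $\phi=(x,\lambda,\nu)$ solves \cref{eq:KKT} if and only if
\begin{align*}
G(\phi,p) := \operatorname*{col}\big(\nabla_x \mathcal{L}(\phi,p),\ h(x,p),\ c(\phi,p)\big) = 0.
\end{align*}
Because $f,g,h$ are definable and Lipschitz continuously differentiable, each block of $G$ is locally Lipschitz and definable (the gradients $\nabla_x\mathcal{L}$, $\nabla_x h$ are Lipschitz, and the componentwise $\min$ preserves both properties), so $G$ is path differentiable and admits a conservative Jacobian $\J_G=[\J_{G,\phi},\J_{G,p}]$ obtained by the usual calculus rules.

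The core of the argument is to show that \emph{every} matrix in the partial conservative Jacobian $\J_{G,\phi}(\bar\phi,\bar p)$ is invertible, which is exactly the hypothesis the nonsmooth implicit function theorem requires. Partition the inequalities into inactive ($g_i<0$, $\lambda_i=0$), weakly active ($g_i=\lambda_i=0$), and strongly active ($g_i=0$, $\lambda_i>0$) indices $\mathcal N,\mathcal W,\mathcal S$. For inactive and strongly active constraints the $\min$ is differentiable and contributes the rows selecting $\lambda_i$ and $-\nabla_x g_i$ respectively, while for a weakly active constraint its conservative Jacobian contributes the blend $t_i(\partial\lambda_i)+(1-t_i)(-\nabla_x g_i)$ with $t_i\in[0,1]$. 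I would then show that for every such selection the associated matrix $A_t$ is nonsingular: for a null vector $(u,\mu,\eta)$ of $A_t$ (blocks along $x,\nu,\lambda$), the inactive rows force $\eta_{\mathcal N}=0$, the active rows give $\nabla_x g_{\mathcal S}u=0$ and the relations $t_i\eta_i=(1-t_i)\nabla_x g_i u$ on $\mathcal W$, and the equality rows give $\nabla_x h\,u=0$. Left-multiplying the stationarity block by $u^\top$ and cancelling the $\mu$, $\eta_{\mathcal S}$ cross terms yields
\begin{align*}
u^\top \nabla_x^2\mathcal{L}\,u = -\sum_{i\in\mathcal W}(\nabla_x g_i u)\,\eta_i \le 0,
\end{align*}
since each product $(\nabla_x g_i u)\eta_i$ is nonnegative for every admissible $t_i$. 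As $u$ lies in $\{y:\nabla_x h\,y=0,\ \nabla_x g_i y=0\ \forall i\in\mathcal{I}^{\mathrm{sa}}\}$, the SSOSC forces $u=0$; LICQ then forces $\mu=0$ and $\eta=0$, establishing invertibility of $A_t$.

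With invertibility secured, the nonsmooth implicit function theorem \cite{bolte2021conservative} provides a neighborhood $N$ of $\bar p$ and a locally Lipschitz, definable map $\phi:N\to\R^{n_\phi}$ with $G(\phi(p),p)=0$; since the SSOSC certifies a strict local minimizer, $\phi(p)$ indeed solves \cref{eq:NLP} near $\bar p$. The same theorem returns, as a conservative Jacobian of $\phi$, the matrices $-M_\phi^{-1}M_p$ with $(M_p,M_\phi)\in\J_G$. Choosing the branch $t_i=0$ on the weakly active constraints collapses all active complementarity rows to $-\nabla_x g_i$ and decouples the inactive multipliers, so the reduced system is exactly \cref{eq:derivative} with $A$ and $b$ as stated; its solution $V$ is unique because $A=A_0$ is invertible, giving $V\in\J_\phi(\bar p)$.

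The main obstacle I anticipate is the invertibility step, specifically the failure of strict complementarity on $\mathcal W$. The delicate points are that (i) the theorem demands invertibility of all elements of $\J_{G,\phi}$, not merely one, so the sign identity $(\nabla_x g_i u)\eta_i\ge 0$ must hold uniformly in the selection parameter $t_i$, and (ii) the subspace cut out by the active-set rows must sit inside the SSOSC subspace (defined only by the strongly active gradients) so the second-order condition can be applied. Everything else—path-differentiability of $G$, definability and Lipschitzness of the solution map, and reading off \cref{eq:derivative}—is routine once this nonsingularity is in place.
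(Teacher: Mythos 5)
Your proposal is correct, but it takes a genuinely different route from the paper's proof. The paper never reformulates the KKT system: it obtains the locally Lipschitz solution map directly from Jittorntrum's classical sensitivity theory without strict complementarity \cite[Theorem 2.3.3]{jittorntrum1978sequential}, gets definability of $\phi$ by observing that the KKT conditions \cref{eq:KKT} form a first-order formula, so the locally unique solution graph is definable \cite[Theorem 1.13]{coste1999introduction}, concludes path differentiability from \cite[Proposition 2]{bolte2021conservative}, and finally identifies the stated $V$ as an element of the Clarke Jacobian by partitioning a full-measure neighborhood of $\bar{p}$ into regions of differentiability indexed by the locally realized strongly active sets and invoking \cite[Theorem 2.2.2]{jittorntrum1978sequential} to show that the system \cref{eq:derivative} arises as a limit of classical derivatives on one of these regions. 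Your route---the $\min$-NCP reformulation of \cref{eq:KKT} plus the nonsmooth implicit function theorem, with uniform nonsingularity of every selection matrix $A_t$ verified through the sign identity $(\nabla_x g_i u)\,\eta_i \geq 0$ on the weakly active set (for $t_i\in(0,1)$ the product equals $\tfrac{1-t_i}{t_i}(\nabla_x g_i u)^2$, and the endpoints annihilate it) together with SSOSC and LICQ---is self-contained within the conservative-Jacobian calculus and in fact proves something slightly stronger: the entire selection family $\{-M_\phi^{-1}M_p\}$ constitutes a conservative Jacobian of $\phi$, of which the stated $V$ (the branch $t_i=0$) is a single element, whereas the paper exhibits only that one element via the limiting argument. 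Two small patches would be needed in a careful write-up: definability of $\phi$ does not follow from the implicit function theorem itself, which delivers Lipschitzness and path differentiability, but from the same definable-graph observation the paper uses (the locally unique root of the definable map $G$ has definable graph); and your one-line claim that $\phi(p)$ remains a local \emph{minimizer}, not merely a KKT point, for $p$ near $\bar{p}$ requires persistence of SSOSC under perturbation---strongly active constraints stay strongly active by continuity of the multipliers under LICQ, so the critical subspace does not grow and the Hessian condition persists---which Jittorntrum's theorem packages for the paper but your implicit-function argument must supply separately. Neither patch affects the validity of your approach.
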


\begin{proof}
Under the SSOSC and LICQ, we can invoke \cite[Theorem 2.3.3]{jittorntrum1978sequential} to prove the existence of a Lipschitz continuous function $\phi:N \to \R^{n_\phi}$ defined in a neighborhood $N$ of $\bar{p}$ for which $\phi(p)$ solves \cref{eq:NLP} for all $p\in N$. Since the solution map is locally unique, it must be the unique solution of the KKT conditions \cref{eq:KKT}. By \cref{ass:path_diff}, leveraging the fact that gradients of continuously differentiable definable functions are definable \cite[Proposition B.7(3)]{vandenDries1996geometric}, each equality and inequality in \cref{eq:KKT} is a first-order formula in the sense of \cite{coste1999introduction}, and the set $\{ (p,\phi(p)): \phi(p) \text{ satisfies } \ref{eq:KKT} \}$ is definable by \cite[Theorem 1.13]{coste1999introduction} for all $p\in N$. Since $\phi(\cdot)$ is definable and locally Lipschitz, it is path differentiable \cite[Proposition 2]{bolte2021conservative}.

Next, if strict complementarity holds at $\bar{p}$ (that is, if $\lambda_i>0$ for all $i\in \mathcal{I}(s)$), then the result follows from \cite[Corollary 2.3.1]{jittorntrum1978sequential} by removing the rows of $\nabla_x g(\cdot,\bar{p})$ associated to inactive constraints and diving both sides by $\lambda_i$.

Suppose strict complementarity does not hold. In this case the set of active constraints of \cref{eq:NLP} may change for local variations of $p$. Let $\mathcal{I}^\text{wa}(\bar{x}):=\mathcal{I}(\bar{x})\cap\{i:\bar{\lambda}_i=0\}$ denote the set of weakly active constraints at $\bar{x}$. By \cite[Lemma 2.2.2]{jittorntrum1978sequential} all strongly active constraints $\mathcal{I}(\bar{x}) \setminus \mathcal{I}^\text{wa}(x)$ remain active for all values of $p\in N$. Therefore, locally, only constraints that belong to $\mathcal{I}^\text{wa}(\bar{x})$ can change (they can become strongly active, become inactive, or remain weakly active).

Let $\{ N_i \}_{i=1}^m$ be a partition of the full measure subset of $N$ where $\phi$ is differentiable, where each $N_i$ is associated to a different set of strongly active constraints $\mathcal{R}_{N_i}$ with $\mathcal{I}(x) \setminus \mathcal{I}^\text{wa}(x) \subseteq \mathcal{R}_{N_i} \subseteq \mathcal{I}(x)$. The Clarke Jacobian of $\phi$ at $\bar{p}$ is
\begin{align*}
\mathcal{J}^c_\phi(\bar{p}) = \operatorname*{conv}\{ \bigcup_{i=1,\dots,m,} \lim_{\substack{p \to \bar{p}\\p\in N_i}} \nabla \phi(p)  \}.
\end{align*}
For each $i$, the limit $\lim_{p \to \bar{p},~p\in N_i} \nabla \phi(p)$ satisfies \cref{eq:derivative} with $\nabla_x g_{\mathcal{I}}(\bar{x},\bar{p})$ replaced with $\nabla_x g_{\mathcal{R}_{N_i}}(\bar{x},\bar{p})$. By \cite[Theorem 2.2.2]{jittorntrum1978sequential}, $\mathcal{I}^\text{sa}=\mathcal{R}_{N_i}$ for some $i$, and the derivative obtained by solving $A(\bar{\phi},\bar{p})V=b(\bar{\phi},\bar{p})$ belongs to $\mathcal{J}_\phi^c(\bar{p})$. The result follows from $\mathcal{J}^c_\phi(\bar{p}) \subseteq \mathcal{J}_\phi(\bar{p})$ by \cite[Corollary 1]{bolte2021conservative}.
\end{proof}

\ifBio
\vskip -2\baselineskip plus -1fil

\begin{IEEEbiography}[{\vspace*{-0.5cm}\includegraphics[height=1in,clip]{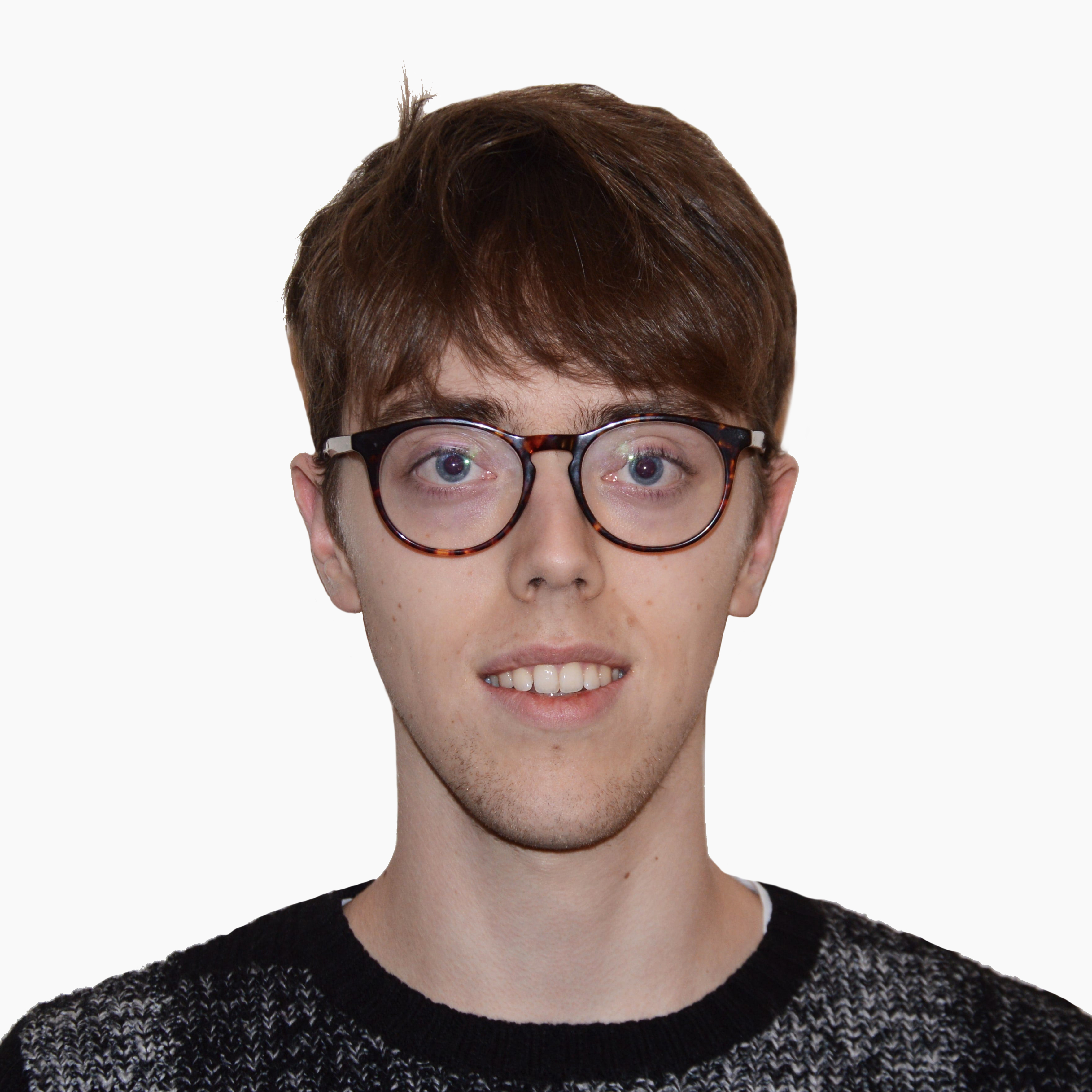}}]{Riccardo Zuliani} received the B.S. degree in mechatronics from the University of Padua (Italy), and the M.S. degree in robotics, systems, and control from ETH Zürich. He is currently pursuing his PhD at the Automatic Control Laboratory (IfA), ETH Zürich. His research interests include MPC, differentiable optimization, and additive manufacturing.
\end{IEEEbiography}

\vskip -3\baselineskip plus -1fil

\begin{IEEEbiography}[{\includegraphics[width=1in,height=1.25in,clip,keepaspectratio]{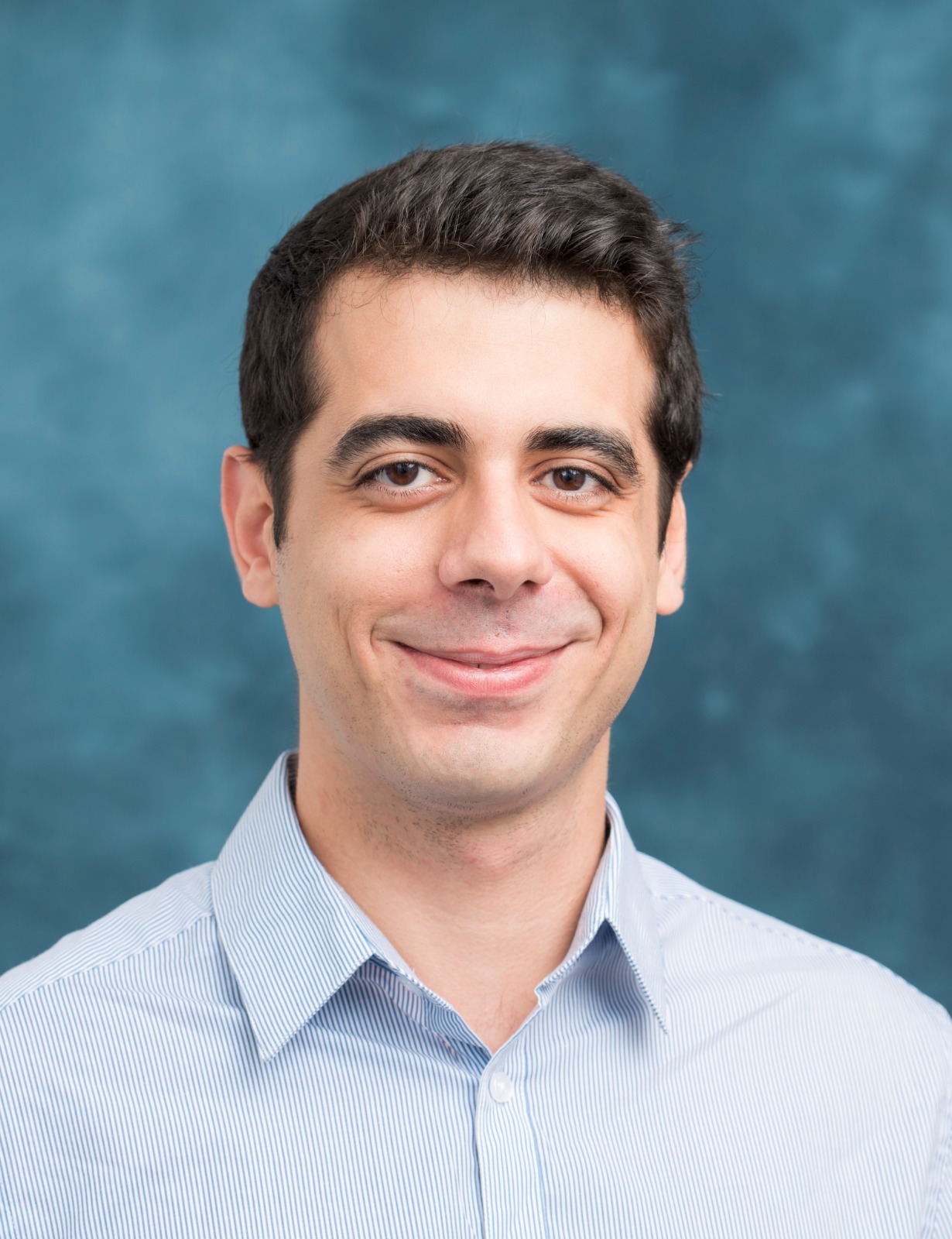}}]{Efe C. Balta} received the B.S.
degree in manufacturing engineering from the Faculty of Mechanical Engineering, Istanbul Technical University, in 2016, and the M.S. and Ph.D. degrees in mechanical engineering from the University of Michigan, Ann Arbor, MI, USA, in 2018 and 2021, respectively. He was a Post-Doctoral Researcher with the Automatic Control Laboratory (IfA), ETH Zürich between 2021 and 2023. Since September 2023, he has been leading the Control and Automation research group at inspire AG. His research interests include control theory, optimization, statistical learning, robotics, cyber-physical systems, and additive manufacturing.
\end{IEEEbiography}

\vskip -2\baselineskip plus -1fil

\begin{IEEEbiography}[{\includegraphics[width=1.05in,clip]{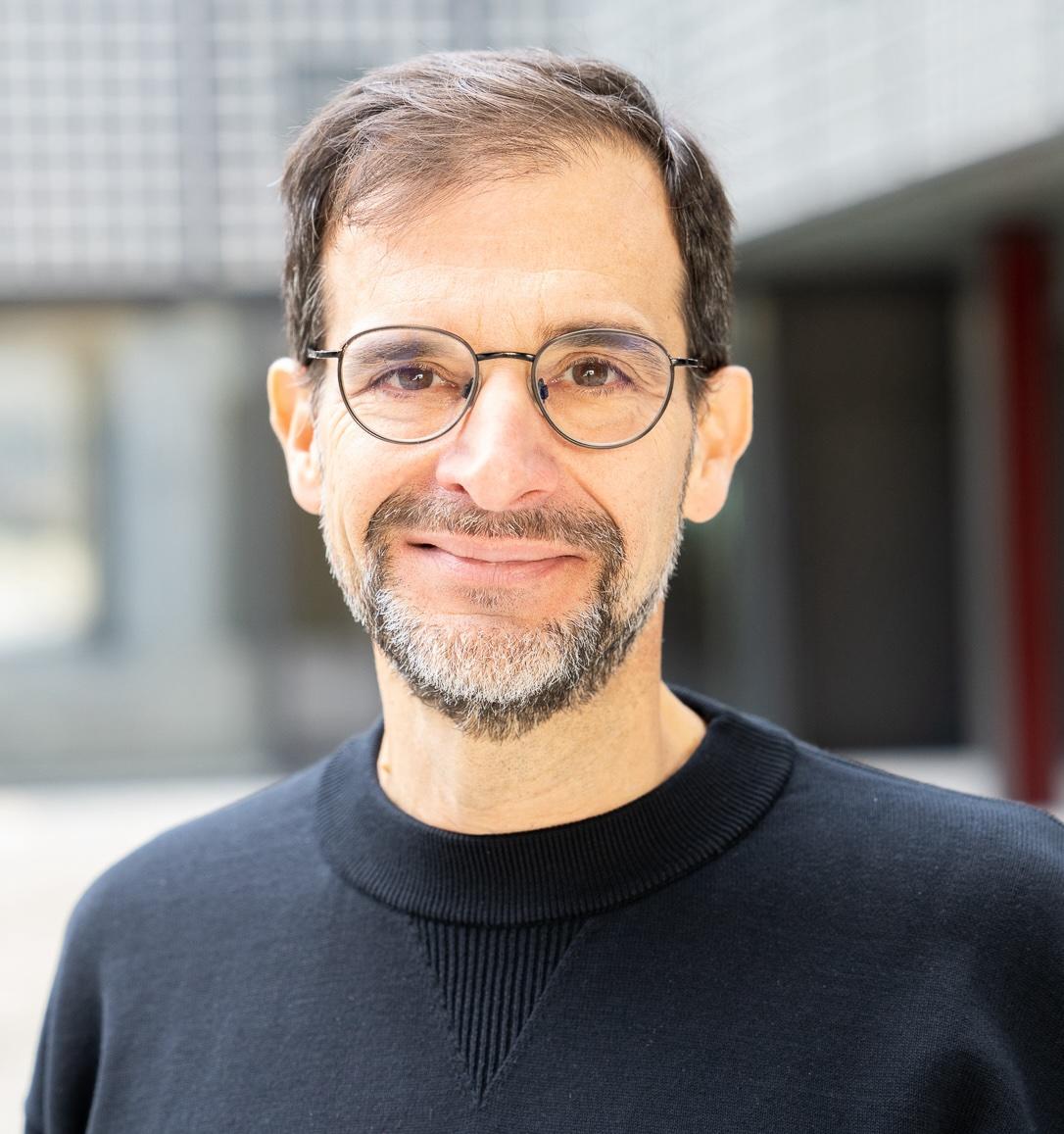}}]{John Lygeros} completed a B.Eng. degree in electrical engineering in 1990 and an M.Sc. degree in Systems Control in 1991, both at Imperial College of Science Technology and Medicine, London, U.K.. In 1996 he obtained a Ph.D. degree from the Electrical Engineering and Computer Science Department, University of California, Berkeley. During the period 1996-2000 he held research appointments at the National Automated Highway Systems Consortium, Berkeley, the Laboratory for Computer Science, M.I.T., and the Electrical Engineering and Computer Science Department at U.C. Berkeley. Between 2000 and 2003 he was a University Lecturer at the Department of Engineering, University of Cambridge, U.K., and a Fellow of Churchill College. Between 2003 and 2006 he was an Assistant Professor at the department of Electrical and Computer Engineering, University of Patras, Greece. In July 2006 he joined the Automatic Control Laboratory at ETH Zurich, where he is currently serving as the Head of the laboratory. His research interests include modelling, analysis, and control of hierarchical, hybrid, and stochastic systems, with applications to biochemical networks, transportation systems, energy systems, and industrial processes. John Lygeros is a Fellow of the IEEE, and a member of the IET and the Technical Chamber of Greece; between 2013 and 2023 he served as the Vice President for Finances and a Council Member of the international Federation of Automatic Control (IFAC), as well as on the Board of the IFAC Foundation.
\end{IEEEbiography}
\fi

\end{document}